\newtheorem{theorem}{Theorem}
\newtheorem{lemma}{Lemma}
\newtheorem{corollary}{Corollary}
\newtheorem{definition}{Definition}
\newtheorem{problem}{Problem}
\newtheorem{remark}{Remark}
\newtheorem{claim}{Claim}
\newcommand{\size}[1]{\ensuremath{|#1|}}
\newcommand{\ceil}[1]{\ensuremath{\lceil#1\rceil}}
\newcommand{\lra}[1]{\ensuremath{(#1)}}
\newcommand{\lrc}[1]{\ensuremath{\{#1\}}}
\newcommand{\lrA}[1]{\ensuremath{\left(#1\right)}}
\newcommand{\lrB}[1]{\ensuremath{\left[#1\right]}}
\newcommand{\lrC}[1]{\ensuremath{\left\{#1\right\}}}
\def\OPT{\mbox{OPT}}
\def\T{\mathcal{T}}
\def\S{\mathcal{S}}
\def\RR{\mathbb{R}}
\def\OPT{\mbox{OPT}}
\newcommand{\PP}[1]{\ensuremath{\mbox{Pr}[#1]}}
\newcommand{\EE}[1]{\ensuremath{\mathbb{E}[#1]}}
\newcommand{\EEE}[1]{\ensuremath{\mathbb{E}\lrB{#1}}}
\title{Approximation Algorithms for the Cumulative Vehicle Routing Problem with Stochastic Demands}
\author
{
Jingyang Zhao\\
University of Electronic Science and Technology of China\\
\texttt{jingyangzhao1020@gmail.com}
\and
Mingyu Xiao\\
University of Electronic Science and Technology of China\\
\texttt{myxiao@uestc.edu.cn}
}
\date{}
\begin{document}

\maketitle

\begin{abstract}
In the Cumulative Vehicle Routing Problem (Cu-VRP), we need to find a feasible itinerary for a capacitated vehicle located at the depot to satisfy customers' demand, as in the well-known Vehicle Routing Problem (VRP), but the goal is to minimize the cumulative cost of the vehicle, which is based on the vehicle's load throughout the itinerary. If the demand of each customer is unknown until the vehicle visits it, the problem is called Cu-VRP with Stochastic Demands (Cu-VRPSD). Assume that the approximation ratio of metric TSP is $1.5$. In this paper, we propose a randomized $3.456$-approximation algorithm for Cu-VRPSD, improving the best-known approximation ratio of $6$ (Discret. Appl. Math. 2020). Since VRP with Stochastic Demands (VRPSD) is a special case of Cu-VRPSD, as a corollary, we also obtain a randomized $3.25$-approximation algorithm for VRPSD, improving the best-known approximation ratio of $3.5$ (Oper. Res. 2012). For Cu-VRP, we give a randomized $3.194$-approximation algorithm, improving the best-known approximation ratio of $4$ (Oper. Res. Lett. 2013). Moreover, if each customer is allowed to be satisfied by using multiple tours, we obtain further improvements for Cu-VRPSD and Cu-VRP.
\end{abstract}

\maketitle

\section{Introduction}
In the well-known Vehicle Routing Problem (VRP)~\citep{dantzig1959truck}, we are given an undirected complete graph $G=(V, E)$ with $V=\{v_0,v_1,\dots,v_n\}$, where $v_0$ denotes the depot, and the other $n$ vertices denote $n$ customers. 
Moreover, there is a metric weight function $w$ on the edges representing the length of edges, which satisfies the triangle inequality, and a demand vector $d=(d_1,...,d_n)$ implying that each customer $v_i$ has a demand of $d_i$.
The objective is to determine an \emph{itinerary} for a vehicle with a capacity of $Q$, starting from and ending at the depot, that fulfills every customer's demand while minimizing the total weight of the edges in the itinerary.

In the Cumulative Vehicle Routing Problem (Cu-VRP)~\citep{kara2007energy,kara2008cumulative}, the goal is also to find an itinerary for the vehicle, but with the objective of minimizing the \emph{cumulative cost} of the itinerary. Here, the cumulative cost for the vehicle traveling from $u$ to $v$ carrying a load of $x_{uv}\leq Q$ units of goods is defined as $a\cdot w(u,v)+b\cdot x_{uv}\cdot w(u,v)$, where $a,b\in\mathbb{R}_{\geq 0}$ are given parameters.
Specifically, $a$ represents the cost of moving an empty vehicle per unit distance, and $b$ represents the cost of transporting one unit of goods per unit distance. Moreover, when $a=1$ and $b=0$, Cu-VRP reduces to VRP.

For example, consider an itinerary $T=v_0v_1v_2v_0$, where we let $e_i=v_{i-1}v_{i\bmod 3}$ and $w(e_i)=1$ for each $i\in\{1,2,3\}$. 
Assume that the vehicle picks up $8$ units of goods at the depot and delivers $2$ units each to $v_1$ and $v_2$ along the itinerary, respectively. 
We can obtain $x_{e_1}=8$, $x_{e_2}=6$, $x_{e_3}=4$, and thus the cumulative cost of $T$ is $a\cdot 3 + b\cdot (8+6+4)=3a+18b$. Note that the item $b\cdot x_{e_3}\cdot w(e_3)$, where $e_3$ is the last edge of $T$, represents the additional carrying cost incurred by the vehicle for goods that are picked up but not delivered during the itinerary. 

Cu-VRP captures the fuel consumption in transportation and logistics, as fuel consumption depends on both the weight of the empty vehicle and the weight of the goods being carried by the vehicle~\citep{newman1989estimating,gaur2013routing}. 
In fact, it is a simplified model for fuel consumption with a linear objective function for VRP~\citep{xiao2012development}.
Since fuel consumption can account for as much as 60\% of a vehicle's operational costs~\citep{sahin2009approach}, Cu-VRP has been studied extensively through experimental algorithms~\citep{gaur2017heuristic,wang2016novel,fernandez2023cumulative, fukasawa2016branch,mulati2022arc}. 
Specifically, \citet{gaur2017heuristic} proposed a heuristic algorithm using the column generation method.
\citet{wang2016novel} generalized Cu-VRP into the model with multi-depots. Cu-VRP with time windows was investigated in~\citep{fernandez2023cumulative}, and Cu-VRP with a limitation on the number of used tours was studied in \citep{fukasawa2016branch,mulati2022arc}.
A recent survey of Cu-VRP can be found in~\citep{corona2022vehicle}.

In VRP with Stochastic Demands (VRPSD)~\citep{bertsimas1992vehicle}, the demand of each customer is represented by an independent random variable with a known distribution, and its value is unknown until the vehicle visits the customer. 
The goal is to design a \emph{policy} such that the expected weight of the itinerary is minimized.
This model has natural applications in logistic where the demands of customers are unknown before routing the vehicle, and early surveys on this topic can be found in~\citep{gendreau1996stochastic,bertsimas1996new}.
Cu-VRP with Stochastic Demands (Cu-VRPSD) was proposed in \citep{DBLP:conf/caldam/GaurMS16}, and similarly, the goal is to design a policy such that the expected cumulative cost of the itinerary is minimized. Moreover, Cu-VRPSD reduces to VRPSD when $a=1$ and $b=0$.

We observe that in VRPSD, the vehicle can always be fully loaded before departing from the depot. However, in Cu-VRPSD, due to the fact that both higher and lower loads may result in higher cumulative costs, we need to carefully consider how the vehicle is loaded. This property makes Cu-VRPSD both more challenging and more interesting compared to VRPSD.

In each of the above problems, the \emph{splittable} (resp., \emph{unsplittable}) variant requires that the demand of each customer can be satisfied partially within the vehicle's visits (resp., must be satisfied entirely in one of the vehicle's visits).
Usually, the unsplittable variants are more difficult. For example, unsplittable VRP generalizes the bin packing problem even on a line shape graph~\citep{wu2022capacitated}, and thus cannot be approximated with an approximation ratio of less than 1.5 unless P=NP. In contrast, while splittable VRP is known to be APX-hard~\citep{AsanoKTT97+}, the best-known inapproximability bound is significantly smaller than 1.5.

\subsection{Related Works}
We mainly focus on the design of \emph{approximation algorithms}, which can generate solutions with a theoretical guarantee in polynomial time. 
For Cu-VRP, an algorithm is called a $\rho$-approximation algorithm if it can output a solution with a cumulative cost of at most $\rho\cdot\OPT$ in polynomial time, where $\OPT$ is the cumulative cost of the optimal solution. 
For Cu-VRPSD, an algorithm is called a $\rho$-approximation algorithm if it can employ a policy to obtain a solution with an expected cumulative cost of at most $\rho\cdot\OPT$ in polynomial time, where $\OPT$ is the cumulative cost of the minimum expected cumulative cost solution obtained by the optimal policy.

Let $\alpha$ denote the approximation ratio of the metric Traveling Salesman Problem (TSP). It is well-known that $\alpha\leq1.5$~\citep{christofides1976worst,serdyukov1978some}, which was slightly improved to $\alpha\leq1.5-10^{-36}$~\citep{KarlinKG21,DBLP:conf/ipco/KarlinKG23}. 
The approximation ratio $\alpha$ for TSP will be frequently used in VRP related problems.

\textbf{VRP(SD).}
For VRP, there is an $(\alpha+1)$-approximation algorithm for the splittable case~\citep{HaimovichK85}, and an $(\alpha+2)$-approximation algorithm for the unsplittable case~\citep{altinkemer1987heuristics}.
Recently, \citet{blauth2022improving} improved the approximation ratio to $\alpha+1-\varepsilon$ for the splittable case, and to $\alpha+2-2\varepsilon$ for the unsplittable case, 
and \citet{uncvrp} further improved the approximation ratio to $\alpha+1+\ln2-\varepsilon'$ for the unsplittable case using the LP rounding method, where $\varepsilon$ and $\varepsilon'$ are small positive constants related to $\alpha$. Notably, \citet{uncvrp} also gave a combinatorial $(\alpha+1.75-\varepsilon')$-approximation algorithm for the unsplittable case.
For VRPSD, \citet{bertsimas1992vehicle} proposed a randomized $(\alpha+1+o(1))$-approximation algorithm for the splittable case with identical demand distributions, and a randomized $(\alpha+Q)$-approximation algorithm for the general case. 
Later, \citet{DBLP:journals/ior/GuptaNR12} obtained a randomized $(\alpha+1)$-approximation algorithm for the splittable case, and a randomized $(\alpha+2)$-approximation algorithm for the unsplittable case. 

\textbf{Cu-VRP(SD).}
For Cu-VRP, \citet{gaur2013routing} proposed a $(1+\frac{2\alpha}{\sqrt{\alpha^2+4\alpha}-\alpha})$-approximation algorithm for the splittable case, and a $(1+\frac{2\alpha}{\sqrt{\alpha^2+6\alpha+1}-(\alpha+1)})$-approximation algorithm for the unsplittable case.
For Cu-VRPSD, \citet{DBLP:conf/caldam/GaurMS16} proposed a $2(1+\alpha)$-approximation algorithm for the splittable case, and a $7$-approximation algorithm for the unsplittable case, and later, they~\citep{gaur2020improved} further improved these approximation ratios to $\max\{1+\frac{3}{2}\alpha,3\}$ and $\max\{2+\frac{3}{2}\alpha,6\}$, respectively.

\subsection{Our Results}
In this paper, we design improved approximation algorithms for Cu-VRPSD, VRPSD, and Cu-VRP. A summary of our results under $\alpha=1.5$ can be found in Table~\ref{res}.

\begin{table}[ht]
\centering
\small
\begin{tabular}{c|c|c|c}
\hline
\multirow{2}{*}{\textbf{Problem}} & \multicolumn{2}{c|}{\textbf{Approximation Ratio}} & \multirow{2}{*}{\textbf{Reference}}\\
\cline{2-3}
&\textbf{Unsplittable}& \textbf{Splittable}&\\
\hline
\multirow{4}{*}{Cu-VRPSD} & $7$ & $5$ & \citet{DBLP:conf/caldam/GaurMS16}\\
& $6$ & $3.25$ & \citet{gaur2020improved}\\
\cline{2-4}
& $\alpha+2=\bf{3.5}$ (Thm.~\ref{unttt0}) & \multirow{2}{*}{$\alpha+1=\bf{2.5}$ (Thm.~\ref{spttt0})}   & \multirow{2}{*}{\textbf{This Paper}}\\
& $\bf{3.456}$ (Thm.~\ref{th23}) & & \\
\hline
\multirow{2}{*}{VRPSD} & $3.5$ & $2.5$ & \citet{DBLP:journals/ior/GuptaNR12}\\
\cline{2-4}
 & $\alpha+1.75=\bf{3.25}$ (Cor.~\ref{coro1}) & - & \textbf{This Paper}\\
\hline
\multirow{2}{*}{Cu-VRP} & $4$ & $3.186$ & \citet{gaur2013routing}\\
\cline{2-4}
& $\alpha+1+\ln2+\varepsilon<\bf{3.194}$ (Thm.~\ref{th7}) & $\alpha+1=\bf{2.5}$ (Cor.~\ref{spttt00}) & \textbf{This Paper}\\
\hline

\end{tabular}
\caption{A summary of the previous approximation ratios and our approximation ratios under $\alpha=1.5$.}
\label{res}
\end{table}

We mainly focus on the unsplittable variants, as we will show that by extending the methods used in the unsplittable variants, we can easily obtain the results for the splittable variants. 

\textbf{Unsplittable Cases.}
The main idea of the most recent algorithms~\citep{DBLP:journals/ior/GuptaNR12, gaur2020improved} is as follows. First, we find an \(\alpha\)-approximate TSP tour and then the vehicle satisfies customers in the order they appear on the TSP tour. Once the load is less than the serving customer's demand, the vehicle goes back to the depot to reload.
Our algorithms will also use an \(\alpha\)-approximate TSP tour and visit the customers in the order according to the TSP tour. However, we do not strictly satisfy the customers in the order. To reduce the cumulative cost, our vehicle may skip customers with large demands when visiting customers according to the TSP tour (but record their demands) and satisfy them after completing the TSP tour.

Based on the above idea, we propose two novel algorithms for unsplittable Cu-VRPSD, denoted as $ALG.1(\lambda, \delta)$ and $ALG.2(\lambda, \delta)$. 
\begin{itemize}
    \item In $ALG.1(\lambda, \delta)$, the vehicle will skip customers in $\{v_i\mid d_i>\lambda\cdot Q\}$ and then satisfy each of them by using a single tour; 
    \item In $ALG.2(\lambda, \delta)$, the vehicle will skip customers in $\{v_i\mid d_i>\delta\cdot Q\}$ and then satisfy them either by using a single tour for each or by calling an algorithm for \emph{weighted set cover}.
\end{itemize}

Furthermore, in our algorithms, we set upper and lower bounds of the load of the vehicle when traveling along the TSP tour: the load is at least $\delta\cdot Q$ and less than $\lambda\cdot Q$ for some parameters $\delta$ and $\lambda$. The lower bound can be regarded as the \emph{backup} goods that the vehicle carries. 
The idea of carrying some backup goods was inspired by a partition algorithm for the TSP tour used for unsplittable VRP~\citep{uncvrp}.
We will show that this approach can reduce the potential cumulative cost caused by visiting customers with demands at most $\delta\cdot Q$ at the expense of increasing the cumulative cost of the vehicle when traveling the TSP tour. So, we need to balance the setting of $\delta$, e.g., we may set $\delta=0$ when $a/b$ is small. 

We will first prove that even setting $\delta=0$, $ALG.1(\lambda, \delta)$ can be used to obtain a randomized $(\alpha+2)$-approximation algorithm for unsplittable Cu-VRPSD, which matches the randomized $(\alpha+2)$-approximation algorithm for unsplittable VRPSD~\citep{DBLP:journals/ior/GuptaNR12}. 

Then, by carefully optimizing the setting of $\delta$, we will prove that $ALG.1(\lambda, \delta)$ can be used to obtain a randomized algorithm with an expected approximation ratio of $10/3$ for $a/b\leq 0.375$ and $3.456$ for $0.375<a/b\leq 1.444$. 
Moreover, by using both $ALG.1(\lambda, \delta)$ and $ALG.2(\lambda, \delta)$, we can obtain a randomized $3.456$-approximation algorithm for $a/b>1.444$. 
Hence, we obtain a randomized $3.456$-approximation algorithm for unsplittable Cu-VRPSD.

Note that Cu-VRPSD reduces to VRPSD when $b=0$, and this corresponds to the case where $a/b=\infty$. As a corollary, for unsplittable VRPSD, we obtain a randomized $(\alpha+1.75)$-approximation algorithm using the randomized $3.456$-approximation algorithm for unsplittable Cu-VRPSD with $a/b>1.444$.

For unsplittable Cu-VRP, we also give two algorithms, denoted as $ALG.3(\lambda, \delta)$ and $ALG.4(\lambda)$. Since the demands of customers are known in advance, in $ALG.3(\lambda, \delta)$, we first obtain a set of tours by applying the randomized rounding method to the LP of weighted set cover, and then satisfy the remaining customers by calling $ALG.1(\lambda, \delta)$; in $ALG.4(\lambda)$, we directly call $ALG.1(\lambda, 0)$. In the tours obtained by calling $ALG.1(\lambda, \delta)$ and $ALG.1(\lambda, 0)$, the load of the vehicle may be greater than the delivered units of goods. So, we also adapt a pre-optimization step to ensure that the load of the vehicle equals the delivered units of goods, which may reduce some cumulative cost. 

We will show that for any constant $\varepsilon>0$, $ALG.3(\lambda, \delta)$ can be used to obtain a randomized $(\alpha+1+\ln2+\varepsilon)$-approximation algorithm with a running time of $n^{O(\frac{1}{\min\{a/b,1\}})}$, and thus it only works for $a/b>\gamma_0$, where $\gamma_0>0$ is any fixed constant. Moreover, $ALG.4(\lambda, \delta)$ can be used to obtain a randomized $(\alpha+1+\ln2-0.029)$-approximation algorithm for $a/b<0.285$. Hence, by combining these two algorithm, we obtain a randomized $(\alpha+1+\ln2+\varepsilon<3.194)$-approximation algorithm for unsplittable Cu-VRP.



\textbf{Splittable Cases.} By slightly modifying $ALG.1(\lambda, \delta)$ with $\delta=0$ into $ALG.S(\lambda)$, we will prove that $ALG.S(\lambda)$ can be used to obtain a randomized $(\alpha+1)$-approximation algorithm for splittable Cu-VRPSD, which matches the randomized $(\alpha+1)$-approximation algorithm for splittable VRPSD~\citep{DBLP:journals/ior/GuptaNR12}. 
Since Cu-VRP is a special case of Cu-VRPSD, as a corollary, we obtain a randomized $(\alpha+1)$-approximation algorithm for splittable Cu-VRP, which almost matches the $(\alpha+1-\varepsilon)$-approximation algorithm for splittable VRP~\citep{blauth2022improving}.

Although our algorithms are simple and neat, the analysis is technically involved. Some parts also need careful calculation.
To avoid distraction from our main discussions, the proofs of lemmas and theorems marked with `*' are omitted and they can be found in Appendix~\ref{omitted}. 



\section{Notations}
In Cu-VRP, we use $G=(V, E)$ to denote the input complete graph, where $V=\{v_0,\dots,v_n\}$.
There is a non-negative weight function $w: E\to \mathbb{R}_{\geq0}$ on the edges, where $w(u,v)$ denotes the length of edge $uv\in E$. We assume that $w$ is a \emph{metric}, i.e., it is symmetric and satisfies the triangle inequality.
Let $V'\coloneqq V\setminus\{v_0\}$. 
There is also a demand vector $d=(d_1,...,d_n)\in\mathbb{R}^{V'}_{[0,Q]}$, where $Q\in\mathbb{R}_{>0}$ is the capacity of the vehicle, and each customer $v_i$ has a required demand $d_i\in[0,Q]$.
We let $l_i\coloneqq w(v_0,v_i)$ and $[i]\coloneqq\{1,2,...,i\}$.

In Cu-VRPSD, the demand of each customer $v_i$ is represented by an independent random variable $\chi_i\in[0,Q]$, where the distribution of $\chi_i$ is usually assumed to be known in advance~\citep{bertsimas1992vehicle,DBLP:journals/ior/GuptaNR12}. Let $\chi=(\chi_1,...,\chi_n)$, where we assume that $\chi_i$ is not identically zero, as $v_i$ can be ignored in such a case. Consequently, any feasible policy must visit every customer at least once~\citep{DBLP:journals/ior/GuptaNR12}.


For any random variable $L$, we use $L\sim U[l, r)$ to indicate that $L$ is uniformly distributed over the interval $[l,r)$, where $l<r$.


A \emph{tour} $T=v_0v_{1}\dots v_{i}v_{0}$ is a directed simple cycle, which always contains the depot \(v_0\).
We use $E(T)$ to denote the (multi-)set of edges on $T$, and $V'(T)$ to denote the set of customers on $T$. 
Assume that the vehicle carries a load of $x_{eT}$ units of goods when traveling along $e\in E(T)$. 
The cumulative cost of $T$ is 
\[
Cu(T)\coloneqq a\cdot \sum_{e\in E(T)}w(e)+b\cdot \sum_{e\in E(T)}x_{eT}\cdot w(e),
\]
where $w(T)\coloneqq \sum_{e\in E(T)}w(e)$ is called the \emph{weight} of $T$, $Cu_1(T)\coloneqq a\cdot \sum_{e\in E(T)}w(e)$ is called the \emph{vehicle cost} of $T$, and $Cu_2(T)\coloneqq b\cdot \sum_{e\in E(T)}x_{eT}\cdot w(e)$ is called the \emph{cargo cost} of $T$. An itinerary $\T$ is a set of tours.
A \emph{TSP tour} is an undirected cycle that includes all customers and the depot exactly once.
The weight of a minimum weight TSP tour is denoted by $\tau$. 

\subsection{Problem Definitions}
\begin{definition}[\textbf{Cu-VRPSD}]
Given a complete graph $G=(V, E)$, a metric weight function $w$, a vehicle capacity $Q\in\mathbb{R}_{>0}$, a random demand variable vector $\chi=(\chi_1,...,\chi_n)$, and two parameters $a,b\in\RR_{\geq 0}$, we need to design a policy to find a feasible itinerary $\T$ such that
\begin{itemize}
\item the vehicle carries at most $Q$ units of goods on each tour $T\in \T$,
\item the vehicle delivers goods to customers only in $V'(T)$ on each tour $T\in \T$,
\item the sum of the delivered demand over all tours for each $v_i\in V'$ equals the demand of $v_i$, 
\end{itemize}
and $\EE{Cu(\T)}$ is minimized.
\end{definition}
Note that the demand of each customer is unknown until the vehicle visits it. 

As mentioned, in \emph{splittable} Cu-VRPSD, each customer is allowed to be satisfied by using multiple tours, and in \emph{unsplittable} Cu-VRPSD, each customer must be satisfied by using only one tour: each customer may be included in multiple tours, but its demand must be satisfied entirely within exactly one of those tours. Clearly, unsplittable Cu-VRPSD admits a feasible solution only if it holds $d_i\leq Q$ for any $i\in [n]$.

In Cu-VRP, we have $\chi=d$, where $d$ is known in advance.
Moreover, by scaling each customer’s demand $\chi_i$ to $\chi_i/Q$ and adjusting the parameter $b$ to $b\cdot Q$, without loss of generality, we assume that $Q=1$.

\subsection{The Lower Bounds}
To analyze approximation algorithms, we recall the following lower bound for Cu-VRPSD.
\begin{lemma}[\citet{gaur2020improved}]\label{lb-}
For Cu-VRPSD, it holds that $\EE{Cu(\T^*)}\geq a\cdot\max\lrc{\tau,\sum_{i\in[n]}2\cdot\EE{\chi_i}\cdot l_i}+b\cdot\sum_{i\in[n]}\EE{\chi_i}\cdot l_i$, where $\T^*$ is any optimal itinerary.
\end{lemma}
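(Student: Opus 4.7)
The plan is to decompose the expected cumulative cost of any optimal policy $\T^*$ as
\[
\EE{Cu(\T^*)} = a\cdot\EE{w(\T^*)} + b\cdot\EE{\sum_{e\in E(\T^*)}x_{eT^*}\cdot w(e)}
\]
and to bound the vehicle-cost term $\EE{w(\T^*)}$ and the cargo-cost term separately. I will prove each bound \emph{pointwise} in the realization of $\chi$ and then take expectations; this pointwise strategy is important because the policy may route adaptively, so $\T^*$ is itself a random itinerary rather than a fixed one.

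For the vehicle cost I would establish two lower bounds on $w(\T^*)$. First, concatenating the tours of $\T^*$ produces a closed walk that visits every customer (the policy is feasible), so repeated shortcutting under the triangle inequality yields a TSP tour of weight at most $w(\T^*)$, hence $w(\T^*)\geq \tau$. Second, I would apply a Haimovich--Kan style argument tour by tour: for each realized tour $T\in\T^*$ with customers $V'(T)=\{v_{i_1},\ldots,v_{i_k}\}$ and delivery amounts $d^T_{i_j}$ that sum to at most $Q=1$, the vehicle must reach its farthest customer and return, so
\[
w(T)\;\geq\;2\max_{j}l_{i_j}\;\geq\;2\cdot\frac{\sum_j d^T_{i_j}\cdot l_{i_j}}{\sum_j d^T_{i_j}}\;\geq\;2\sum_j d^T_{i_j}\cdot l_{i_j},
\]
where the middle step is the max-versus-weighted-average inequality and the last step uses $\sum_j d^T_{i_j}\leq 1$. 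Summing over $T\in\T^*$ and using that total delivery to $v_i$ equals $\chi_i$ gives $w(\T^*)\geq 2\sum_i \chi_i\cdot l_i$ pointwise. Taking expectations of the two bounds yields $\EE{w(\T^*)}\geq \max\{\tau,\,\sum_i 2\EE{\chi_i}\cdot l_i\}$.

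For the cargo cost I would use a particle-flow argument: view every delivered unit of goods as a particle that is loaded onto the vehicle at the depot and unloaded at its destination $v_i$, and observe that the portion of the vehicle's route on which this particle rides has length at least $l_i=w(v_0,v_i)$ by the triangle inequality. Since $\sum_{e\in E(\T^*)}x_{eT^*}\cdot w(e)$ equals the total particle-distance summed across all delivered units, we get
\[
\sum_{e\in E(\T^*)}x_{eT^*}\cdot w(e)\;\geq\;\sum_{i\in[n]}\chi_i\cdot l_i
\]
pointwise. Taking expectations and combining with the vehicle-cost bounds by linearity of expectation yields the stated inequality. The only delicate point is that the Haimovich--Kan bound must be applied to each realization of $\T^*$ separately rather than to an ``average'' itinerary; once that is handled, the rest is a direct consequence of the triangle inequality together with linearity of expectation.
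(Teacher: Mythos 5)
Your proof is correct. Note that the paper itself does not prove this lemma --- it is quoted from \citet{gaur2020improved} --- so there is no in-paper argument to compare against line by line; but your argument is the standard one and all steps check out: the decomposition into vehicle cost and cargo cost matches the paper's definitions of $Cu_1$ and $Cu_2$, the shortcutting bound $w(\T^*)\geq\tau$ and the Haimovich--Kan bound $w(T)\geq 2\max_j l_{i_j}\geq 2\sum_j d^T_{i_j}l_{i_j}$ are both valid pointwise (the degenerate case $\sum_j d^T_{i_j}=0$ is harmless since the right-hand side is then $0$), and the particle-flow bound for the cargo term is a clean application of the triangle inequality. Your insistence on proving everything pointwise in the realization before taking expectations is exactly the right way to handle an adaptive policy, and it buys you more than the stated lemma: conditioned on $\chi=d$ your bounds give $Cu(\T^*)\geq a\cdot\max\lrc{\tau,\eta}+b\cdot 0.5\cdot\eta$ with $\eta=\sum_i 2d_i l_i$, which is precisely the stronger Lemma~\ref{lb} that the paper says was only implicit in the original proof and which it actually relies on; Lemma~\ref{lb-} then follows by $\EE{\max\{X,Y\}}\geq\max\{\EE{X},\EE{Y}\}$, as the paper notes. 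The one cosmetic slip is the middle inequality $2\max_j l_{i_j}\geq 2\sum_j d^T_{i_j}l_{i_j}/\sum_j d^T_{i_j}$: as written the division should be flagged as requiring $\sum_j d^T_{i_j}>0$, but this does not affect the conclusion.
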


When $a=1$ and $b=0$, the lower bound in Lemma~\ref{lb-} becomes $\max\lrc{\tau,\sum_{i\in[n]}2\cdot\EE{\chi_i}\cdot l_i}$, and it was used in analyzing approximation algorithms for VRPSD in~\citep{DBLP:journals/ior/GuptaNR12}.
To analyze our algorithms, we use a stronger lower bound that was implicitly used in the proof of Lemma~\ref{lb-}.
\begin{lemma}[\citet{gaur2020improved}]\label{lb}
For unsplittable Cu-VRPSD with any demand realization vector $d\in\mathbb{R}^{V'}_{[0,1]}$, it holds that
$\EE{Cu(\T^*)\mid \chi=d }\geq LB\coloneqq a\cdot\max\{\tau,\ \eta\}+b\cdot0.5\cdot\eta$, where $\eta\coloneqq\sum_{i\in[n]}2\cdot d_i\cdot l_i$.   
\end{lemma}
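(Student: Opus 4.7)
I will fix an arbitrary demand realization $d$ and argue pointwise: any feasible itinerary $\T$ for the unsplittable instance with demand vector $d$ satisfies $Cu(\T)\ge LB$. Since the policy producing $\T^*$ must yield a feasible itinerary on every realization (customers are visited in order to learn demands, and once learned each must be served entirely in one tour), this pointwise bound immediately gives $\EE{Cu(\T^*)\mid\chi=d}\ge LB$. Write $\T=\{T_1,\ldots,T_k\}$ with $V'(T_j)$ the customers served on $T_j$, so $\sum_{i\in V'(T_j)}d_i\le Q=1$ for each $j$. Split $Cu(\T)=a\cdot w(\T)+b\cdot \sum_{e\in E(\T)}x_{e\T}w(e)$; the plan is to bound the two terms separately by $a\cdot\max\{\tau,\eta\}$ and $b\cdot 0.5\eta$.

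For the vehicle cost the bound $w(\T)\ge\tau$ is standard: the multiset $\bigcup_j E(T_j)$ is a connected Eulerian-like structure through every vertex of $V$, and shortcutting via the triangle inequality extracts a TSP tour of weight at most $w(\T)$. For the bound $w(\T)\ge\eta$ I work tour by tour. Let $l_j^*=\max_{i\in V'(T_j)}l_i$; since $T_j$ passes through $v_0$ and the deepest customer on the tour, the triangle inequality gives $w(T_j)\ge 2l_j^*$. The per-tour capacity constraint then yields
\[
w(T_j)\ge 2l_j^*\ge 2l_j^*\sum_{i\in V'(T_j)}d_i\ge \sum_{i\in V'(T_j)}2d_i l_i,
\]
and summing over $j$ telescopes to $w(\T)\ge \sum_{i\in[n]}2d_i l_i=\eta$.

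For the cargo cost I will use a transportation argument: trace each unit of goods from its pickup at $v_0$ to its delivery at some $v_i$. By unsplittability this unit is carried entirely along one $v_0$-to-$v_i$ subpath of $T_{j(i)}$, whose length is at least $l_i$ by the triangle inequality. Aggregating over customers,
\[
\sum_{e\in E(\T)}x_{e\T}w(e)=\sum_{i\in[n]}d_i\cdot(\text{distance the }d_i\text{ units travel})\ge \sum_{i\in[n]}d_i l_i=0.5\eta.
\]
Multiplying by $b$ and adding to the vehicle-cost bound gives $Cu(\T)\ge a\cdot\max\{\tau,\eta\}+b\cdot 0.5\eta=LB$, and taking conditional expectation finishes the proof. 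The only mildly delicate step is the $w(\T)\ge\eta$ inequality, which genuinely uses unsplittability through $\sum_{i\in V'(T_j)}d_i\le 1$; without this constraint a customer with large $d_i$ could be spread across many cheap short tours and the per-tour maximum-$l_i$ trick would no longer telescope correctly. All other ingredients are direct applications of the triangle inequality to shortcuts of $\T$ or of its individual tours.
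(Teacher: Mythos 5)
The paper does not give its own proof of this lemma---it is quoted directly from \citet{gaur2020improved}---but your argument is correct and is essentially the standard derivation of these bounds: $w(\T)\ge\tau$ by shortcutting the connected union of tours, the radial bound $w(\T)\ge\eta$ from combining $w(T_j)\ge 2\max_{i\in V'(T_j)}l_i$ with the per-tour capacity constraint, and the cargo bound $\sum_{e}x_{e\T}w(e)\ge\sum_i d_i l_i$ by tracing each unit of goods from the depot to its delivery point. Your handling of the adaptivity of the optimal policy (pointwise bound on every realized feasible itinerary, then conditional expectation) is also the right way to reduce to a deterministic statement. One small inaccuracy in your closing remark: the step $w(\T)\ge\eta$ relies on the per-tour capacity constraint (at most $Q=1$ units delivered per tour), not on unsplittability per se---in the splittable case the same telescoping goes through using the amounts actually delivered to each customer on each tour, which is why the analogous radial bound also appears in the paper's Lemma~\ref{lb-} for general Cu-VRPSD. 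This does not affect the validity of your proof for the unsplittable case.
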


Lemma~\ref{lb} is stronger than Lemma~\ref{lb-} since it holds that $\EE{\max\{X,Y\}}\geq \max\{\EE{X},\EE{Y}\}$ for any random variables $X$ and $Y$ by the property of the maximum function.

\begin{lemma}\label{mid}
An algorithm is a $\rho$-approximation algorithm for Cu-VRPSD if, for any possible demand realization vector $d\in\mathbb{R}^{V'}_{[0,1]}$, the algorithm conditioned on $\chi=d$ outputs a solution $\T$ with a cumulative cost of $\EE{Cu(\T)\mid \chi=d}\leq \rho\cdot LB$.
\end{lemma}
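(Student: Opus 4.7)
The plan is a short two-step argument that lifts the pointwise (per-realization) guarantee to an unconditional one using the tower property of conditional expectation, with Lemma~\ref{lb} supplying the pointwise lower bound on the optimum.

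First, let $\T^*$ denote an optimal policy's itinerary and let $\T$ denote the itinerary produced by the algorithm in question. By hypothesis, for every realization $d\in\mathbb{R}^{V'}_{[0,1]}$ of $\chi$, we have
\[
\EEE{Cu(\T)\mid \chi=d}\ \leq\ \rho\cdot LB(d),
\]
where $LB(d)=a\cdot\max\{\tau,\eta(d)\}+b\cdot 0.5\cdot\eta(d)$ with $\eta(d)=\sum_{i\in[n]}2\,d_i\,l_i$. Taking expectation over $\chi$ on both sides and applying the tower property gives
\[
\EEE{Cu(\T)}\ =\ \EEE{\EEE{Cu(\T)\mid \chi}}\ \leq\ \rho\cdot \EEE{LB(\chi)}.
\]

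Second, I would invoke Lemma~\ref{lb} applied to the optimal policy $\T^*$: for any realization $d$, $\EE{Cu(\T^*)\mid\chi=d}\geq LB(d)$. Taking expectation again and using the tower property,
\[
\EEE{Cu(\T^*)}\ =\ \EEE{\EEE{Cu(\T^*)\mid \chi}}\ \geq\ \EEE{LB(\chi)}.
\]
Chaining the two displays yields $\EE{Cu(\T)}\leq \rho\cdot\EE{Cu(\T^*)}=\rho\cdot\OPT$, which is exactly the definition of a $\rho$-approximation algorithm for Cu-VRPSD.

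There is no real obstacle here; the whole content of the lemma is that Lemma~\ref{lb} provides a \emph{pointwise} lower bound on $\EE{Cu(\T^*)\mid\chi=d}$ rather than only on the unconditional $\EE{Cu(\T^*)}$, so it can be paired realization-by-realization with a per-realization algorithmic guarantee and averaged. The only thing to double-check when invoking this lemma later in the paper is that our algorithms are genuinely analyzed conditionally on the demand realization and that Lemma~\ref{lb} applies to arbitrary realizations $d\in\mathbb{R}^{V'}_{[0,1]}$ (which it does, since the unsplittable feasibility condition $d_i\leq Q=1$ holds for every realization of $\chi_i\in[0,1]$).
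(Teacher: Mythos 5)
Your proof is correct and follows essentially the same route as the paper's: both invoke the pointwise lower bound $\EE{Cu(\T^*)\mid\chi=d}\geq LB$ from Lemma~\ref{lb} and then average over $\chi$ via the tower property (the paper chains the two inequalities before taking expectations, whereas you take expectations of each side first, which is an immaterial reordering). Your closing remark correctly identifies the essential content, namely that Lemma~\ref{lb} is a per-realization bound and therefore composable with a per-realization algorithmic guarantee.
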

\begin{proof}
Since it holds that $\EE{Cu(\T)\mid \chi=d}\leq \rho\cdot LB\leq\rho\cdot\EE{Cu(\T^*)\mid \chi=d }$ for any possible demand realization vector $d$, we have $\EE{Cu(\T)\mid \chi}\leq\rho\cdot\EE{Cu(\T^*)\mid \chi}$. Therefore, we have
$\EE{Cu(\T)}=\EE{\EE{Cu(\T)\mid \chi}}\leq\rho\cdot \EE{\EE{Cu(\T^*)\mid \chi}}=\rho\cdot\EE{Cu(\T^*)}$.
\end{proof}

Hence, we may frequently analyze our algorithms conditioned on $\chi=d$, where $d\in\mathbb{R}^{V'}_{[0,1]}$ is any possible demand realization.
For the sake of analysis, we let $\gamma\coloneqq a/b$, and $\sigma\coloneqq \gamma/\eta$. Note that $b=0$ corresponds to the case where $\gamma=\infty$, which turns out to be easier, as will be shown in Corollary~\ref{coro1}. 
We also define 
\begin{equation}\label{int}
\int^r_lx^{t}dF(x)\coloneqq \frac{\sum_{v_i\in V':l<d_i\leq r}2\cdot d^t_i\cdot l_i}{\sum_{v_i\in V'}2\cdot d_i\cdot l_i},\quad\quad\mbox{where}\quad t\in\{0,1,2\}.
\end{equation}
Note that $\int^1_0xdF(x)=1$. Moreover, for any $t\in\{1,2\}$ and $0\leq l\leq r$, we have 
\begin{equation}\label{intineq}
l\cdot\int^r_lx^{t-1}dF(x)<\int^r_lx^{t}dF(x)\leq r\cdot\int^r_lx^{t-1}dF(x).
\end{equation}

We remark that for (Cu-)VRPSD, if the vehicle travels along an $\alpha$-approximate TSP tour with an empty carry to record each customer's demand, and then satisfies each customer by calling a $\rho$-approximation algorithm for (Cu-)VRP, by Lemma~\ref{lb}, this strategy yields an $(\alpha+\rho)$-approximation algorithm. Therefore, we have the following result.

\begin{theorem}\label{thm0}
Given any randomized/deterministic $\rho$-approximation algorithm for (Cu-)VRP, there is a randomized/deterministic $(\alpha+\rho)$-approximation algorithm for (Cu-)VRPSD.
\end{theorem}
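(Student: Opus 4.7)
Following the sentence preceding the statement, the approach is a two-phase strategy: first run the vehicle once around a near-optimal TSP tour with empty cargo in order to learn every realized demand, and then treat the revealed demands as a deterministic (Cu-)VRP instance and invoke the given $\rho$-approximation algorithm as a black box. By Lemma~\ref{mid} it suffices to condition on an arbitrary realization $\chi=d\in\mathbb{R}^{V'}_{[0,1]}$ and bound the conditional expected cost of the produced itinerary by $(\alpha+\rho)\cdot\EE{Cu(\T^*)\mid \chi=d}$.

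First I would compute an $\alpha$-approximate TSP tour $T$ on $V$; this step is independent of $\chi$, so $w(T)\le \alpha\tau$. In phase one the vehicle traverses $T$ from and back to $v_0$ carrying no goods, observing each $\chi_i=d_i$ along the way; this is permitted in the Cu-VRPSD model, and it contributes a single tour to the final itinerary. Because $x_{eT}=0$ on every edge of this traversal, the cumulative cost of phase one is exactly $a\cdot w(T)\le a\alpha\tau$, which by Lemma~\ref{lb} is at most $\alpha\cdot a\tau\le\alpha\cdot LB\le \alpha\cdot\EE{Cu(\T^*)\mid \chi=d}$. In phase two the vector $d$ is fully known, so what remains is a deterministic (Cu-)VRP instance on $V$ with demands $d$; running the given $\rho$-approximation algorithm yields an itinerary of cost at most $\rho\cdot \OPT_{\text{(Cu-)VRP}}(d)$. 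The key observation is that conditioning on $\chi=d$ and executing the optimal Cu-VRPSD policy $\T^*$ produces a (possibly random) feasible solution to the deterministic (Cu-)VRP instance with demands $d$, so $\OPT_{\text{(Cu-)VRP}}(d)\le \EE{Cu(\T^*)\mid \chi=d}$. Adding the two phases and taking expectations over $\chi$ then gives the claimed $(\alpha+\rho)$-approximation. The randomized/deterministic distinction is preserved, because any randomness of the base $\rho$-approximation affects only phase two and is independent of $\chi$.

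The only non-routine step is the online-to-offline inequality $\OPT_{\text{(Cu-)VRP}}(d)\le \EE{Cu(\T^*)\mid \chi=d}$, which I expect to be the main point to argue carefully; the rest is straightforward bookkeeping together with the lower bound of Lemma~\ref{lb}.
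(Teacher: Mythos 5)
Your proposal is correct and follows essentially the same route as the paper, which only sketches this argument in the remark preceding the theorem: traverse an $\alpha$-approximate TSP tour empty to learn the demands (cost at most $a\cdot\alpha\tau\leq\alpha\cdot LB$ by Lemma~\ref{lb}), then invoke the $\rho$-approximation on the revealed deterministic instance, whose optimum is dominated by $\EE{Cu(\T^*)\mid\chi=d}$ since the optimal policy's execution on realization $d$ is a feasible deterministic solution. The only cosmetic remark is that your bound is against $\EE{Cu(\T^*)\mid\chi=d}$ rather than $LB$, so you need the conditioning/tower argument from the proof of Lemma~\ref{mid} rather than its statement, which is exactly what you do in your final step.
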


Note that the method in Theorem~\ref{thm0} may only be useful to obtain some straightforward results.

\section{Two Algorithms for Unsplittable Cu-VRPSD} 

\subsection{The First Algorithm}
In this section, we will introduce our first algorithm, denoted as $ALG.1(\lambda, \delta)$, which can be used to obtain an $(\alpha+2)$-approximation algorithm for unsplittable Cu-VRPSD. 
Moreover, by appropriately randomizing the setting of $\lambda$, it can be used to obtain a $10/3$-approximation algorithm for unsplittable Cu-VRPSD with any $\gamma\in(0,0.375]$ and a $3.456$-approximation algorithm for unsplittable Cu-VRPSD with any $\gamma\in[0.375, 1.444]$. 
Here, $\lambda\in(0,1]$ and $\delta\in[0,\lambda/2]$ are parameters that will be defined later. The main ideas of $ALG.1(\lambda, \delta)$ are as follows.

Firstly, $ALG.1(\lambda,\delta)$ computes an $\alpha$-approximate TSP tour $T^*$, which will be oriented in either clockwise or counterclockwise direction. 
Assume that $T^*=v_0v_1\dots v_nv_0$ by renumbering the customers following the orientation.
Then, the vehicle in $ALG.1(\lambda,\delta)$ tries to satisfy the customers in the order of $v_1\dots v_n$ as they appear on $T^*$, where the parameters $\lambda$ and $\delta$ ensures that the load of the vehicle during its travel on each edge of $T^*$ is at least $\delta$ and less than $\lambda$. Moreover, among its load, the $\delta$ units of goods are regarded as \emph{backup} goods, and the other units of goods are regarded as \emph{normal} goods. 
Specifically, if the vehicle carries $L_{i-1}$ units of normal goods during its travel from $v_{i-1}$ to $v_i$, we have $0 \leq L_{i-1} < \lambda - \delta$ for each $i \in [n+1]$. We say that the vehicle carries $S_{i-1}=(L_{i-1}, \delta)$ units of goods to indicate that it carries $L_{i-1}$ units of normal goods and $\delta$ units of backup goods. We require that $0<\lambda\leq 1$ and $0\leq\delta\leq \lambda-\delta$, i.e., $0\leq \delta\leq \lambda/2$.

When serving a customer, the main strategy of $ALG.1(\lambda,\delta)$ is to prioritize using the normal goods first and then consider using the backup goods if the normal goods are insufficient. Note that if the backup goods are used, the vehicle must return to the depot to reload before continuing since we require that the vehicle carries at least $\delta$ units of backup goods, as it is required to carry at least $\delta$ units of backup goods while traveling on each edge of $T^*$.
Conditioned on $\chi=d$, the details are as follows.

Initially, we load the vehicle with $S_0=(L_0,\delta)$ units of goods at the depot, where $L_{0}\sim U[0,\lambda-\delta)$, i.e., $L_0$ is uniformly distributed over the interval $[0,\lambda-\delta)$. When the vehicle is about to serve $v_i$, we assume that it carries $S_{i-1}=(L_{i-1},\delta)$ units of goods, where $0\leq L_{i-1}<\lambda-\delta$. Then, we have the following three cases.

\textbf{Case~1: $d_i\leq L_{i-1}$.} In this case, the vehicle directly delivers $(d_i,0)$ units of goods for $v_i$, and then goes to the next customer. Hence, we have $S_i=(L_i, \delta)$, where $L_i\coloneqq L_{i-1}+\ceil{\frac{d_i-L_{i-1}}{\lambda-\delta}}\cdot (\lambda-\delta)-d_i= L_{i-1}- d_i$ since $d_i\leq L_{i-1}<\lambda-\delta$.

\textbf{Case~2: $L_{i-1}<d_i\leq L_{i-1}+\delta$.} The vehicle delivers $(L_{i-1},d_i-L_{i-1})$ units of goods for $v_i$, goes to the depot to reload $(L_{i-1}+\ceil{\frac{d_i-L_{i-1}}{\lambda-\delta}}\cdot (\lambda-\delta)-d_i,d_i-L_{i-1})$ units of goods, and then goes to the next customer. Hence, we have $S_i=(L_i,\delta)$, where $L_i\coloneqq L_{i-1}+\ceil{\frac{d_i-L_{i-1}}{\lambda-\delta}}\cdot (\lambda-\delta)-d_i=L_{i-1}+(\lambda-\delta)-d_i$ since $0<d_i-L_{i-1}\leq \delta\leq \lambda-\delta$.

\textbf{Case~3: $L_{i-1}+\delta<d_i\leq 1$.} In this case, we must have $d_i>\delta$. 

\textbf{Case~3.1: $\delta< d_i\leq \lambda$.}
The vehicle goes to the depot to reload $(d_i-L_{i-1}-\delta, 0)$ units of goods, goes to satisfy $v_i$, then goes to the depot to reload $(L_{i-1}+\ceil{\frac{d_i-L_{i-1}}{\lambda-\delta}}\cdot (\lambda-\delta)-d_i,\delta)$ units of goods, goes to customer $v_i$ again, and then goes to the next customer. Hence, we have $S_i=(L_i,\delta)$, where $L_i\coloneqq L_{i-1}+\ceil{\frac{d_i-L_{i-1}}{\lambda-\delta}}\cdot (\lambda-\delta)-d_i$. 
Note that, by the triangle inequality, the vehicle may directly proceed to the next customer instead of returning to $v_i$ without increasing the traveling distance. However, for the sake of analysis, we still require the vehicle to return to $v_i$.

\textbf{Case~3.2: $\lambda<d_i\leq 1$.} Since $L_{i-1}<\lambda-\delta$, we must have $L_{i-1}+\delta<d_i$. Instead of satisfying $v_i$ by returning to the depot to reload as in Case~3.1, the vehicle records its demand, skips it, and goes to the next customer. Hence, we have $S_i=(L_i,\delta)$, where $L_i\coloneqq L_{i-1}$.

An illustration of the vehicle's load when serving customer $v_i$ in each of the four cases can be found in Figure~\ref{fig1}.

\begin{figure}[ht]
    \centering
    \includegraphics[scale=0.48]{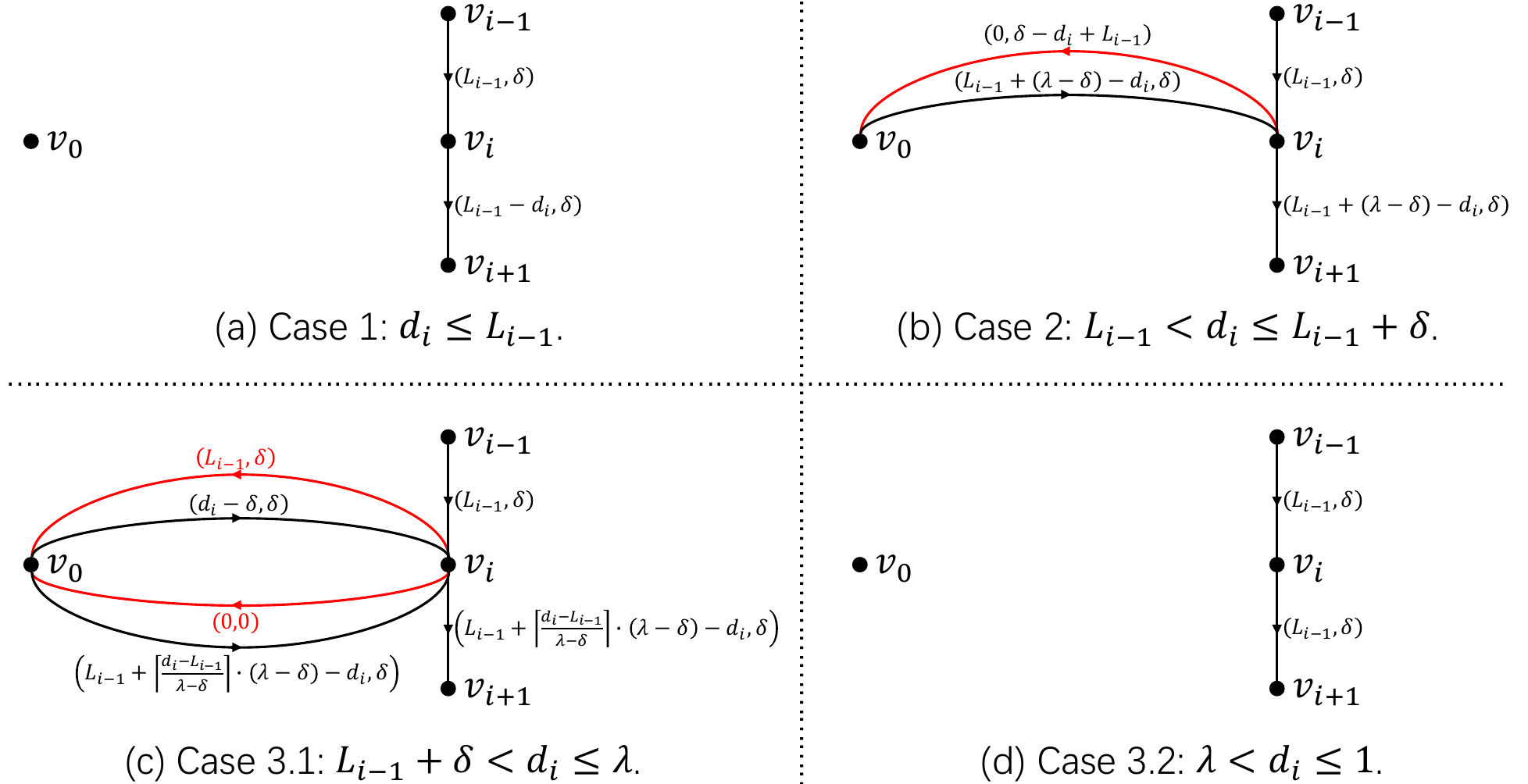}
    \caption{An illustration of the vehicle's load when serving customer $v_i$ in each of the four cases.}
    \label{fig1}
\end{figure}

After trying to satisfy all customers using the above strategy,  due to Case~3.2, there may be still a set of \emph{unsatisfied} customers $\{v_i\mid d_i>\lambda\}$. Then, for each such customer $v_i$, since its demand has been recorded, the vehicle will load exactly $d_i$ units of goods at the depot, go to satisfy $v_i$, and then return to the depot.

The details of $ALG.1(\lambda,\delta)$ is shown in Algorithm~\ref{alg1}.

\begin{algorithm}[ht]
\caption{An algorithm for unsplittable Cu-VRPSD ($ALG.1(\lambda,\delta)$)}
\label{alg1}
\small
\vspace*{2mm}
\textbf{Input:} An instance of unsplittable Cu-VRPSD, and two parameters $\lambda\in(0,1]$ and $\delta\in[0,\lambda/{2}]$.\\
\textbf{Output:} A feasible solution $\T$ to unsplittable Cu-VRPSD.

\begin{algorithmic}[1]
\State\label{tsptour} Obtain an $\alpha$-approximate TSP tour $T^*$ using an $\alpha$-approximation algorithm for metric TSP, orient $T^*$ in either clockwise or counterclockwise direction, and denote $T^*=v_0v_1v_2\dots v_n v_0$ by renumbering the customers following the direction.

\State Load the vehicle with $S_0\coloneqq (L_0,\delta)$ units of goods, including $L_0$ units of normal goods and $\delta$ units of backup goods, where $L_0\sim U[0,\lambda-\delta)$.

\State Initialize $i\coloneqq 1$ and $V^*\coloneqq\emptyset$.

\While{$i\leq n$}

\State Go to customer $v_i$.

\If {$d_{i}\leq L_{i-1}$}
\State\label{case1ofalg1} Deliver $(d_i,0)$ units of goods to $v_i$, and update $S_i\coloneqq(L_i,\delta)$, where $L_{i}\coloneqq L_{i-1}+\ceil{\frac{d_i-L_{i-1}}{\lambda-\delta}}\cdot (\lambda-\delta)-d_i= L_{i-1}-d_i$.

\ElsIf{$L_{i-1}<d_i\leq L_{i-1}+\delta$}
\State\label{case1ofalg1+}\label{addone} Deliver $(L_{i-1},d_i-L_{i-1})$ units of goods to $v_i$, goes to the depot, load the vehicle with $(L_{i-1}+\ceil{\frac{d_i-L_{i-1}}{\lambda-\delta}}\cdot (\lambda-\delta)-d_i,d_i-L_{i-1})$ units of goods, and update $S_i\coloneqq(L_i,\delta)$, where $L_{i}\coloneqq L_{i-1}+\ceil{\frac{d_i-L_{i-1}}{\lambda-\delta}}\cdot (\lambda-\delta)-d_i= L_{i-1}+(\lambda-\delta)-d_i$.

\ElsIf{$L_i+\delta<d_i\leq 1$}
\If{$\delta<d_i\leq\lambda$}
\State\label{addtwo1} Return to the depot, load the vehicle with $(d_i-L_{i-1}-\delta,0)$ units of goods, go to customer $v_i$, and deliver $(d_i-\delta,\delta)$ units of goods to $v_i$.
\State\label{addtwo2}\label{case1ofalg1++} Return to the depot, load the vehicle with $(L_{i-1}+\ceil{\frac{d_i-L_{i-1}}{\lambda-\delta}}\cdot (\lambda-\delta)-d_i,\delta)$ units of goods, go to customer $v_i$, and update $S_i\coloneqq(L_i,\delta)$, where $L_{i}\coloneqq L_{i-1}+\ceil{\frac{d_i-L_{i-1}}{\lambda-\delta}}\cdot (\lambda-\delta)-d_i$. \Comment{
The vehicle returns to $v_i$ for the sake of analysis; however, it could directly proceed to $v_{i+1}$.}

\ElsIf{$\lambda<d_i\leq 1$}
\State\label{case2ofalg1} Record $v_i$'s demand, and update $V^*\coloneqq V^*\cup \{v_i\}$ and $S_i\coloneqq(L_i,\delta)$, where $L_i\coloneqq L_{i-1}$.
\EndIf
\EndIf
\State $i \coloneqq i+1$.
\EndWhile

\State Go to the depot.

\For{$v_i\in V^*$}
\State\label{clean} Load the vehicle with $d_i$ units of goods, go to customer $v_i$, and deliver $d_i$ units of goods to $v_i$.
\State Go to the depot.
\EndFor
\end{algorithmic}
\end{algorithm}

Compared to the previous strategy in~\citep{gaur2020improved}, there are two main differences. The first is that we specially handle each customer $v_i$ with $d_i>\lambda$ in Case~3.2. Note that if we satisfy $v_i$ as the method in Case~3.1, since $L_{i-1}+\delta<\lambda$ (we will prove it in Lemma~\ref{l0}), the vehicle must incur two visits to the depot, which will cost too much. 
The second is that we ensure that the vehicle always carries $\delta$ units of backup goods when traveling along the TSP tour. The advantage is that each customer $v_i$ with $d_i\leq \delta$ incurs at most one visit to the depot while if $\delta=0$, every customer $v_i$ with $d_i\leq \lambda$ has the potential to incur two visits to the depot. However, since $\delta>0$ clearly increases the cumulative cost of the vehicle when it travels along the TSP tour, we need to carefully set the value of $\delta$. 

We also remark that the previous algorithm in~\citep{gaur2020improved} orients $T^*$ by choosing one of the two possible orientations uniformly at random, and its analysis uses this property. However, this randomness is not necessary in our analysis, which is also why $T^*$ can be oriented in either clockwise or counterclockwise direction in Line~\ref{tsptour}.

Although we require that $\lambda>0$ in $ALG.1(\lambda,\delta)$, it can be extended to the case of $\lambda=0$. In this scenario, the vehicle simply travels along the TSP tour with an empty carry to record each customer's demand, and then satisfies each customer within a single tour, as described in Case~3.2. Interestingly, if \( a = 0 \), this algorithm becomes an exact algorithm for unsplittable Cu-VRPSD, as the cumulative cost is \(b \cdot \sum_{i\in[n]}d_i \cdot l_i \), which matches the lower bound $LB$ in Lemma~\ref{mid}. 
The running time can reach $O(n)$ since all TSP tours have the same performance.
However, it may be useless for $a>0$. Hence, we mainly focus on $\lambda>0$ in the following.

\begin{lemma}\label{a=0}
Unsplittable Cu-VRPSD with $a=0$ can be solved in $O(n)$ time.
\end{lemma}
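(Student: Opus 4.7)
The plan is to invoke the extension of $ALG.1(\lambda,\delta)$ to $\lambda=0$ described in the paragraph immediately preceding the lemma, and to verify both the optimality and the $O(n)$ running time under the assumption $a=0$. Concretely, when $\lambda=0$ the algorithm makes one empty-carry traversal through all customers (recording each realized demand $d_i$) and then issues, for each $v_i$, a dedicated single tour that loads $d_i$ at the depot, delivers it to $v_i$, and returns.

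First I would bound the cost conditioned on an arbitrary realization $\chi=d\in\RR^{V'}_{[0,1]}$. In the recording phase the vehicle carries $0$ units, so its cargo cost is $0$, and since $a=0$ its vehicle cost is also $0$, regardless of the order in which the customers are visited. In the delivery phase each single tour $v_0\to v_i\to v_0$ contributes vehicle cost $a\cdot 2l_i=0$ and cargo cost $b\cdot d_i\cdot l_i$ (from the outbound edge, carrying $d_i$ units) plus $b\cdot 0\cdot l_i$ (from the inbound edge, empty). Summing over $i$, the total cumulative cost is exactly
\[
Cu(\T)\;=\;b\sum_{i\in[n]} d_i\cdot l_i.
\]
Next I would compare this against the lower bound from Lemma~\ref{lb}: with $a=0$ and $\eta=\sum_{i\in[n]}2d_il_i$,
\[
LB\;=\;a\cdot\max\{\tau,\eta\}+b\cdot 0.5\cdot\eta\;=\;b\sum_{i\in[n]} d_i\cdot l_i.
\]
Hence $Cu(\T)=LB$ for every realization $d$, and Lemma~\ref{mid} with $\rho=1$ immediately yields $\EE{Cu(\T)}=\EE{Cu(\T^*)}$, i.e., the algorithm is exact.

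For the running time I would observe that Line~\ref{tsptour} of $ALG.1$ is the only step that normally costs more than linear time, but when $a=0$ the vehicle cost of the recording traversal is zero for any Hamiltonian ordering, and the cargo cost of that phase is also zero since the carry is empty. Therefore no TSP computation is required: the trivial order $v_1,v_2,\dots,v_n$ suffices, and both the recording traversal and the $n$ subsequent single tours can be executed in $O(n)$ time.

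There is no real obstacle here; the only point that needs explicit mention, rather than routine verification, is that the $\alpha$-approximate TSP tour used by $ALG.1$ is unnecessary when $a=0$, because every Hamiltonian order through the customers induces exactly the same (zero) recording-phase cost. Once this observation is made, the matching with $LB$ and the linear running time are immediate.
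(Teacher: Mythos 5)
Your proposal is correct and follows essentially the same route as the paper: run the $\lambda=0$ variant of $ALG.1$ (empty-carry recording pass in an arbitrary customer order, then one dedicated tour per customer), observe that the resulting cumulative cost $b\sum_{i\in[n]}d_i\cdot l_i$ equals the lower bound $LB$ of Lemma~\ref{lb} when $a=0$, and note that no TSP approximation is needed so the running time is $O(n)$. Your write-up just makes explicit the per-realization cost accounting and the appeal to Lemma~\ref{mid} that the paper leaves implicit.
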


\subsubsection{The analysis}
Recall that $ALG.1(\lambda,\delta)$ carries $L_0\sim U[0,\lambda-\delta)$ units of normal goods initially. 
Next, we analyze the expected cumulative cost of $\T$ conditioned on $\chi=d$, i.e., $\EE{Cu(\T)\mid \chi=d}$.

In $ALG.1(\lambda,\delta)$, the vehicle carries $S_{i-1}=(L_{i-1},\delta)$ units of goods when traveling along the edge $v_{i-1}v_{i\bmod (n+1)}$ of the TSP tour $T^*$. For each $i\in[n]$, we let $h_i\coloneqq 1$ if $d_i\leq\lambda$, and $h_i\coloneqq 0$ otherwise. We have the following lemma.

\begin{lemma}[*]\label{l0}
For any $i\in [n+1]$, it holds $L_{i-1}=L_0 + \ceil{\frac{\sum_{j=1}^{i-1}h_j\cdot d_j-L_0}{\lambda-\delta}}\cdot (\lambda-\delta)-\sum_{j=1}^{i-1}h_j\cdot d_j$, and moreover, $L_{i-1}\sim U[0,\lambda-\delta)$, conditioned on $\chi=d$.
\end{lemma}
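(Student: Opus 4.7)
The plan is to prove both claims simultaneously by induction on $i$, with $L_0 \sim U[0,\lambda-\delta)$ being the only source of randomness once we condition on $\chi=d$ and fix the orientation of $T^*$. For the base case $i=1$, the empty sum $\sum_{j=1}^{0} h_j d_j$ equals $0$ and the closed form reduces to $L_0 = L_0 + \ceil{-L_0/(\lambda-\delta)}\cdot(\lambda-\delta)$; since $L_0 \in [0,\lambda-\delta)$ the ceiling is $0$ and the identity holds, while the distribution claim is just the definition of $L_0$.

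For the inductive step, the key observation is that all four update rules in $ALG.1(\lambda,\delta)$ can be written uniformly as
\[
L_i \;=\; L_{i-1} + k_i\,(\lambda-\delta) - h_i d_i,
\]
where $k_i\in\{0,1,2\}$ counts the number of reload trips the vehicle makes between serving $v_{i-1}$ and finishing with $v_i$: specifically, $k_i=0$ in Case~1 and in Case~3.2, $k_i=1$ in Case~2 and in the subcase of Case~3.1 with $\ceil{(d_i-L_{i-1})/(\lambda-\delta)}=1$, and $k_i=2$ in the remaining subcase of Case~3.1. In each case I check directly that $L_i \in [0,\lambda-\delta)$; this is immediate in Cases~1,~2 and~3.2, and in Case~3.1 the verification of $L_i\geq 0$ when $k_i=2$ is the one place where the parameter constraint $\delta\leq \lambda/2$ (equivalently $\lambda\leq 2(\lambda-\delta)$) is essential.

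Substituting the inductive hypothesis $L_{i-1} = L_0 + K_{i-1}(\lambda-\delta) - D_{i-1}$, with $D_{i-1}\coloneqq \sum_{j=1}^{i-1} h_j d_j$ and $K_{i-1}\coloneqq \ceil{(D_{i-1}-L_0)/(\lambda-\delta)}$, into the display above yields $L_i = L_0 + (K_{i-1}+k_i)(\lambda-\delta) - D_i$, where $D_i = D_{i-1}+h_i d_i$ (in Case~3.2 this uses $h_i=0$, so $D_i=D_{i-1}$, consistently with $L_i=L_{i-1}$). Since $L_i \in [0,\lambda-\delta)$ and there is a unique integer shift producing a representative in that interval, we must have $K_{i-1}+k_i = \ceil{(D_i-L_0)/(\lambda-\delta)}$, which closes the induction on the closed-form formula.

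The uniformity claim then follows from the formula alone: it identifies $L_{i-1}$ as the unique point in $[0,\lambda-\delta)$ congruent to $L_0 - D_{i-1}$ modulo $\lambda-\delta$, and $D_{i-1}$ is deterministic under the conditioning $\chi=d$. Hence the map $L_0 \mapsto L_{i-1}$ is a measure-preserving translation of the circle of circumference $\lambda-\delta$, so $L_0\sim U[0,\lambda-\delta)$ implies $L_{i-1}\sim U[0,\lambda-\delta)$. The main obstacle is the bookkeeping in Case~3.1, where the ceiling can equal $2$ and one must invoke $\delta\leq\lambda/2$ to keep $L_i$ non-negative; everything else is a straightforward induction plus the one-line modular-arithmetic argument for the distribution.
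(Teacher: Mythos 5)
Your proof is correct and follows essentially the same route as the paper's: an induction in which the per-customer update is written as $L_i=L_{i-1}+\ceil{(d_i-L_{i-1})/(\lambda-\delta)}\cdot(\lambda-\delta)-h_i d_i$, the containment $L_i\in[0,\lambda-\delta)$ forces the ceilings to telescope into the closed form, and uniformity follows because $L_0\mapsto L_{i-1}$ is a bijective translation modulo $\lambda-\delta$. The only cosmetic difference is that you phrase the increment as a trip count $k_i\in\{0,1,2\}$ while the paper works directly with the ceiling expression; this does not change the substance of the argument.
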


Lemma~\ref{l0} also implies that $0\leq L_{i-1}<\lambda-\delta$ for any $i\in[n+1]$. 

\begin{lemma}\label{l1}
In $ALG.1(\lambda,\delta)$, the expected cumulative cost conditioned on $\chi=d$ during the vehicle's travel from $v_{i-1}$ to $v_{i}$ is $a\cdot w(v_{i-1},v_{i})+b\cdot\frac{\lambda+\delta}{2}\cdot w(v_{i-1},v_{i})$.
\end{lemma}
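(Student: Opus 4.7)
The plan is to compute the cost of the single TSP-edge traversal directly and then take expectation using Lemma~\ref{l0}. Concretely, when the algorithm crosses the TSP-tour edge $v_{i-1}v_i$ (the ``Go to customer $v_i$'' step inside the main while-loop of Algorithm~\ref{alg1}), the vehicle carries $S_{i-1}=(L_{i-1},\delta)$, i.e., a total of $L_{i-1}+\delta$ units of goods, so the cumulative cost contributed by this traversal, conditional on $\chi=d$ and on the random initial load $L_0$, equals
\[
\bigl(a + b\cdot(L_{i-1}+\delta)\bigr)\cdot w(v_{i-1},v_i).
\]
Lemma~\ref{l0} asserts that $L_{i-1}\sim U[0,\lambda-\delta)$ conditional on $\chi=d$, hence $\EE{L_{i-1}\mid\chi=d}=(\lambda-\delta)/2$ and therefore $\EE{L_{i-1}+\delta\mid\chi=d}=(\lambda+\delta)/2$. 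Substituting this into the display above and applying linearity of expectation yields the claimed formula $a\cdot w(v_{i-1},v_i) + b\cdot \frac{\lambda+\delta}{2}\cdot w(v_{i-1},v_i)$.

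Before carrying out this one-line computation, I would briefly justify that the phrase \emph{``the vehicle's travel from $v_{i-1}$ to $v_i$''} refers only to the single traversal of this TSP edge, so that no additional depot side-trips are being charged. Inspecting Algorithm~\ref{alg1}, all depot detours triggered while serving $v_i$ (Lines~\ref{addone}, \ref{addtwo1}, \ref{addtwo2}) occur \emph{after} the vehicle has reached $v_i$, and the post-processing tours of Line~\ref{clean} are performed outside the main while-loop. Thus the edge $v_{i-1}v_i$ is traversed exactly once, with load $(L_{i-1},\delta)$, which is exactly what the computation above uses.

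There is essentially no technical obstacle: the lemma is an immediate corollary of Lemma~\ref{l0} combined with linearity of expectation. The only point requiring a moment of care is the bookkeeping of which segment of the route is charged to the $v_{i-1}\to v_i$ transition, but that is a matter of reading the pseudocode rather than a genuine difficulty in the proof.
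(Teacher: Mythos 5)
Your proof is correct and follows essentially the same route as the paper: invoke Lemma~\ref{l0} to get $L_{i-1}\sim U[0,\lambda-\delta)$ conditioned on $\chi=d$, compute $\EE{L_{i-1}+\delta\mid\chi=d}=\frac{\lambda+\delta}{2}$, and multiply by $b\cdot w(v_{i-1},v_i)$ on top of the vehicle cost $a\cdot w(v_{i-1},v_i)$. Your extra remark clarifying that only the single traversal of the TSP edge is charged here (depot detours being accounted for separately) is consistent with how the paper uses this lemma.
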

\begin{proof}
By Lemma~\ref{l0}, the vehicle carries $(L_{i-1},\delta)$ units of goods during the vehicle's travel from $v_{i-1}$ to $v_{i}$, where $L_{i-1}\sim U[0,\lambda-\delta)$. So, $\EE{L_{i-1}+\delta\mid \chi=d}=\int_{0}^{\lambda-\delta}\frac{x+\delta}{\lambda-\delta}dx=\frac{\lambda+\delta}{2}$. Hence, the expected cumulative cost of the vehicle's travel from $v_{i-1}$ to $v_{i}$ conditioned on $\chi=d$ is 
$
a\cdot w(v_{i-1},v_{i})+b\cdot\EE{L_{i-1}+\delta\mid \chi=d}\cdot w(v_{i-1},v_{i})=a\cdot w(v_{i-1},v_{i})+b\cdot\frac{\lambda+\delta}{2}\cdot w(v_{i-1},v_{i}).
$
\end{proof}

By Line~\ref{addone} in $ALG.1(\lambda,\delta)$, if the vehicle visits $v_i$ carrying $S_{i-1}=(L_{i-1},\delta)$ units of goods, where $L_{i-1}<d_i\leq L_{i-1}+\delta$, it will first satisfy $v_i$, then proceed to the depot to reload some units of goods, and finally return to the place of $v_i$. We refer to this process as \emph{one additional visit to $v_0$}. 

By Lines~\ref{addtwo1} and \ref{addtwo2} in $ALG.1(\lambda,\delta)$, if the vehicle visits $v_i$ with $d_i\leq \lambda$ carrying $S_{i-1}=(L_{i-1},\delta)$ units of goods, where $L_{i-1}+\delta<d_i$, it will first go to the depot to reload some units of goods, then go to the place of $v_i$ to satisfy $v_i$, proceed to the depot to reload some units of goods, and finally return to the place of $v_i$ again. We refer to this process as \emph{two additional visits to $v_0$}.

When the vehicle is about to serve $v_i$ with $d_i \leq \lambda$, it may incur one additional visit or two additional visits to $v_0$, resulting in some cumulative cost. 
For each customer $v_i$ with $d_i>\lambda$, By Line~\ref{clean}, the vehicle satisfies $v_i$ using a single tour, which will also be regarded as one additional visit to $v_0$ for the sake of presentation. 
Next, we analyze the expected cumulative cost conditioned on $\chi=d$ due to the possible additional visit(s) for each customer $v_i$.

\begin{lemma}[*]\label{l2}
Conditioned on $\chi=d$, when serving each customer $v_i$ in $ALG.1(\lambda,\delta)$, the expected cumulative cost of the vehicle due to the possible additional visit(s) to $v_0$ is
\begin{itemize}
    \item $a\cdot\frac{2d_i}{\lambda-\delta}\cdot l_i+b\cdot \frac{(\lambda+\delta)\cdot d_i-d^2_i}{\lambda-\delta}\cdot l_i$ if $d_i\leq\delta$;
    \item $a\cdot\frac{4d_i-2\delta}{\lambda-\delta}\cdot l_i+b\cdot \frac{d^2_i+(\lambda-\delta)\cdot d_i}{\lambda-\delta}\cdot l_i$ if $\delta<d_i\leq\lambda-\delta$;
    \item $a\cdot\frac{2d_i+2\lambda-4\delta}{\lambda-\delta}\cdot l_i+b\cdot \frac{2d^2_i-(\lambda+\delta)\cdot d_i+\lambda^2-\delta^2}{\lambda-\delta}\cdot l_i$ if $\lambda-\delta<d_i\leq\lambda$;
    \item $a\cdot2\cdot l_i+b\cdot d_i\cdot l_i$ if $\lambda<d_i\leq 1$.
\end{itemize}
\end{lemma}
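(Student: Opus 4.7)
The plan is to carry out a case analysis according to which of the four algorithmic branches (Case~1, 2, 3.1, 3.2) is triggered when the vehicle arrives at $v_i$, and then integrate over the distribution of $L_{i-1}$. By Lemma~\ref{l0}, $L_{i-1}$ is uniformly distributed on $[0,\lambda-\delta)$, so once the additional-visit cost is written as a piecewise function of $L_{i-1}$, the conditional expectation reduces to integrating this function against the density $1/(\lambda-\delta)$.

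First I would write out, for each case, the distance and the load carried on each leg of the extra trip(s) to $v_0$. Case~1 contributes no extra cost. In Case~2 the pattern is $v_i\to v_0\to v_i$, with loads $L_{i-1}+\delta-d_i$ on the first leg and $L_i+\delta$ on the second; substituting $L_i=L_{i-1}+(\lambda-\delta)-d_i$ gives an edge cost $2a\cdot l_i+b\cdot(2L_{i-1}+\lambda+\delta-2d_i)\cdot l_i$. In Case~3.1 the pattern is $v_i\to v_0\to v_i\to v_0\to v_i$ with loads $L_{i-1}+\delta$, $d_i$, $0$, and $L_i+\delta$, giving $4a\cdot l_i+b\cdot(L_{i-1}+L_i+d_i+2\delta)\cdot l_i$, where $L_i$ depends on whether $\ceil{(d_i-L_{i-1})/(\lambda-\delta)}$ equals $1$ or $2$. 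Case~3.2 is independent of $L_{i-1}$ and contributes $2a\cdot l_i+b\cdot d_i\cdot l_i$.

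Next, for each of the four ranges of $d_i$ in the statement, I would identify the $L_{i-1}$-subintervals on which each case is active. When $d_i\le\delta$, only Case~1 and Case~2 (on $[0,d_i)$) are possible. When $\delta<d_i\le\lambda-\delta$, Cases~1, 2, 3.1 occupy $[d_i,\lambda-\delta)$, $[d_i-\delta,d_i)$, $[0,d_i-\delta)$ respectively, with the ceiling always equal to $1$. When $\lambda-\delta<d_i\le\lambda$, Case~1 is impossible; Case~2 is active on $[d_i-\delta,\lambda-\delta)$ of length $\lambda-d_i$, Case~3.1 with ceiling $1$ on $[d_i-\lambda+\delta,d_i-\delta)$ of length $\lambda-2\delta$, and Case~3.1 with ceiling $2$ on $[0,d_i-\lambda+\delta)$ of length $d_i-\lambda+\delta$. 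When $\lambda<d_i\le 1$ only Case~3.2 arises, yielding the last bullet directly.

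Finally I would integrate the piecewise cost against the density $1/(\lambda-\delta)$ and match the result against each bullet. The main obstacle is the third range, where Case~3.1 must be further split by the value of the ceiling, producing three subintervals whose ordering and non-negativity rely on the standing assumption $\delta\le\lambda/2$. In each range the $a$-coefficient reduces to a weighted count of extra trips and the $b$-coefficient to the integral of a piecewise-linear expression in $L_{i-1}$; no step is deeper than elementary integration, but the third range requires careful bookkeeping of all three subintervals before the coefficients simplify to those in the lemma.
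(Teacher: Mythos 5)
Your proposal is correct and follows essentially the same route as the paper's proof: a case split on which algorithmic branch fires, with the per-leg loads you list ($L_{i-1}+\delta-d_i$ and $L_i+\delta$ for one extra visit; $L_{i-1}+\delta$, $d_i$, $0$, $L_i+\delta$ for two), the same activation subintervals of $L_{i-1}$ in each demand range, and the same further split of the two-visit case by the value of $\ceil{(d_i-L_{i-1})/(\lambda-\delta)}$ when $\lambda-\delta<d_i\le\lambda$, followed by integration against the uniform density $1/(\lambda-\delta)$. Carrying out the stated integrations reproduces each bullet of the lemma exactly as in the paper.
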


\begin{lemma}\label{t1}
For unsplittable Cu-VRPSD with any $\lambda\in(0,1]$ and $\delta\in[0,\lambda/{2}]$, conditioned on $\chi=d$, $ALG.1(\lambda,\delta)$ generates a solution $\T$ with an expected cumulative cost of
\[
\frac{A+B}{\gamma\cdot\max\lrC{\sigma,\ 1}+0.5}\cdot LB,
\]
where $A\coloneqq\gamma\cdot\lra{\alpha\cdot\sigma+\int^{\delta}_0\frac{x}{\lambda-\delta}dF(x)+\int^{\lambda-\delta}_{\delta}\frac{2x-\delta}{\lambda-\delta}dF(x)+\int^\lambda_{\lambda-\delta}\frac{x+\lambda-2\delta}{\lambda-\delta}dF(x)+\int_\lambda^11dF(x)}$, and $B\coloneqq\lra{\frac{\lambda+\delta}{2}\cdot\alpha\cdot\sigma+\int^{\delta}_0\frac{(\lambda+\delta)x-x^2}{2(\lambda-\delta)}dF(x)+\int^{\lambda-\delta}_{\delta}\frac{x^2+(\lambda-\delta)x}{2(\lambda-\delta)}dF(x)+\int^\lambda_{\lambda-\delta}\frac{2x^2-(\lambda+\delta)\cdot x+\lambda^2-\delta^2}{2(\lambda-\delta)}dF(x)+\int_\lambda^1\frac{x}{2}dF(x)}$.
\end{lemma}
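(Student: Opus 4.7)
The plan is to decompose the expected cumulative cost incurred by $ALG.1(\lambda,\delta)$ conditioned on $\chi=d$ into two natural contributions and match their sum term-by-term against $b\eta(A+B)$. The first contribution is the expected cost of the vehicle's travel along the $\alpha$-approximate TSP tour $T^*$; the second is the expected cost of the additional visit(s) to the depot $v_0$ associated with each customer $v_i$ (the three sub-cases of Cases~2 and~3 during the TSP traversal, plus the single-tour trips at the end for the skipped customers in $V^*$).

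First, I apply Lemma~\ref{l1} edge-by-edge and sum over $E(T^*)$; using $w(T^*)\leq\alpha\tau$, the first contribution is at most $a\alpha\tau+b\cdot\tfrac{\lambda+\delta}{2}\cdot\alpha\tau$. Second, I apply Lemma~\ref{l2} to each customer $v_i$ and sum. The core technical step is to translate each of the four piecewise coefficients depending on the range of $d_i$ into the combinatorial integrals defined in~(\ref{int}): by the identity $\sum_{v_i\in V':\,l<d_i\leq r}2d_i^t l_i=\eta\cdot\int_l^r x^t\,dF(x)$ together with linearity of this combinatorial integral for polynomial integrands, the sum of the per-customer vehicle additions equals $a\eta\cdot I_v$ and the sum of the per-customer cargo additions equals $b\eta\cdot I_c$, where $I_v$ is exactly the four-integral expression inside $A/\gamma$ with $\alpha\sigma$ removed, and $I_c$ is exactly the four-integral expression inside $B$ with $\tfrac{\lambda+\delta}{2}\alpha\sigma$ removed.

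Adding the two contributions and factoring out $b\eta$, the total expected cumulative cost is at most
\[
a\alpha\tau+a\eta I_v+b\cdot\tfrac{\lambda+\delta}{2}\cdot\alpha\tau+b\eta I_c
\;=\;b\eta\Bigl(\gamma\cdot\alpha\cdot\tfrac{\tau}{\eta}+\gamma I_v+\tfrac{\lambda+\delta}{2}\cdot\alpha\cdot\tfrac{\tau}{\eta}+I_c\Bigr)
\;=\;b\eta(A+B),
\]
where the final equality uses $\gamma=a/b$ together with $\sigma=\tau/\eta$, so that $\gamma\alpha\sigma$ absorbs the TSP vehicle-side normalization and $\tfrac{\lambda+\delta}{2}\alpha\sigma$ absorbs the TSP cargo-side normalization. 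Since Lemma~\ref{lb} gives $LB=a\max\{\tau,\eta\}+0.5\,b\eta=b\eta\cdot(\gamma\max\{\sigma,1\}+0.5)$, dividing the total-cost bound by $LB$ produces exactly the stated factor $\tfrac{A+B}{\gamma\max\{\sigma,1\}+0.5}$ in front of $LB$, which completes the proof.

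The main obstacle will be the careful term-by-term verification in the second contribution: each of the four piecewise coefficients in Lemma~\ref{l2} (e.g., $\tfrac{2d_i+2\lambda-4\delta}{\lambda-\delta}$ on the vehicle side and $\tfrac{2d_i^2-(\lambda+\delta)d_i+\lambda^2-\delta^2}{\lambda-\delta}$ on the cargo side) must be shown to aggregate into exactly the integrands inside $A$ and $B$ once the sum $\sum_i(\cdot)\,l_i$ is normalized by $\eta$ through~(\ref{int}). The $\tfrac{1}{2}$ factors inside the integrands of $B$ arise from rewriting $d_i^t l_i=\tfrac{1}{2}(2d_i^t l_i)$ to match the definition in~(\ref{int}), while on the vehicle side no such halving is needed because the coefficients already pair naturally with $2d_i l_i$ or $2 l_i$ after summation. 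All remaining work reduces to routine polynomial rearrangement inside each of the four intervals $[0,\delta]$, $(\delta,\lambda-\delta]$, $(\lambda-\delta,\lambda]$, and $(\lambda,1]$.
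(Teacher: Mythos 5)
Your proposal is correct and follows essentially the same route as the paper's own proof: apply Lemma~\ref{l1} summed over the edges of $T^*$ with $w(T^*)\leq\alpha\tau$, apply Lemma~\ref{l2} summed over the four demand classes of customers, and then normalize by $b\eta$ via the definition in~(\ref{int}) and the expression $LB=b\eta(\gamma\max\{\sigma,1\}+0.5)$ from Lemma~\ref{lb}. The bookkeeping you describe for the factors of $\tfrac12$ and the conversion of $\sum_i(\cdot)l_i$ into the $dF$ integrals matches the paper's computation exactly.
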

\begin{proof}
By Lemma~\ref{l1}, the expected cumulative cost of the vehicle for traveling the edges on the TSP tour $T^*$ is 
$
\sum_{e\in E(T^*)}\lra{a\cdot w(e)+b\cdot\frac{\lambda+\delta}{2}\cdot w(e)}=a\cdot w(T^*)+b\cdot\frac{\lambda+\delta}{2}\cdot w(T^*)\leq a\cdot\alpha\cdot\tau+b\cdot\frac{\lambda+\delta}{2}\cdot\alpha\cdot\tau,
$
where the inequality follows from $w(T^*)\leq \alpha\cdot\tau$ since $T^*$ is an $\alpha$-approximate TSP tour.

Define the set of \emph{small customers} as $V_s\coloneqq\{v_i\mid 0<d_i\leq \delta\}$, \emph{big customers} as $V_b\coloneqq\{v_i\mid \delta<d_i\leq \lambda-\delta\}$, \emph{large customers} as $V_l\coloneqq\{v_i\mid \lambda-\delta<d_i\leq \lambda\}$, and \emph{huge customers} as $V_h\coloneqq\{v_i\mid \delta<\lambda\leq 1\}$.

By Lemma~\ref{l2}, when serving customers in $V'$, the expected cumulative cost due to the possible additional visit(s) to $v_0$ is $a\cdot\lra{\sum_{v_i\in V_s}\frac{2d_i}{\lambda-\delta}\cdot l_i+\sum_{v_i\in V_b}\frac{4d_i-2\delta}{\lambda-\delta}\cdot l_i+\sum_{v_i\in V_l}\frac{2d_i+2\lambda-4\delta}{\lambda-\delta}\cdot l_i+\sum_{v_i\in V_h}2\cdot l_i}+b\cdot\lra{\sum_{v_i\in V_s}\frac{(\lambda+\delta)\cdot d_i-d^2_i}{\lambda-\delta}\cdot l_i+\sum_{v_i\in V_b}\frac{d^2_i+(\lambda-\delta)\cdot d_i}{\lambda-\delta}\cdot l_i+\sum_{v_i\in V_l}\frac{2d^2_i-(\lambda+\delta)\cdot d_i+\lambda^2-\delta^2}{\lambda-\delta}\cdot l_i+\sum_{v_i\in V_h}d_i\cdot l_i}$.

Therefore, $ALG.1(\lambda,\delta)$ generates a solution $\T$ with an expected cumulative cost of
\begin{align*}
&\frac{a\cdot\lrA{\alpha\cdot\tau+\sum_{v_i\in V_s}\frac{2d_i}{\lambda-\delta}\cdot l_i+\sum_{v_i\in V_b}\frac{4d_i-2\delta}{\lambda-\delta}\cdot l_i+\sum_{v_i\in V_l}\frac{2d_i+2\lambda-4\delta}{\lambda-\delta}\cdot l_i+\sum_{v_i\in V_h}2\cdot l_i}}{LB}\cdot LB\\
&\quad+\frac{b\cdot\lrA{\frac{\lambda+\delta}{2}\cdot\alpha\cdot\tau+\sum_{v_i\in V_s}\frac{(\lambda+\delta)\cdot d_i-d^2_i}{\lambda-\delta}\cdot l_i+\sum_{v_i\in V_b}\frac{d^2_i+(\lambda-\delta)\cdot d_i}{\lambda-\delta}\cdot l_i}}{LB}\cdot LB\\
&\quad+\frac{b\cdot\lrA{\sum_{v_i\in V_l}\frac{2d^2_i-(\lambda+\delta)\cdot d_i+\lambda^2-\delta^2}{\lambda-\delta}\cdot l_i+\sum_{v_i\in V_h}d_i\cdot l_i}}{LB}\cdot LB\\
&=\frac{\gamma\cdot\lrA{\alpha\cdot\sigma+\int^{\delta}_0\frac{x}{\lambda-\delta}dF(x)+\int^{\lambda-\delta}_{\delta}\frac{2x-\delta}{\lambda-\delta}dF(x)+\int^\lambda_{\lambda-\delta}\frac{x+\lambda-2\delta}{\lambda-\delta}dF(x)+\int_\lambda^11dF(x)}}{\gamma\cdot\max\lrC{\sigma,\ 1}+0.5}\cdot LB\\
&\quad+\frac{\lrA{\frac{\lambda+\delta}{2}\cdot\alpha\cdot\sigma+\int^{\delta}_0\frac{(\lambda+\delta)x-x^2}{2(\lambda-\delta)}dF(x)+\int^{\lambda-\delta}_{\delta}\frac{x^2+(\lambda-\delta)x}{2(\lambda-\delta)}dF(x)}}{\gamma\cdot\max\lrC{\sigma,\ 1}+0.5}\cdot LB\\
&\quad+\frac{\lrA{\int^\lambda_{\lambda-\delta}\frac{2x^2-(\lambda+\delta)\cdot x+\lambda^2-\delta^2}{2(\lambda-\delta)}dF(x)+\int_\lambda^1\frac{x}{2}dF(x)}}{\gamma\cdot\max\lrC{\sigma,\ 1}+0.5}\cdot LB,
\end{align*}
where the equality follows from dividing both the numerator and denominator by $b\cdot\eta$, using $\gamma=a/b$, $\sigma=\tau/\eta$, and $LB=a\cdot\max\{\tau, \eta\}+b\cdot\frac{1}{2}\cdot\eta$ by Lemma~\ref{lb}, and equitation (\ref{int}).
\end{proof}

\subsubsection{The application}
Next, we use $ALG.1(\lambda,\delta)$ to deign approximation algorithms for $a,b>0$, i.e., $\gamma>0$. 

\textbf{An $(\alpha+2)$-approximation algorithm.}
When the vehicle traveling along each edge of the TSP tour in $ALG.1(\lambda,\delta)$, it always carries at least $\delta$ units of goods in total, resulting in a large cumulative cost for the case where $\gamma$ is small. 
Hence, intuitively, if $\gamma$ is small, we simply set $\delta=0$. We may first analyze $ALG.1(\lambda,\delta)$ under the simplified setting where $\delta=0$.
Surprisingly, even setting $\delta=0$, we find that $ALG.1(\lambda, \delta)$ can be used to obtain an $(\alpha+2)$-approximation algorithm for unsplittable Cu-VRPSD. The algorithm is shown in Algorithm~\ref{APP0}.

\begin{algorithm}[ht]
\caption{An $(\alpha+2)$-approximation algorithm for unsplittable Cu-VRPSD}
\label{APP0}
\small
\vspace*{2mm}
\textbf{Input:} An instance of unsplittable Cu-VRPSD. \\
\textbf{Output:} A feasible solution to unsplittable Cu-VRPSD.

\begin{algorithmic}[1]
\State Obtain a solution $\T$ by calling $ALG.1(\lambda, 0)$, where $\lambda=\min\{1,4\gamma/\alpha\}$.
\State Return $\T$.
\end{algorithmic}
\end{algorithm}


\begin{theorem}\label{unttt0}
For unsplittable Cu-VRPSD, there is a randomized $(\alpha+2)$-approximation algorithm.
\end{theorem}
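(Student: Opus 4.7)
The plan is to invoke Lemma~\ref{t1} with $\delta=0$ and the prescribed $\lambda=\min\{1,4\gamma/\alpha\}$, then apply Lemma~\ref{mid}. By Lemma~\ref{t1}, it suffices to show that for every realization $d\in\mathbb{R}^{V'}_{[0,1]}$,
\[
A+B\ \le\ (\alpha+2)\lrA{\gamma\cdot\max\{\sigma,1\}+1/2}.
\]
Setting $\delta=0$ kills the middle two intervals in the definitions of $A$ and $B$, leaving only contributions from $[0,\lambda]$ and $(\lambda,1]$. Writing $q_1\coloneqq\int_0^\lambda x\,dF(x)$ and $q_2\coloneqq\int_\lambda^1 x\,dF(x)$, the identity $\int_0^1 x\,dF(x)=1$ forces $q_1+q_2=1$, and the proof splits naturally by which branch of the minimum defines $\lambda$.

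First I would handle the regime $\gamma\ge\alpha/4$, where $\lambda=1$. Here $q_1=1$, $q_2=0$, $\int_\lambda^1 1\,dF(x)=0$, and $\int_0^1 x^2\,dF(x)\le 1$ by (\ref{intineq}). Direct substitution yields $A\le\gamma\alpha\sigma+2\gamma$ and $B\le\alpha\sigma/2+1$, so
\[
A+B\ \le\ (\alpha\sigma+2)\lrA{\gamma+1/2}.
\]
When $\sigma\le 1$ this is at most $(\alpha+2)(\gamma+1/2)$. When $\sigma>1$, the desired inequality $(\alpha\sigma+2)(\gamma+1/2)\le(\alpha+2)(\gamma\sigma+1/2)$ rearranges to $2\gamma(\sigma-1)\ge\alpha(\sigma-1)/2$, i.e.\ $\gamma\ge\alpha/4$, which is precisely the case hypothesis. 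This is also what motivates the threshold $\alpha/4$ in the definition of $\lambda$.

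For the complementary regime $\gamma<\alpha/4$, I would use $\lambda=4\gamma/\alpha<1$. Applying (\ref{intineq}) gives $\int_\lambda^1 1\,dF(x)\le q_2/\lambda$ and $\int_0^\lambda x^2\,dF(x)\le\lambda q_1$, and routine simplification yields
\[
A+B\ \le\ \alpha\sigma\lrA{\gamma+\lambda/2}+\lrA{2\gamma/\lambda+1}q_1+\lrA{\gamma/\lambda+1/2}q_2.
\]
The choice $\lambda=4\gamma/\alpha$ is engineered so that, simultaneously, $\gamma+\lambda/2=\gamma(\alpha+2)/\alpha$, $2\gamma/\lambda+1=(\alpha+2)/2$, and $\gamma/\lambda+1/2=(\alpha+2)/4$. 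Substituting and using $q_1+q_2=1$ together with $q_2/4\le q_2/2$ gives $A+B\le(\alpha+2)(\gamma\sigma+q_1/2+q_2/4)\le(\alpha+2)(\gamma\sigma+1/2)\le(\alpha+2)(\gamma\max\{\sigma,1\}+1/2)$.

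The main obstacle is not any individual estimate but identifying the right value of $\lambda$: the three coefficient-matching identities in the small-$\gamma$ case have a simultaneous solution only at $\lambda=4\gamma/\alpha$, and the cutoff $\gamma=\alpha/4$ is exactly where this value reaches $1$, so the two regimes glue seamlessly at the boundary. Combining both cases and invoking Lemma~\ref{mid} then upgrades the per-realization bound to $\EE{Cu(\T)}\le(\alpha+2)\EE{Cu(\T^*)}$, completing the proof.
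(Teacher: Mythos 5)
Your proposal is correct and follows essentially the same route as the paper: invoke Lemma~\ref{t1} with $\delta=0$ and $\lambda=\min\{1,4\gamma/\alpha\}$, bound the resulting integrals via (\ref{intineq}), and split into the cases $\gamma\ge\alpha/4$ (with subcases $\sigma\le1$ and $\sigma>1$) and $\gamma<\alpha/4$, concluding with Lemma~\ref{mid}. The only differences are cosmetic (you keep $q_1,q_2$ separate and match coefficients, whereas the paper first collapses to $\gamma(\alpha\sigma+2/\lambda)+\frac{\lambda}{2}\alpha\sigma+1$; also, $\delta=0$ kills the first and third intervals, not the ``middle two''), and the argument goes through as written.
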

\begin{proof}
Algorithm~\ref{APP0} calls $ALG.1(\lambda, 0)$. By Lemma~\ref{t1}, the approximation ratio of $ALG.1(\lambda, 0)$ is at most
\begin{equation}\label{eqs}
\begin{split}
&\frac{\gamma\cdot\lrA{\alpha\cdot\sigma+\int^{\lambda}_{0}\frac{2x}{\lambda}dF(x)+\int^1_{\lambda}1dF(x)}+\left(\frac{\lambda}{2}\cdot\alpha\cdot\sigma+\int^{\lambda}_{0}\frac{x^2+\lambda\cdot x}{2\cdot \lambda}dF(x)+\int^1_{\lambda}\frac{x}{2}dF(x)\right)}{\gamma\cdot\max\lrC{\sigma,\ 1}+0.5}\\
&\leq \frac{\gamma\cdot\lrA{\alpha\cdot\sigma+\int^{\lambda}_{0}\frac{2x}{\lambda}dF(x)+\int^1_{\lambda}\frac{2x}{\lambda}dF(x)}+\left(\frac{\lambda}{2}\cdot\alpha\cdot\sigma+\int^{\lambda}_{0}\frac{\lambda\cdot x+\lambda\cdot x}{2\cdot \lambda}dF(x)+\int^1_{\lambda}xdF(x)\right)}{\gamma\cdot\max\lrC{\sigma,\ 1}+0.5}\\
&=\frac{\gamma\cdot\lrA{\alpha\cdot\sigma+\frac{2}{\lambda}}+\left(\frac{\lambda}{2}\cdot\alpha\cdot\sigma+1\right)}{\gamma\cdot\max\lrC{\sigma,\ 1}+0.5},
\end{split}
\end{equation}
where the inequality follows from $\int^1_{\lambda}1dF(x)\leq \int^1_{\lambda}\frac{2x}{\lambda}dF(x)$, $\int^{\lambda}_{0}\frac{x^2+\lambda\cdot x}{2\cdot \lambda}dF(x)\leq\int^{\lambda}_{0}\frac{\lambda\cdot x+\lambda\cdot x}{2\cdot \lambda}dF(x)$, and $\int^1_{\lambda}\frac{x}{2}dF(x)\leq \int^1_{\lambda}xdF(x)$ by (\ref{intineq}).

Since $\lambda=\min\{1,4\gamma/\alpha\}$, we consider the following two cases.

\textbf{Case~1: $4\gamma/\alpha\geq 1$.} We have $\lambda=1$ and we further consider the following two cases. 

\textbf{Case~1.1: $\sigma\leq 1$.} By (\ref{eqs}), the approximation ratio of $ALG.1(\lambda, 0)$ is at most 
\begin{align*}
\frac{\gamma\cdot\lrA{\alpha\cdot\sigma+\frac{2}{\lambda}}+\left(\frac{\lambda}{2}\cdot\alpha\cdot\sigma+1\right)}{\gamma\cdot\max\lrC{\sigma,\ 1}+0.5}=\frac{\gamma\cdot\lrA{\alpha\cdot\sigma+2}+\left(\frac{1}{2}\cdot\alpha\cdot\sigma+1\right)}{\gamma\cdot\max\lrC{\sigma,\ 1}+0.5} &\leq\frac{\gamma\cdot\lrA{\alpha+2}+\left(\frac{1}{2}\cdot\alpha+1\right)}{\gamma+0.5}=\alpha+2,
\end{align*}
where the first inequality follows from $\sigma\leq 1$ and $\max\{\sigma,1\}=1$.

\textbf{Case~1.2: $\sigma\geq 1$.} By (\ref{eqs}), the approximation ratio of $ALG.1(\lambda, 0)$ is at most 
\begin{align*}
\frac{\gamma\cdot\lrA{\alpha\cdot\sigma+2}+\left(\frac{1}{2}\cdot\alpha\cdot\sigma+1\right)}{\gamma\cdot\max\lrC{\sigma,\ 1}+0.5} &=\frac{\gamma\cdot\lrA{\alpha\cdot\sigma+2}+\left(\frac{1}{2}\cdot\alpha\cdot\sigma\cdot\frac{1}{\sigma}+\frac{1}{2}\cdot\alpha\cdot\sigma\cdot\lrA{1-\frac{1}{\sigma}}+1\right)}{\gamma\cdot\sigma+0.5}\\
&\leq\frac{\gamma\cdot\lrA{\alpha\cdot\sigma+2}+\left(
\frac{1}{2}\cdot\alpha+\frac{2}{\alpha}\cdot\gamma\cdot\alpha\cdot\sigma\cdot\lrA{1-\frac{1}{\sigma}}+1\right)}{\gamma\cdot\sigma+0.5}\\
&=\frac{\gamma\cdot\sigma\cdot\lrA{\alpha+2}+\left(\frac{1}{2}\cdot\alpha+1\right)}{\gamma\cdot\sigma+0.5}=\alpha+2,
\end{align*}
where the first equality follows from $\sigma\geq 1$ and the first inequality from $\frac{1}{2}\leq \frac{2}{\alpha}\cdot\gamma$ by the definition of Case~1.

\textbf{Case~2: $4\gamma/\alpha\leq 1$.} We have $\lambda=4\gamma/\alpha$. Similarly, by (\ref{eqs}), the approximation ratio of $ALG.1(\lambda, 0)$ is at most 
\begin{align*}
\frac{\gamma\cdot\lrA{\alpha\cdot\sigma+\frac{2}{\lambda}}+\left(\frac{\lambda}{2}\cdot\alpha\cdot\sigma+1\right)}{\gamma\cdot\max\lrC{\sigma,\ 1}+0.5}&=\frac{\gamma\cdot\lrA{\alpha\cdot\sigma+\frac{\alpha}{2}\cdot\frac{1}{\gamma}}+\left(\frac{2}{\alpha}\cdot\gamma\cdot\alpha\cdot\sigma+1\right)}{\gamma\cdot\max\lrC{\sigma,\ 1}+0.5}\\
&=\frac{\gamma\cdot\sigma\cdot(\alpha+2)+\frac{1}{2}(\alpha+2)}{\gamma\cdot\max\lrC{\sigma,\ 1}+0.5}\leq\alpha+2,
\end{align*}
where the inequality follows from $\sigma\leq \max\{\sigma,\ 1\}$.

Therefore, Algorithm~\ref{APP0} is a randomized $(\alpha+2)$-approximation algorithm. Moreover, its running time is dominated by computing an $\alpha$-approximate TSP tour.
\end{proof}

\textbf{Further Improvements.}
Next, we aim to design better approximation algorithms. 

As shown in the proof of Theorem~\ref{unttt0}, if we call $ALG.1(\lambda,\delta)$ with $\delta=0$ and $\lambda$ being a unique value, in the worst case we may have $\int^\lambda_0x^2dF(x)=\int^\lambda_0\lambda\cdot xdF(x)$, i.e., almost all customers $v_i$ with $l_i>0$ and $d_i\leq\lambda$ have a demand of $d_i=\lambda$. Consequently, we may obtain $\int^{\theta\cdot\lambda}_0xdF(x)=0$ for any fixed $\theta\in(0,1)$, as will be shown in Lemma~\ref{a1}, and then $ALG.1(\theta\cdot \lambda,\delta)$ with $\delta=0$ and some $\theta\in(0,1)$ may generate a better solution. This suggests an improved approximation algorithm for unsplittable Cu-VRPSD shown in Algorithm~\ref{APP1}, denoted as $APPROX.1(\lambda,\theta,p)$. 

\begin{remark}
In $APPROX.1(\lambda,\theta,p)$, wejust use a two-point distributed algorithm deployment because it is easier to analyze, though it may perform worse than a multi-point distributed algorithm deployment. Further discussion on this choice will be provided at the end of this section.
\end{remark}

\begin{algorithm}[ht]
\caption{An approximation algorithm for unsplittable Cu-VRPSD ($APPROX.1(\lambda,\theta,p)$)}
\label{APP1}
\small
\vspace*{2mm}
\textbf{Input:} An instance of unsplittable Cu-VRPSD, and three parameters $\lambda\in(0,1]$, $\theta\in(0,1)$ and $p\in(0,1)$.\\
\textbf{Output:} A feasible solution to unsplittable Cu-VRPSD.
\begin{algorithmic}[1]
\State Call $ALG.1(\lambda,0)$ with a probability of $p$ and call $ALG.1(\theta\cdot\lambda,0)$ with a probability of $1-p$.
\end{algorithmic}
\end{algorithm}

Then, our goal is to find $(\lambda,\theta,p)$ minimizing the approximation ratio of $APPROX.1(\lambda,\theta,p)$.

\begin{lemma}\label{a1}
For any $\theta\in(0,1)$, we have $\int^{\theta\cdot\lambda}_0 xdF(x)\leq \frac{1}{\lambda-\theta\cdot\lambda}\lra{\int^\lambda_0\lambda\cdot xdF(x)-\int^\lambda_0x^2dF(x)}$.   
\end{lemma}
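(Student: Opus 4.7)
The plan is to prove the inequality by manipulating the identity $\int^{\lambda}_{0} (\lambda x - x^2)\,dF(x)$ and exploiting the pointwise bound that $\lambda - d_i$ is large whenever $d_i$ lies in the small sub-interval $(0,\theta\lambda]$. Unpacking the notation in (\ref{int}), the right-hand side multiplied by $(\lambda-\theta\lambda)$ equals
\[
\int^{\lambda}_{0}\lambda\cdot x\, dF(x)-\int^{\lambda}_{0}x^{2}\, dF(x)
=\frac{\sum_{v_i\in V':0<d_i\leq\lambda}2\cdot d_i(\lambda-d_i)\cdot l_i}{\sum_{v_i\in V'}2\cdot d_i\cdot l_i},
\]
so the quantity I must lower-bound is a weighted sum of $d_i(\lambda-d_i)l_i$ over customers with $0<d_i\leq\lambda$.

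First, I would split the range $(0,\lambda]$ into the two pieces $(0,\theta\lambda]$ and $(\theta\lambda,\lambda]$. Since every $d_i$ in either piece satisfies $d_i\leq\lambda$, the factor $d_i(\lambda-d_i)$ is non-negative on the whole domain, so discarding the contribution from $(\theta\lambda,\lambda]$ only weakens the expression. This reduces the task to lower-bounding $\int^{\theta\lambda}_{0}(\lambda x - x^{2})\,dF(x) = \int^{\theta\lambda}_{0}x(\lambda-x)\,dF(x)$.

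Next, I apply the pointwise inequality $\lambda - d_i \geq \lambda - \theta\lambda$ valid for every $d_i\in(0,\theta\lambda]$. Plugging this into each summand $2\cdot d_i(\lambda-d_i)l_i$ and factoring the constant $(\lambda-\theta\lambda)$ out of the numerator yields
\[
\int^{\theta\lambda}_{0}x(\lambda-x)\,dF(x)\;\geq\;(\lambda-\theta\lambda)\int^{\theta\lambda}_{0}x\, dF(x).
\]
Combining this with the previous step and dividing by the positive quantity $\lambda-\theta\lambda$ gives exactly the desired inequality. There is no real obstacle: the argument is essentially a one-line bound $\lambda-d_i\geq\lambda-\theta\lambda$ on the sub-range, preceded by dropping a non-negative tail, so the only care needed is unwrapping the nonstandard weighted-sum notation (\ref{int}) correctly and checking that the $(0,\theta\lambda]$ restriction matches the indexing in the definition of $\int^{\theta\lambda}_{0}x\,dF(x)$.
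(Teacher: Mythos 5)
Your proposal is correct and is essentially the paper's own argument in a different packaging: the paper bounds $\int_0^\lambda x^2\,dF(x)$ from above by $\theta\lambda\int_0^{\theta\lambda}x\,dF(x)+\lambda\int_{\theta\lambda}^{\lambda}x\,dF(x)$ and rearranges, which is algebraically identical to your lower bound on $\int_0^\lambda x(\lambda-x)\,dF(x)$ obtained by dropping the $(\theta\lambda,\lambda]$ piece and using $\lambda-x\geq\lambda-\theta\lambda$ on $(0,\theta\lambda]$. No gap; the interval conventions also match the half-open indexing in the definition of the weighted sums.
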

\begin{proof}
By (\ref{int}) and (\ref{intineq}), we have 
\begin{align*}
\int^\lambda_0x^2dF(x)&=\int^{\theta\cdot\lambda}_0x^2dF(x)+\int^\lambda_{\theta\cdot\lambda} x^2dF(x)\\
&\leq \theta\cdot\lambda\cdot \int^{\theta\cdot\lambda}_0 xdF(x)+\lambda\cdot\int^\lambda_{\theta\cdot\lambda} xdF(x)=\lambda\cdot\int^{\lambda}_0 xdF(x)-(\lambda-\theta\cdot\lambda)\cdot \int^{\theta\cdot\lambda}_0 xdF(x)   
\end{align*}
Hence, we have $\int^{\theta\cdot\lambda}_0 xdF(x)\leq \frac{1}{\lambda-\theta\cdot\lambda}\lra{\int^\lambda_0\lambda\cdot xdF(x)-\int^\lambda_0x^2dF(x)}$.
\end{proof}

\begin{theorem}\label{t2}
For unsplittable Cu-VRPSD, under $\alpha=1.5$, we can find $(\lambda,\theta,p)$ such that the approximation ratio of $APPROX.1(\lambda,\theta,p)$ is bounded by $10/3$ for any $\gamma\in(0,0.375]$ and $3.456$ for any $\gamma\in(0.375,1.444]$.
\end{theorem}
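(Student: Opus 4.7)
My plan is to analyze $APPROX.1(\lambda,\theta,p)$ by writing its expected approximation ratio as the $p$-weighted average of two bounds obtained from Lemma~\ref{t1} (with $\delta=0$) applied to $ALG.1(\lambda,0)$ and $ALG.1(\theta\lambda,0)$, and then reduce the worst-case analysis to one over two free quantities: $\sigma$ and the distribution $F$ restricted to $[0,\lambda]$. Specifically, I will plug $\delta=0$ into the expression in Lemma~\ref{t1} to get, for any $\mu\in(0,1]$, that the ratio of $ALG.1(\mu,0)$ conditioned on $\chi=d$ equals
\begin{equation*}
\frac{\gamma\bigl(\alpha\sigma+\tfrac{2}{\mu}\int_0^\mu x\,dF+\int_\mu^1 1\,dF\bigr)+\bigl(\tfrac{\mu}{2}\alpha\sigma+\tfrac{1}{2\mu}\int_0^\mu(x^2+\mu x)\,dF+\tfrac{1}{2}\int_\mu^1 x\,dF\bigr)}{\gamma\max\{\sigma,1\}+0.5}.
\end{equation*}
Taking the $p$-weighted average of the bound at $\mu=\lambda$ and the bound at $\mu=\theta\lambda$, the only ``distribution-dependent'' pieces are $I_1:=\int_0^\lambda x\,dF$, $I_2:=\int_0^\lambda x^2\,dF$, $I_0:=\int_0^{\theta\lambda}x\,dF$, and $\int_\lambda^1 1\,dF$; using $\int_0^1 x\,dF=1$ I will eliminate $\int_\lambda^1 x\,dF = 1-I_1$.

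The second step is to use Lemma~\ref{a1} to kill $I_0$, replacing $I_0$ by $(\lambda I_1-I_2)/(\lambda-\theta\lambda)$ wherever it appears with a positive coefficient (after combining the two weighted ratios, the coefficient of $I_0$ should indeed be nonnegative by the choice of $p$ and $\theta$). I will also apply the bound $\int_\lambda^1 1\,dF\le (1-I_1)/\lambda$ from \eqref{intineq} so that everything is expressed in the variables $I_1\in[0,1]$, $I_2\in[0,\lambda I_1]$, and $\sigma\ge 0$. A short monotonicity argument, together with the relation $\int_\lambda^1 x\,dF\le\int_\lambda^1 1\,dF$ already exploited in Theorem~\ref{unttt0}, reduces the worst case to the endpoints of $I_2$; in particular the hard direction is $I_2=\lambda I_1$, which is the bad case motivating the two-point randomization.

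The third step is the case analysis in $\sigma$. The denominator is piecewise linear with breakpoint $\sigma=1$, and both the numerator and denominator are linear in $\sigma$ on each side of this breakpoint, so the supremum in $\sigma$ is attained at $\sigma\in\{1,\infty\}$ (and is independent of $\sigma$ on each piece once the coefficients are matched, exactly as in Case~1.2 of Theorem~\ref{unttt0}). This leaves a bound that depends only on $(\gamma,\lambda,\theta,p,I_1)$. Finally, for each of the two ranges of $\gamma$ I propose the following concrete choices, obtained by matching the ratios at $I_1=0$ and $I_1=1$: for $\gamma\in(0,0.375]$ set $\lambda=1$ and solve $\alpha p+2p+\alpha(1-p)/\theta\le\tfrac{10}{3}(\gamma+\tfrac12)/\gamma$ together with the worst-case bound at $I_1=0$ to obtain $(p,\theta)$; for $\gamma\in(0.375,1.444]$ one takes $\lambda=1$ and optimizes $(p,\theta)$ numerically so that the two extreme values of $I_1$ give equal ratio $3.456$. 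Both are elementary one-variable optimizations once the reductions above are in place.

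The main obstacle is the second step: verifying that after the $p$-weighted combination the coefficient of $I_0$ is nonnegative (so that Lemma~\ref{a1} can be applied in the right direction), and simultaneously that the coefficient structure in $I_1,I_2$ makes $I_2=\lambda I_1$ the worst case uniformly on the relevant $\gamma$-range. Once this monotonicity is established, the remaining calculations are routine substitutions; the numerical target $3.456$ corresponds to the value at which the two extremal distributions ($I_1\approx 0$, all mass above $\lambda$, and $I_1=1$ with $I_2=\lambda I_1$) are balanced by the choice of $p$.
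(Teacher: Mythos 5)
Your overall strategy coincides with the paper's: bound $ALG.1(\lambda,0)$ and $ALG.1(\theta\lambda,0)$ via Lemma~\ref{t1} with $\delta=0$, link the two bounds through Lemma~\ref{a1}, cancel the distribution-dependent term by the choice of $p$, and reduce the maximization over $\sigma$ to $\sigma\in\{1,\infty\}$. But there is a concrete error in your parameter choice for the range $\gamma\in(0,0.375]$: you set $\lambda=1$ there, and no choice of $(\theta,p)$ can then give $10/3$. From the combined ratio (cf.\ (\ref{ratio1})) one has $R(\infty)=\alpha+\frac{(p+(1-p)\theta)\lambda\alpha}{2\gamma}$, so with $\lambda=1$ you are forced to take $p+(1-p)\theta\le \frac{22\gamma}{9}$, hence $\theta=O(\gamma)$ and $p=O(\gamma)$; but then the additional-visit term $\gamma\cdot(1-p)\cdot\frac{1}{\theta\lambda}\cdot\frac{2\lambda-\theta\lambda}{\lambda-\theta\lambda}$ in the numerator of $R(1)$ stays at least $\frac{9}{11}-o(1)$ while the remaining constant tends to $1$ and the denominator tends to $\frac12$, giving $R(1)\to\frac{40}{11}>\frac{10}{3}$ as $\gamma\to 0$. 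The paper avoids this by taking $\lambda=\min\{1,4\gamma/\alpha\}=4\gamma/\alpha$ on this range, so that $\lambda$ scales with $\gamma$; with $\theta=\frac12$ and $p=\frac56$ one then gets $R(\sigma)\equiv\frac{10}{3}$ identically. Your choice $\lambda=1$ is correct only on the second range $\gamma\in(0.375,1.444]$, where the paper takes $\theta=0.6677$.

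Separately, the step you flag as the ``main obstacle'' is exactly where the argument must be made precise, and your sketch points in a slightly wrong direction. Lemma~\ref{a1} is applied to the bound for $ALG.1(\theta\lambda,0)$ \emph{alone}, where $I_0=\int_0^{\theta\lambda}x\,dF(x)$ appears with a nonnegative coefficient, so no post-combination sign check is needed. Moreover the worst case is not pushed to an endpoint of $I_2$: writing $\mu=I_2/I_1$, the bound for $ALG.1(\lambda,0)$ is increasing in $\mu$ while (via Lemma~\ref{a1}) the bound for $ALG.1(\theta\lambda,0)$ is decreasing in $\mu$, and $p$ is chosen precisely so that the $\mu$-coefficients cancel, i.e.\ $p\cdot\frac{1}{2\lambda}=(1-p)\bigl(\frac{\gamma}{\theta\lambda(\lambda-\theta\lambda)}+\frac{1}{2(\lambda-\theta\lambda)}\bigr)$, after which the combined bound is distribution-free. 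Until you either carry out this cancellation or rigorously justify your endpoint/monotonicity claim, and repair the choice of $\lambda$ on $(0,0.375]$, the proof is incomplete.
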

\begin{proof}
If we call $ALG.1(\lambda,\delta)$ with $\delta=0$ and $\lambda$ being a unique value, as shown in the proof of Theorem~\ref{unttt0}, a good choice for $\lambda$ is $\min\{1, 4\gamma/\alpha\}$. 
For the sake of analysis, we directly set $\lambda=\min\{1, 4\gamma/\alpha\}$.

If $\int^\lambda_0xdF(x)=0$, we can obtain $\int^\lambda_0x^2dF(x)\leq \lambda\cdot \int^\lambda_0xdF(x)=0$.
Hence, we define $\mu\coloneqq0$ if $\int^\lambda_0xdF(x)=0$, and $\mu\coloneqq\frac{\int^\lambda_0x^2dF(x)}{\int^\lambda_0xdF(x)}$ otherwise.

By Lemmas~\ref{mid} and \ref{t1}, the approximation ratio of $ALG.1(\lambda,0)$ is at most
\begin{align*}
&\max_{\sigma\geq0}\frac{\gamma\cdot\lrA{\alpha\cdot\sigma+\int^{\lambda}_{0}\frac{2x}{\lambda}dF(x)+\int_\lambda^1 1dF(x)}+\lrA{\frac{\lambda}{2}\cdot\alpha\cdot\sigma+\int^{\lambda}_{0}\frac{x^2+\lambda\cdot x}{2\lambda}dF(x)+\int_\lambda^1\frac{x}{2}dF(x)}}{\gamma\cdot\max\lrC{\sigma,\ 1}+0.5}\\
&=\max_{\sigma\geq0}\frac{\gamma\cdot\lrA{\alpha\cdot\sigma+\frac{2}{\lambda}\cdot\int_{0}^{\lambda}xdF(x)+\int_{\lambda}^{1}1dF(x)}+\lrA{\frac{\lambda}{2}\cdot\alpha\cdot\sigma+\frac{1+\mu/\lambda}{2}\cdot\int_{0}^{\lambda}xdF(x)+\frac{1}{2}\cdot\int_{\lambda}^{1}xdF(x)}}{\gamma\cdot\max\lrC{\sigma,\ 1}+0.5}\\
&\leq\max_{\sigma\geq0} \frac{\gamma\cdot\lrA{\alpha\cdot\sigma+\frac{2}{\lambda}\cdot\int_{0}^{\lambda}xdF(x)+\frac{1}{\lambda}\cdot\int_{\lambda}^{1}xdF(x)}+\lrA{\frac{\lambda}{2}\cdot\alpha\cdot\sigma+\frac{1+\mu/\lambda}{2}\cdot\int_{0}^{\lambda}xdF(x)+\frac{1}{2}\cdot\int_{\lambda}^{1}xdF(x)}}{\gamma\cdot\max\lrC{\sigma,\ 1}+0.5}\\
&=\max_{\sigma\geq0} \frac{\gamma\cdot\lrA{\alpha\cdot\sigma+\frac{1}{\lambda}\cdot\int_{0}^{\lambda}xdF(x)+\frac{1}{\lambda}}+\lrA{\frac{\lambda}{2}\cdot\alpha\cdot\sigma+\frac{\mu/\lambda}{2}\cdot\int_{0}^{\lambda}xdF(x)+\frac{1}{2}}}{\gamma\cdot\max\lrC{\sigma,\ 1}+0.5}\\
&\leq\max_{\sigma\geq0} \frac{\gamma\cdot\lrA{\alpha\cdot\sigma+\frac{2}{\lambda}}+\lrA{\frac{\lambda}{2}\cdot\alpha\cdot\sigma+\frac{\mu/\lambda+1}{2}}}{\gamma\cdot\max\lrC{\sigma,\ 1}+0.5},
\end{align*}
where the first equality follows from the definition of $\mu$, the second equality from $\int^1_0xdF(x)=1$ by (\ref{int}), the first inequality from $\int_{\lambda}^{1}1dF(x)\leq \frac{1}{\lambda}\cdot\int_{\lambda}^{1}xdF(x)$ by (\ref{intineq}), and the second inequality from $\int_{0}^{\lambda}xdF(x)\leq \int_{0}^{1}xdF(x)=1$.

Since $\int_{0}^{\lambda}xdF(x)\leq \int_{0}^{1}xdF(x)=1$, by Lemma~\ref{a1}, we have 
\begin{equation}\label{eq1}
\begin{split}
\int^{\theta\cdot\lambda}_0 xdF(x)&\leq \frac{1}{\lambda-\theta\cdot\lambda}\cdot\lrA{\int^\lambda_0\lambda\cdot xdF(x)-\int^\lambda_0x^2dF(x)}=\frac{\lambda-\mu}{\lambda-\theta\cdot\lambda}\cdot\int^\lambda_0xdF(x)\leq \frac{\lambda-\mu}{\lambda-\theta\cdot\lambda}  
\end{split}
\end{equation}

Similarly, the approximation ratio of $ALG.1(\theta\cdot\lambda,0)$ is at most
\begin{align*}
&\max_{\sigma\geq0}\frac{\gamma\cdot\lrA{\alpha\cdot\sigma+\int_{0}^{\theta\cdot\lambda}\frac{2x}{\theta\cdot\lambda}dF(x)+\int_{\theta\cdot\lambda}^{1}1dF(x)}+\lrA{\frac{\theta\cdot\lambda}{2}\cdot\alpha\cdot\sigma+\int_{0}^{\theta\cdot\lambda}\frac{x^2+\theta\cdot\lambda\cdot x}{2\cdot\theta\cdot\lambda}dF(x)+\int_{\theta\cdot\lambda}^{1}\frac{x}{2}dF(x)}}{\gamma\cdot\max\lrC{\sigma,\ 1}+0.5}\\
&\leq\max_{\sigma\geq0} \frac{\gamma\cdot\lrA{\alpha\cdot\sigma+\frac{2}{\theta\cdot\lambda}\cdot\int_{0}^{\theta\cdot\lambda}xdF(x)+\frac{1}{\theta\cdot\lambda}\cdot\int_{\theta\cdot\lambda}^{1}xdF(x)}+\lrA{\frac{\theta\cdot\lambda}{2}\cdot\alpha\cdot\sigma+\int_{0}^{\theta\cdot\lambda}xdF(x)+\frac{1}{2}\cdot\int_{\theta\cdot\lambda}^{1}xdF(x)}}{\gamma\cdot\max\lrC{\sigma,\ 1}+0.5}\\
&=\max_{\sigma\geq0} \frac{\gamma\cdot\lrA{\alpha\cdot\sigma+\frac{1}{\theta\cdot\lambda}\cdot\int_{0}^{\theta\cdot\lambda}xdF(x)+\frac{1}{\theta\cdot\lambda}}+\lrA{\frac{\theta\cdot\lambda}{2}\cdot\alpha\cdot\sigma+\frac{1}{2}\cdot\int_{0}^{\theta\cdot\lambda}xdF(x)+\frac{1}{2}}}{\gamma\cdot\max\lrC{\sigma,\ 1}+0.5}\\
&\leq\max_{\sigma\geq0} \frac{\gamma\cdot\lrA{\alpha\cdot\sigma+\frac{1}{\theta\cdot\lambda}\cdot\frac{\lambda-\mu}{\lambda-\theta\cdot\lambda}+\frac{1}{\theta\cdot\lambda}}+\lrA{\frac{\theta\cdot\lambda}{2}\cdot\alpha\cdot\sigma+\frac{1}{2}\cdot\frac{\lambda-\mu}{\lambda-\theta\cdot\lambda}+\frac{1}{2}}}{\gamma\cdot\max\lrC{\sigma,\ 1}+0.5}\\
&=\max_{\sigma\geq0} \frac{\gamma\cdot\lrA{\alpha\cdot\sigma+\frac{1}{\theta\cdot\lambda}\cdot\frac{2\lambda-\mu-\theta\cdot\lambda}{\lambda-\theta\cdot\lambda}}+\lrA{\frac{\theta\cdot\lambda}{2}\cdot\alpha\cdot\sigma+\frac{1}{2}\cdot\frac{2\lambda-\mu-\theta\cdot\lambda}{\lambda-\theta\cdot\lambda}}}{\gamma\cdot\max\lrC{\sigma,\ 1}+0.5},
\end{align*}
where the first inequality follows from 
(\ref{intineq}), the second inequality from (\ref{eq1}), and the first equality from $\int^1_0xdF(x)=1$ by (\ref{int}).

Recall that in $APPROX.1(\lambda,\theta,p)$, we call $ALG.1(\lambda,0)$ (resp., $ALG.1(\theta\cdot\lambda,0)$) with a probability of $p$ (resp., $1-p$). Hence, to erase the items related to $\mu$ in the numerators of the approximation ratios of $ALG.1(\lambda,0)$ and $ALG.1(\theta\cdot\lambda,0)$, we need to set $p$ such that 
\[
p\cdot\frac{1}{2}\cdot\frac{1}{\lambda}+(1-p)\cdot\lrA{\gamma\cdot\frac{1}{\theta\cdot\lambda}\cdot\frac{-1}{\lambda-\theta\cdot\lambda}+\frac{1}{2}\cdot\frac{-1}{\lambda-\theta\cdot\lambda}}=0.
\]
Then, we can obtain $p=\frac{\frac{1}{2(\lambda-\theta\cdot\lambda)}+\frac{\gamma}{\theta\cdot\lambda(\lambda-\theta\cdot\lambda)}}{\frac{1}{2\lambda}+\frac{1}{2(\lambda-\theta\cdot\lambda)}+\frac{\gamma}{\theta\cdot\lambda(\lambda-\theta\cdot\lambda)}}$. Clearly, we have $p\in[0,1]$. Hence, the approximation ratio is $\max_{\sigma\geq 0} R(\sigma)$,
where
\begin{equation}\label{ratio1}
\begin{split}
&R(\sigma)\coloneqq\frac{\gamma\cdot\lrA{\alpha\cdot\sigma+p\cdot\frac{2}{\lambda}+(1-p)\cdot\frac{1}{\theta\cdot\lambda}\cdot\frac{2\lambda-\theta\cdot\lambda}{\lambda-\theta\cdot\lambda}}+\lrA{\frac{p\cdot\lambda+(1-p)\cdot\theta\cdot\lambda}{2}\cdot\alpha\cdot\sigma+p\cdot\frac{1}{2}+(1-p)\cdot\frac{1}{2}\cdot\frac{2\lambda-\theta\cdot\lambda}{\lambda-\theta\cdot\lambda}}}{\gamma\cdot\max\lrC{\sigma,\ 1}+0.5}.
\end{split}
\end{equation}

It can be verified that $\max_{\sigma\geq 0} R(\sigma)=\max_{\sigma\geq 1} R(\sigma)$. Moreover, since the function $\frac{a'x+b'}{c'x+d'}$ with $x\geq 1$ and $a',b',c',d'>0$ attains the maximum value only if $x=1$ or $x=\infty$, we know that the approximation ratio is bounded by 
$
\max\{R(1),\ R(\infty)\}.
$
Recall that $\lambda=\min\{1, 4\gamma/\alpha\}$ and $p=\frac{\frac{1}{2(\lambda-\theta\cdot\lambda)}+\frac{\gamma}{\theta\cdot\lambda(\lambda-\theta\cdot\lambda)}}{\frac{1}{2\lambda}+\frac{1}{2(\lambda-\theta\cdot\lambda)}+\frac{\gamma}{\theta\cdot\lambda(\lambda-\theta\cdot\lambda)}}$. Assume $\alpha=1.5$, then we have $\alpha/4=0.375$. By calculation, we have the following results.
\begin{itemize}
    \item When $\gamma\in(0,0.375]$, setting $\theta=0.5$, we have $\max\{R(1),\ R(\infty)\}\equiv 10/3$;
    \item When $\gamma\in[0.375,1.444]$, setting $\theta=0.6677$, we have $\max\{R(1),\ R(\infty)\}\leq 3.456$.
\end{itemize}

The values of $\max\{R(1),\ R(\infty)\}$ under $\lambda=\min\{1,4/\gamma/\alpha\}$, $p=\frac{\frac{1}{2(\lambda-\theta\cdot\lambda)}+\frac{\gamma}{\theta\cdot\lambda(\lambda-\theta\cdot\lambda)}}{\frac{1}{2\lambda}+\frac{1}{2(\lambda-\theta\cdot\lambda)}+\frac{\gamma}{\theta\cdot\lambda(\lambda-\theta\cdot\lambda)}}$, and different values of $\theta$ can be found in Figure~\ref{fig2}.\footnote{We computed the values and generated this figure (along with others in the paper) using Python. The source code is available at \url{https://github.com/JingyangZhao/Cu-VRPSD}.} 

\begin{figure}[ht]
    \centering
    \includegraphics[scale=0.6]{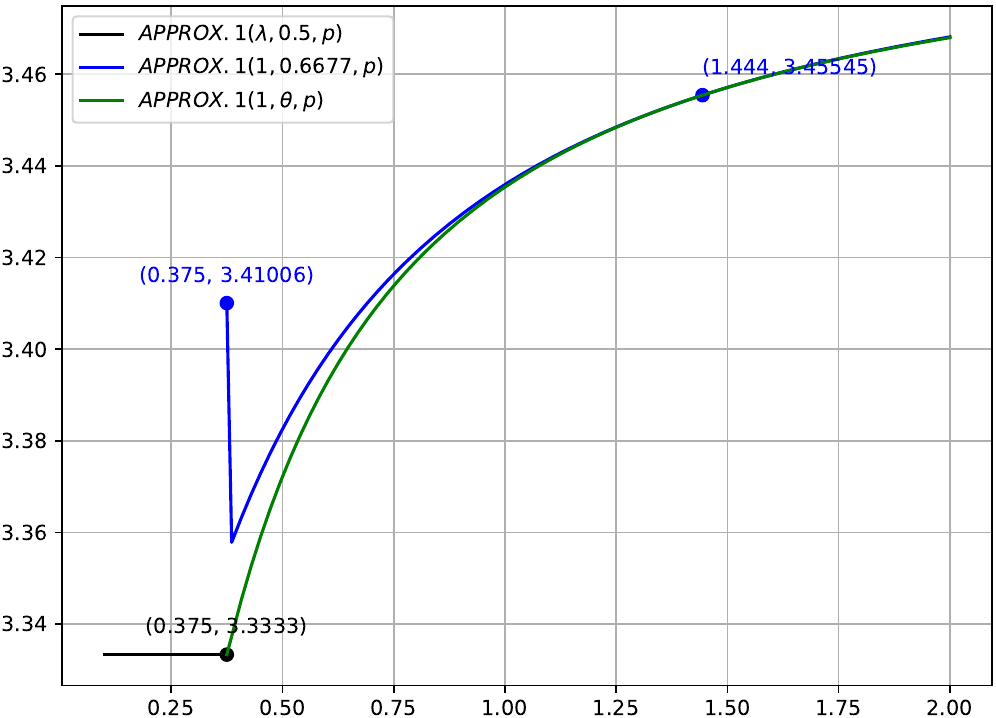}
    \caption{The approximation ratio of $APPROX.1(\lambda,\theta,p)$ under $\gamma\in(0,2)$, i.e., the value of $\max\{R(1),R(\infty)\}$, where the black line denotes the approximation ratio of $APPROX.1(\lambda,0.5,p)$ with $\lambda=4\gamma/1.5$ and $p=\frac{\frac{1}{2(\lambda-\theta\cdot\lambda)}+\frac{\gamma}{\theta\cdot\lambda(\lambda-\theta\cdot\lambda)}}{\frac{1}{2\lambda}+\frac{1}{2(\lambda-\theta\cdot\lambda)}+\frac{\gamma}{\theta\cdot\lambda(\lambda-\theta\cdot\lambda)}}$, the blue line denotes the approximation ratio of $APPROX.1(\lambda,0.6677,p)$ with $\lambda=1$ and the same $p$, and the green line denotes the approximation ratio of $APPROX.1(\lambda,\theta,p)$ with $\lambda=1$ and the same $p$ but with a more suitable $\theta$ related to $\gamma$. Specifically, for any $\gamma$, we choose the most suitable $\theta$ by enumerating the values in $\{\frac{i}{10000}\mid i\in[10000]\}$.}
    \label{fig2}
\end{figure}

The result for $\gamma\in(0,0.375]$ may be surprising. We give the details of its proof.
\begin{claim}
When $\gamma\in(0,0.375]$, setting $\theta=0.5$, we have $\max\{R(1),\ R(\infty)\}\equiv 10/3$.
\end{claim}
\begin{proof}[Claim proof]
Note that $\lambda=\min\{1, 4\gamma/\alpha\}=4\gamma/\alpha$ and $\alpha=1.5$. Setting $\theta=0.5$, we can obtain that $p=\frac{\frac{1}{2(\lambda-\theta\cdot\lambda)}+\frac{\gamma}{\theta\cdot\lambda(\lambda-\theta\cdot\lambda)}}{\frac{1}{2\lambda}+\frac{1}{2(\lambda-\theta\cdot\lambda)}+\frac{\gamma}{\theta\cdot\lambda(\lambda-\theta\cdot\lambda)}}=\frac{5}{6}$. Hence, under $\sigma\geq 1$, by (\ref{ratio1}), we have 
\begin{align*}
&R(\sigma)=\frac{3/2\cdot\gamma\cdot\sigma+5/8+3/8+11/6\cdot\gamma\cdot\sigma+2/3}{\gamma\cdot\sigma+0.5}=\frac{10}{3}\cdot\frac{\gamma\cdot\sigma+0.5}{\gamma\cdot\sigma+0.5}=\frac{10}{3}.
\end{align*}
Hence, we have $\max\{R(1),\ R(\infty)\}\equiv 10/3$.
\end{proof}

This finishes the proof.
\end{proof}

We mention that the approximation ratio of $APPROX.1$ may still achieve $\alpha+2$ when $\gamma=\infty$, which is consistent with the result in Theorem~\ref{unttt0}. Hence, it can not improve the current best approximation algorithm for unsplittable VRPSD~\citep{DBLP:journals/ior/GuptaNR12}.

\subsection{The Second Algorithm}
In this section, we will introduce our second algorithm, denoted as $ALG.2(\lambda, \delta)$, which can be used to obtain a $3.456$-approximation algorithm for unsplittable Cu-VRPSD with any $\gamma\in[1.444,\infty)$, and an $(\alpha+1.75)$-approximation algorithm for unsplittable VRPSD. Here, we may require $\lambda\in(0,1]$, $\delta\in(0,\lambda/2]$, and $1/\delta\in\mathbb{N}$. 

$ALG.2(\lambda, \delta)$ is based on $ALG.1(\lambda, \delta)$. 
The vehicle will skip customers $v_i$ with $d_i>\lambda$ when it travels along the TSP tour in $ALG.1(\lambda, \delta)$, and then satisfy each of them using a single tour at last. In $ALG.2(\lambda, \delta)$, the main difference is that the vehicle will skip customers $v_i$ with $d_i>\delta$, and at last use the better method from either satisfying each of them using a single tour or solving the weighted $(1-\delta)/\delta$-set cover problem as shown below. 

Given a \emph{feasible} set of unsatisfied customers $S$ such that the total demand of all customers in $S$ is at most $1$, we know that the number of customers $\size{S}$ is at most $(1-\delta)/\delta$ since each unsatisfied customer $v_i$ has a demand of $d_i>\delta$. 
Then, we can optimally compute a tour with a minimum cumulative cost $Cu(S)$ for all customers in $S$ in $O(\size{S}!)$ time.
There are at most $n^{O(1/\delta)}$ number of feasible sets since $\size{S}\leq(1-\delta)/\delta$.
Therefore, to satisfy customers $v_i$ with $d_i>\delta$, we can obtain an instance of weighted $(1-\delta)/\delta$-set cover by taking each unsatisfied customer as an element, and each feasible set $S$ of unsatisfied customers as a set with a weight of $Cu(S)$ in polynomial time.
By calling a $\rho$-approximation algorithm for weighted $(1-\delta)/\delta$-set cover~\citep{DBLP:conf/sosa/0001L023}, we can obtain a set of tours satisfying all customers $v_i$ with $d_i>\delta$.

According to the two methods, there are two set of tours $\T_1$ and $\T_2$, and their cumulative cost can be computed in polynomial time. Hence, we route the vehicle according to the tours in $\T'$, where $\T'\coloneqq\T_1$ if $Cu(\T_1)\leq Cu(\T_2)$ and $\T'\coloneqq\T_2$ otherwise. 

The details of $ALG.2(\lambda,\delta)$ is shown in Algorithm~\ref{alg2}.
\begin{algorithm}[ht]
\caption{An algorithm for unsplittable Cu-VRPSD ($ALG.2(\lambda,\delta)$)}
\label{alg2}
\small
\vspace*{2mm}
\textbf{Input:} An instance of unsplittable Cu-VRPSD, and two parameters $\lambda\in(0,1]$, $\delta\in(0,\lambda/2]$, and $1/\delta\in\mathbb{N}$.\\
\textbf{Output:} A feasible solution $\T$ to unsplittable Cu-VRPSD.

\begin{algorithmic}[1]
\State Obtain an $\alpha$-approximate TSP tour $T^*=v_0v_1v_2\dots v_n v_0$, as Line~\ref{tsptour} in $ALG.1(\lambda, \delta)$.

\State Load the vehicle with $S_0\coloneqq (L_0,\delta)$ units of goods, including $L_0$ units of normal goods and $\delta$ units of backup goods, where $L_0\sim U[0,\lambda-\delta)$.

\State Initialize $i\coloneqq 1$ and $V^*\coloneqq\emptyset$.

\While{$i\leq n$}

\State Go to customer $v_i$.

\If{$\delta<d_i\leq 1$}
\State Record $v_i$'s demand, and let $V^*\coloneqq V^*\cup \{v_i\}$ and $S_i\coloneqq(L_i,\delta)$, where $L_i\coloneqq L_{i-1}$.

\ElsIf {$d_{i}\leq L_{i-1}$}
\State Deliver $(d_i,0)$ units of goods to $v_i$, and update $S_i\coloneqq(L_i,\delta)$, where $L_{i}\coloneqq L_{i-1}+\ceil{\frac{d_i-L_{i-1}}{\lambda-\delta}}\cdot (\lambda-\delta)-d_i= L_{i-1}-d_i$.

\Else \Comment{Since $d_i\leq\delta$, we must have $L_{i-1}<d_i\leq L_{i-1}+\delta$}
\State Deliver $(L_{i-1},d_i-L_{i-1})$ units of goods to $v_i$, goes to the depot, load the vehicle with $(L_{i-1}+\ceil{\frac{d_i-L_{i-1}}{\lambda-\delta}}\cdot (\lambda-\delta)-d_i,d_i-L_{i-1})$ units of goods, and update $S_i\coloneqq(L_i,\delta)$, where $L_{i}\coloneqq L_{i-1}+\ceil{\frac{d_i-L_{i-1}}{\lambda-\delta}}\cdot (\lambda-\delta)-d_i= L_{i-1}+(\lambda-\delta)-d_i$.
\EndIf
\State $i \coloneqq i+1$.
\EndWhile

\State Go to the depot.

\State\label{tour1} Consider a set of tours $\T_1$ by obtaining a single tour as in Line~\ref{clean} for each $v_i\in V^*$.

\State\label{tour2} Consider a set of tours $\T_2$ by calling a $\rho$-approximation algorithm for weighted $\frac{1-\delta}{\delta}$-set cover~\citep{DBLP:conf/sosa/0001L023}, where the instance is constructed as follows: 
\begin{enumerate}
\item Obtain all possible feasible sets $S$ of customers in $V^*$ such that the total demand of all customers in $S$ is at most 1; 
\item For each feasible set $S$, compute a tour with a minimum cumulative cost $Cu(S)$ for all customers in $S$;
\item Get an instance of weighted $\frac{1-\delta}{\delta}$-set cover by taking each customer in $V^*$ as an element, and each feasible set $S$ as a weighted set with a weight of $Cu(S)$. 
\end{enumerate}

\State\label{tour'} Let $\T'\coloneqq\T_1$ if $Cu(\T_1)\leq Cu(\T_2)$ and $\T'\coloneqq\T_2$ otherwise. 
\State Route the vehicle according to the tours in $\T'$.
\end{algorithmic}
\end{algorithm}

\begin{lemma}\label{t3}
For unsplittable Cu-VRPSD with any $\lambda\in(0,1]$, $\delta\in(0,\lambda/2]$, and $1/\delta\in\mathbb{N}$, conditioned on $\chi=d$, $ALG.2(\lambda,\delta)$ outputs a solution $\T$ with an expected cumulative cost of
\[
\frac{\gamma\cdot\lrA{\alpha\cdot\sigma+\int^{\delta}_0\frac{x}{\lambda-\delta}dF(x)}+\lrA{\frac{\lambda+\delta}{2}\cdot\alpha\cdot\sigma+\int^{\delta}_0\frac{(\lambda+\delta)x-x^2}{2(\lambda-\delta)}dF(x)}}{\gamma\cdot\max\lrC{\sigma,\ 1}+0.5}\cdot LB+Cu(\T'),
\]
where
\[
Cu(\T')\leq\min\lrC{\frac{\int^{1}_\delta\frac{2\gamma+x}{2}dF(x)}{\gamma\cdot\max\lrC{\sigma,\ 1}+0.5}\cdot LB,\ \rho\cdot Cu(\T^*)}.
\]
\end{lemma}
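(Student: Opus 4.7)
The plan is to decompose $\EE{Cu(\T)\mid\chi=d}$ into two parts: (i) the cost accumulated on $T^*$ together with the additional visits triggered by customers with $d_i\leq\delta$, and (ii) the cost $Cu(\T')$ of the follow-up tours that cover the recorded set $V^*=\{v_i:d_i>\delta\}$.

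For part (i), I would observe that along $T^*$ the vehicle in $ALG.2(\lambda,\delta)$ carries loads distributed exactly as in $ALG.1(\lambda,\delta)$ by Lemma~\ref{l0}, and that every customer with $d_i\leq\delta$ falls into Case~1 or Case~2 of $ALG.1(\lambda,\delta)$; therefore Lemmas~\ref{l1} and~\ref{l2} apply verbatim. Using $w(T^*)\leq\alpha\cdot\tau$ this yields
\[
\text{(i)}\ \leq\ \lrA{a+b\cdot\tfrac{\lambda+\delta}{2}}\alpha\cdot\tau+\sum_{v_i:d_i\leq\delta}\!\lrA{a\cdot\tfrac{2d_i}{\lambda-\delta}+b\cdot\tfrac{(\lambda+\delta)d_i-d_i^2}{\lambda-\delta}}\cdot l_i.
\]
I then divide numerator and denominator by $b\cdot\eta$, rewrite the sums through (\ref{int}), and substitute $LB=a\cdot\max\{\tau,\eta\}+0.5\cdot b\cdot\eta$, $\gamma=a/b$, $\sigma=\tau/\eta$ to obtain exactly the first summand in the claimed bound.

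For part (ii), since $\T'$ is chosen as the cheaper of $\T_1$ and $\T_2$, it suffices to bound each option separately. Summing the single-customer tours in $\T_1$ gives $Cu(\T_1)=\sum_{v_i\in V^*}(2a\cdot l_i+b\cdot d_i\cdot l_i)=b\cdot\eta\cdot\int_\delta^1\frac{2\gamma+x}{2}dF(x)$ by a direct application of (\ref{int}), and dividing by $LB$ produces the first bound on $Cu(\T')$. For the second bound, the plan is to exhibit a feasible weighted set cover of $V^*$ whose weight is at most $Cu(\T^*)$: for each tour $T$ in the optimal itinerary $\T^*$ take $S_T\coloneqq V'(T)\cap V^*$. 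The family $\{S_T\}$ covers $V^*$ because $\T^*$ is feasible, and each $S_T$ is a feasible set for the cover instance since the total demand served by $T$ (hence by $S_T$) is at most $Q=1$. Shortcutting $T$ past the customers in $V'(T)\setminus V^*$ produces a tour $T'$ through $S_T$, and I would argue that both the vehicle cost and the cargo cost of $T'$ are bounded by those of $T$, so $Cu(S_T)\leq Cu(T')\leq Cu(T)$; summing over $T\in\T^*$ bounds the optimal weighted set cover by $Cu(\T^*)$, and the $\rho$-approximation algorithm of~\citep{DBLP:conf/sosa/0001L023} then returns $\T_2$ with $Cu(\T_2)\leq\rho\cdot Cu(\T^*)$.

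The main obstacle I expect is the domination $Cu(S_T)\leq Cu(T)$ for the cargo component of the cumulative cost. Standard shortcutting controls the vehicle cost via the triangle inequality, but here I must also pair each edge of $T'$ with the corresponding sub-path of $T$ and observe that on $T'$ the vehicle carries only the residual demand of $S_T$, while on the sub-path of $T$ it carries at least that much (it still has to deliver to the customers in $V'(T)\setminus V^*$ that have been shortcut); combined with the triangle inequality on edge lengths, this gives a term-by-term domination of the cargo cost as well. Once this is in hand, combining the two bounds with $Cu(\T')\leq\min\{Cu(\T_1),Cu(\T_2)\}$ completes the proof.
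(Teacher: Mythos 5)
Your proposal is correct and follows essentially the same route as the paper: bound the cost of the tours serving customers with $d_i\leq\delta$ via the analysis of $ALG.1(\lambda,\delta)$ (Lemmas~\ref{l0}--\ref{l2} restricted to the small-customer terms of Lemma~\ref{t1}), bound $Cu(\T_1)$ by summing the single-customer tours, and bound $Cu(\T_2)$ by $\rho$ times the optimal set-cover weight, which is at most $Cu(\T^*)$ by shortcutting the optimal itinerary. The only difference is that you spell out the cargo-cost domination under shortcutting (the load on each shortcut edge is at most the load on every edge of the corresponding sub-path of the optimal tour), a step the paper dispatches with a one-line appeal to the triangle inequality.
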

\begin{proof}
$ALG.2(\lambda, \delta)$ in Algorithm~\ref{alg2} outputs a set of tours $\T$, where $\T\setminus\T'$ satisfies all customers $v_i$ with $d_i\leq\delta$ and $\T'$ satisfies all customers $v_i$ with $d_i>\delta$.

By the analysis of $ALG.1(\lambda,\delta)$, and Lemma~\ref{t1}, we know that the expected cumulative cost of $\T\setminus\T'$ is at most
\[
\frac{\gamma\cdot\lrA{\alpha\cdot\sigma+\int^{\delta}_0\frac{x}{\lambda-\delta}dF(x)}+\lrA{\frac{\lambda+\delta}{2}\cdot\alpha\cdot\sigma+\int^{\delta}_0\frac{(\lambda+\delta)\cdot x-x^2}{2(\lambda-\delta)}dF(x)}}{\gamma\cdot\max\lrC{\sigma,\ 1}+0.5}\cdot LB.
\]

By Lines~\ref{tour1}-\ref{tour'}, we know that $Cu(\T')\leq \min\{Cu(\T_1),\ Cu(\T_2)\}$. By the proof of Lemma~\ref{t1}, we can obtain 
\[
Cu(\T_1)=\frac{\gamma\cdot\int^{1}_\delta1dF(x)+\int^{1}_\delta\frac{x}{2}dF(x)}{\gamma\cdot\max\lrC{\sigma,\ 1}+0.5}\cdot LB=\frac{\int^{1}_\delta\frac{2\gamma+x}{2}dF(x)}{\gamma\cdot\max\lrC{\sigma,\ 1}+0.5}\cdot LB.
\]

Since $\T_2$ is obtained by calling a $\rho$-approximation algorithm for weighted $\frac{1-\delta}{\delta}$-set cover, we have $Cu(\T_2)\leq\rho\cdot SC$, where $SC$ is the weight of the optimal solution for weighted $\frac{1-\delta}{\delta}$-set cover.
Recall that $\T^*$ is the solution satisfying all customers obtained by the optimal policy, whose cumulative cost is clearly an upper bound of $SC$, the cumulative cost of the optimal solution for satisfying $V^*$, by the triangle inequality. Hence, we have 
\[
Cu(\T_2)\leq \rho\cdot Cu(\T^*).
\]

Conditioned on the demand realization, $Cu(\T')$, $Cu(\T')$, and $Cu(\T')$ are all fixed. Hence, the lemma holds.
\end{proof}

\subsubsection{The applications}
Our goal is to obtain a $3.456$-approximation algorithm for unsplittable Cu-VRPSD with any $\gamma\in[1.444,\infty)$. As a byproduct, we will also obtain an $(\alpha+1.75)$-approximation algorithm for unsplittable VRPSD.

Since in $APPROX.1(\lambda,\theta,p)$, $ALG.1(\lambda, \delta)$ sets $\lambda=1$ for any $\gamma>0.375$, we also set $\lambda=1$ in $ALG.2(\lambda, \delta)$ for the sake of analysis.
Moreover, since weighted 2-set cover~\citep{DBLP:conf/sosa/0001L023} can be solved optimally in polynomial time, i.e., $\rho=1$ when $\delta=1/3$, we set $\delta=1/3$ in $ALG.2(\lambda, \delta)$.

According to Lemmas~\ref{t1} and \ref{t3}, we will show that $ALG.2(\lambda, \delta)$ can be used to make a trade-off with $ALG.1(\lambda, \delta)$.
We use the approximation algorithm for unsplittable Cu-VRPSD shown in Algorithm~\ref{APP2}, denoted as $APPROX.2$.

\begin{algorithm}[ht]
\caption{An approximation algorithm for unsplittable Cu-VRPSD ($APPROX.2$)}
\label{APP2}
\small
\vspace*{2mm}
\textbf{Input:} An instance of unsplittable Cu-VRPSD.\\
\textbf{Output:} A feasible solution to unsplittable Cu-VRPSD.
\begin{algorithmic}[1]
\State 
Call $ALG.1(1,1/3)$ with a probability of $0.5$ and call $ALG.2(1,1/3)$ with a probability of $0.5$.
\end{algorithmic}
\end{algorithm}

\begin{theorem}\label{t4}
For unsplittable Cu-VRPSD, under $\alpha=1.5$, $APPROX.2$ is a randomized $3.456$-approximation algorithm for any $\gamma\in[1.444,\infty)$.
\end{theorem}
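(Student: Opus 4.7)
The plan is to bound the expected cumulative cost of $APPROX.2$ conditioned on $\chi=d$ by combining the guarantees of its two constituent procedures, then compare the average directly to $Cu(\T^*)$. Since $APPROX.2$ runs $ALG.1(1,1/3)$ and $ALG.2(1,1/3)$ each with probability $1/2$, the expected cost of $APPROX.2$ is the average of those two. I substitute $(\lambda,\delta)=(1,1/3)$ and $\alpha=1.5$ into Lemma~\ref{t1} to obtain a closed form for $X_1\coloneqq \EE{Cu(ALG.1(1,1/3))\mid\chi=d}/LB$, with all integrals over the partition $(0,1/3]$, $(1/3,2/3]$, $(2/3,1]$ (the fourth piece of Lemma~\ref{t1} vanishes since $\lambda=1$), and likewise substitute into Lemma~\ref{t3}. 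The key structural point is that with $\delta=1/3$ the sets in the set-cover instance built by $ALG.2$ have cardinality at most $(1-\delta)/\delta=2$, so weighted $2$-set cover is exactly solvable in polynomial time~\citep{DBLP:conf/sosa/0001L023}; hence $\rho=1$ and Lemma~\ref{t3} gives $Cu(\T')\le \min\{Cu(\T_1),\,Cu(\T^*)\}$.

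Using the second min-arm $Cu(\T')\le Cu(\T^*)$, averaging the two per-realization bounds yields
\[
\EE{Cu(APPROX.2)\mid\chi=d}\le (0.5\,X_1+0.5\,Y_1)\cdot LB+0.5\cdot Cu(\T^*),
\]
where $Y_1$ is the small-customer ratio from Lemma~\ref{t3}. Dividing through by $Cu(\T^*)$ and using $LB\le Cu(\T^*)$ yields the per-realization inequality $\EE{Cu(APPROX.2)\mid\chi=d}\le (0.5\,(X_1+Y_1)+0.5)\cdot Cu(\T^*)$, so after taking expectation over $\chi$ the approximation ratio is at most $0.5\,(X_1+Y_1)+0.5$. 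The claim thus reduces to showing $X_1+Y_1\le 5.912$ uniformly in $\sigma\ge 0$, $F$, and $\gamma\ge 1.444$.

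To bound $X_1+Y_1$, I simplify using $\int_0^1 x\,dF(x)=1$ together with the monotonicity inequalities in \eqref{intineq}, reducing $X_1+Y_1$ to a rational function of $\sigma$ and a small number of scalar summaries of $F$ on the three intervals above. As in the proof of Theorem~\ref{t2}, this rational function for $\sigma\ge 0$ attains its maximum at either $\sigma=1$ or $\sigma\to\infty$, so the verification reduces to two one-dimensional inequalities in $\gamma$. At $\gamma\to\infty$ the averaged ratio tends to $\alpha+1.75=3.25$, which is exactly the bound needed for the corresponding VRPSD corollary.

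The main obstacle I anticipate is handling the second min-arm cleanly: the bound $Cu(\T')\le Cu(\T^*)$ compares the algorithm cost to $Cu(\T^*)$ rather than to $LB$, so it cannot be plugged directly into Lemma~\ref{mid} and must instead be absorbed by passing to the ratio $\EE{Cu(APPROX.2)}/\EE{Cu(\T^*)}$. A secondary difficulty is the numerical verification that the worst-case distribution on the three intervals $(0,1/3]$, $(1/3,2/3]$, $(2/3,1]$ yields ratio at most $3.456$ uniformly on $\gamma\in[1.444,\infty)$, and that the boundary value $\gamma=1.444$ precisely matches the worst case of $APPROX.1$ from Theorem~\ref{t2}, so that the two algorithms together cover the entire range $\gamma>0$.
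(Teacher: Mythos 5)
Your overall skeleton (average the Lemma~\ref{t1} and Lemma~\ref{t3} bounds with weight $1/2$ each, reduce the $\sigma$-maximization to $\sigma\in\{1,\infty\}$, and handle the arm bounded by $Cu(\T^*)$ by comparing to $\EE{Cu(\T^*)\mid\chi=d}$ rather than to $LB$) matches the paper, but there is a genuine gap in how you treat the minimum in Lemma~\ref{t3}: you keep only the second arm and therefore charge an unconditional additive $0.5\cdot Cu(\T^*)$. That is too lossy. Concretely, take a realization in which every customer has demand at most $1/3$ (so $V^*=\emptyset$ and both arms of the minimum are in fact $0$), with $\sigma=1$ and $\gamma=1.444$. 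Then both $X_1$ and $Y_1$ reduce to $\frac{\gamma\left(\alpha\sigma+\int_0^{1/3}\frac{3x}{2}dF(x)\right)+\left(\frac23\alpha\sigma+\int_0^{1/3}\left(x-\frac34x^2\right)dF(x)\right)}{\gamma+0.5}$, which approaches $\frac{3\gamma+2}{\gamma+0.5}\approx 3.257$, so $X_1+Y_1\approx 6.514>5.912$ and your bound $0.5(X_1+Y_1)+0.5\approx 3.757>3.456$. The same issue makes your $\gamma\to\infty$ limit equal to $3.5$ rather than $\alpha+1.75=3.25$, so the VRPSD corollary would not follow from your argument either.

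The paper avoids this by bounding $\min\{A,1\}\le tA+(1-t)$ with the $\gamma$-dependent weight $t=\frac12+\frac{1}{6\gamma+1}$ (its inequality (\ref{eq2})): the extra $\frac{1}{6\gamma+1}$-weighted copy of the first arm is converted, via $\int_{1/3}^1 1\,dF(x)\le\int_{1/3}^1 3x\,dF(x)$, into integral terms that merge with those of $X_1$ and $Y_1$, and the leftover constant is only $\frac{6\gamma-1}{12\gamma+2}<\frac12$, contributing $\frac{6\gamma-1}{24\gamma+4}$ after the $1/2$ mixing ($\approx 0.198$ at $\gamma=1.444$ and $\to 1/4$ as $\gamma\to\infty$). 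With that, the worst case above evaluates to $\approx 3.4555$, matching the claimed bound. So you must retain both arms of the minimum and interpolate between them; discarding the single-tour arm cannot yield $3.456$.
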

\begin{proof}
By Lemma~\ref{t1}, the approximation ratio of $ALG.1(1,1/3)$ is at most
\begin{align*}
&\frac{\gamma\cdot\lrA{\alpha\cdot\sigma+\int^{\frac{1}{3}}_0\frac{3x}{2}dF(x)+\int^\frac{2}{3}_{\frac{1}{3}}\frac{6x-1}{2}dF(x)+\int^1_\frac{2}{3}\frac{3x+1}{2}dF(x)}}{\gamma\cdot\max\lrC{\sigma,\ 1}+0.5}\\
&\quad+\frac{\lrA{\frac{2}{3}\cdot\alpha\cdot\sigma+\int^{\frac{1}{3}}_0\frac{4x-3x^2}{4}dF(x)+\int^\frac{2}{3}_{\frac{1}{3}}\frac{3x^2+2x}{4}dF(x)+\int^1_\frac{2}{3}\frac{9x^2-6x+4}{6}dF(x)}}{\gamma\cdot\max\lrC{\sigma,\ 1}+0.5}\\
&\leq \frac{\gamma\cdot\lrA{\alpha\cdot\sigma+\int^{\frac{1}{3}}_0\frac{3x}{2}dF(x)+\int^1_{\frac{1}{3}}\frac{6x-1}{2}dF(x)}+\lrA{\frac{2}{3}\cdot\alpha\cdot\sigma+\int^{\frac{1}{3}}_0xdF(x)+\int^1_{\frac{1}{3}}\frac{3x^2+2x}{4}dF(x)}}{\gamma\cdot\max\lrC{\sigma,\ 1}+0.5},
\end{align*}
where the inequality follows from $\frac{4x-3x^2}{4}\leq x$ for any $x\in[0,\frac{1}{3}]$, and $\frac{3x+1}{2}\leq \frac{6x-1}{2}$ and $\frac{9x^2-6x+4}{6}\leq \frac{3x^2+2x}{4}$ for any $x\in[\frac{2}{3},1]$.

Assume that $\gamma\geq0.1667>\frac{1}{6}$, and then we have the following inequality.
\begin{equation}\label{eq2}
\begin{split}
R(\T')&\coloneqq\min\lrC{\frac{\int^{1}_\frac{1}{3}\frac{2\gamma+x}{2}dF(x)}{\gamma\cdot\max\lrC{\sigma,\ 1}+0.5},\ 1}\\
&\leq \lrA{\frac{1}{2}+\frac{1}{6\gamma+1}}\cdot\frac{\int^{1}_\frac{1}{3}\frac{2\gamma+x}{2}dF(x)}{\gamma\cdot\max\lrC{\sigma,\ 1}+0.5}+\lrA{\frac{1}{2}-\frac{1}{6\gamma+1}}\\
&\leq \frac{1}{2}\cdot\frac{\int^{1}_\frac{1}{3}\frac{2\gamma+x}{2}dF(x)}{\gamma\cdot\max\lrC{\sigma,\ 1}+0.5}+\frac{1}{6\gamma+1}\cdot\frac{\int^{1}_\frac{1}{3}\frac{6\gamma\cdot x+x}{2}dF(x)}{\gamma\cdot\max\lrC{\sigma,\ 1}+0.5}+\lrA{\frac{1}{2}-\frac{1}{6\gamma+1}}\\
&=\frac{\int^{1}_\frac{1}{3}\frac{2\gamma+x}{4}dF(x)}{\gamma\cdot\max\lrC{\sigma,\ 1}+0.5}+\frac{\int^{1}_\frac{1}{3}\frac{x}{2}dF(x)}{\gamma\cdot\max\lrC{\sigma,\ 1}+0.5}+\frac{6\gamma-1}{12\gamma+2}\\
&=\frac{\int^{1}_\frac{1}{3}\frac{2\gamma+3x}{4}dF(x)}{\gamma\cdot\max\lrC{\sigma,\ 1}+0.5}+\frac{6\gamma-1}{12\gamma+2},
\end{split}
\end{equation}
where the second inequality follows from $\int^1_\frac{1}{3}1dF(x)\leq \int^1_\frac{1}{3}3xdF(x)$.

By Lemma~\ref{t3}, the approximation ratio of $ALG.2(1,1/3)$ is at most
\begin{align*}
&\frac{\gamma\cdot\lrA{\alpha\cdot\sigma+\int^{\frac{1}{3}}_0\frac{3x}{2}dF(x)}+\lrA{\frac{2}{3}\cdot\alpha\cdot\sigma+\int^{\frac{1}{3}}_0\frac{4x-3x^2}{4}dF(x)}}{\gamma\cdot\max\lrC{\sigma,\ 1}+0.5}+R(\T')\\
&\leq \frac{\gamma\cdot\lrA{\alpha\cdot\sigma+\int^{\frac{1}{3}}_0\frac{3x}{2}dF(x)+\int^{1}_\frac{1}{3}\frac{1}{2}dF(x)}+\lrA{\frac{2}{3}\cdot\alpha\cdot\sigma+\int^{\frac{1}{3}}_0xdF(x)+\int^{1}_\frac{1}{3}\frac{3x}{4}dF(x)}}{\gamma\cdot\max\lrC{\sigma,\ 1}+0.5}+\frac{6\gamma-1}{12\gamma+2},
\end{align*}
where the inequality follows from (\ref{eq2}), and $\frac{4x-3x^2}{4}\leq x$ for any $x\in[0,\frac{1}{3}]$.

Recall that in $APPROX.2$ we call $ALG.1(1,1/3)$ (resp., $ALG.2(1,1/3)$) with a probability of $0.5$ (resp., $0.5$). The approximation ratio of $APPROX.2$ is at most
\begin{align*}
&\frac{\gamma\cdot\lrA{\alpha\cdot\sigma+\int^{\frac{1}{3}}_0\frac{3x}{2}dF(x)+\int^{1}_\frac{1}{3}\frac{3x}{2}dF(x)}+\lrA{\frac{2}{3}\cdot\alpha\cdot\sigma+\int^{\frac{1}{3}}_0xdF(x)+\int^{1}_\frac{1}{3}\frac{3x^2+5x}{8}dF(x)}}{\gamma\cdot\max\lrC{\sigma,\ 1}+0.5}+\frac{6\gamma-1}{24\gamma+4}\\
&\leq\frac{\gamma\cdot\lrA{\alpha\cdot\sigma+1.5}+\lrA{\frac{2}{3}\cdot\alpha\cdot\sigma+1}}{\gamma\cdot\max\lrC{\sigma,\ 1}+0.5}+\frac{6\gamma-1}{24\gamma+4},
\end{align*}
where the inequality follows from $\frac{3x^2+5x}{8}\leq x$ for any $x\in[\frac{2}{3},1]$, and $\int^1_0xdF(x)=1$. 

Under $\gamma\in[0.1667,\infty)$, the approximation ratio is 
\[
\max_{\sigma\geq 0} R(\sigma),\quad\quad\mbox{where}\quad\quad R(\sigma)\coloneqq\frac{\gamma\cdot\lrA{\alpha\cdot\sigma+1.5}+\lrA{\frac{2}{3}\cdot\alpha\cdot\sigma+1}}{\gamma\cdot\max\lrC{\sigma,\ 1}+0.5}+\frac{6\gamma-1}{24\gamma+4}.
\]

As in the proof of Theorem~\ref{t2}, we have $\max_{\sigma\geq 0} R(\sigma)=\max\{R(1),\ R(\infty)\}$. Moreover, it can be verified that $R(1)\geq R(\infty)$ holds for any $\gamma\geq 0.5$. Hence, we have $R(\sigma)=R(1)$ for any $\gamma\geq 0.5$. Assume that $\alpha=1.5$. By calculation, we have the following result.
\begin{itemize}
    \item When $\gamma\in[1.444,\infty)$, we have $\max\{R(1),\ R(\infty)\}=R(1)\leq 3.456$.
\end{itemize}

Hence, $APPROX.2$ is a randomized $3.456$-approximation algorithm for unsplittable Cu-VRPSD with any $\gamma\in[1.444,\infty)$.
The approximation ratios of $APPROX.1(\lambda,\theta,p)$ and $APPROX.2$ under $\gamma\in(0,2)$ can be found in Figure~\ref{fig3}.

\begin{figure}[ht]
    \centering
    \includegraphics[scale=0.6]{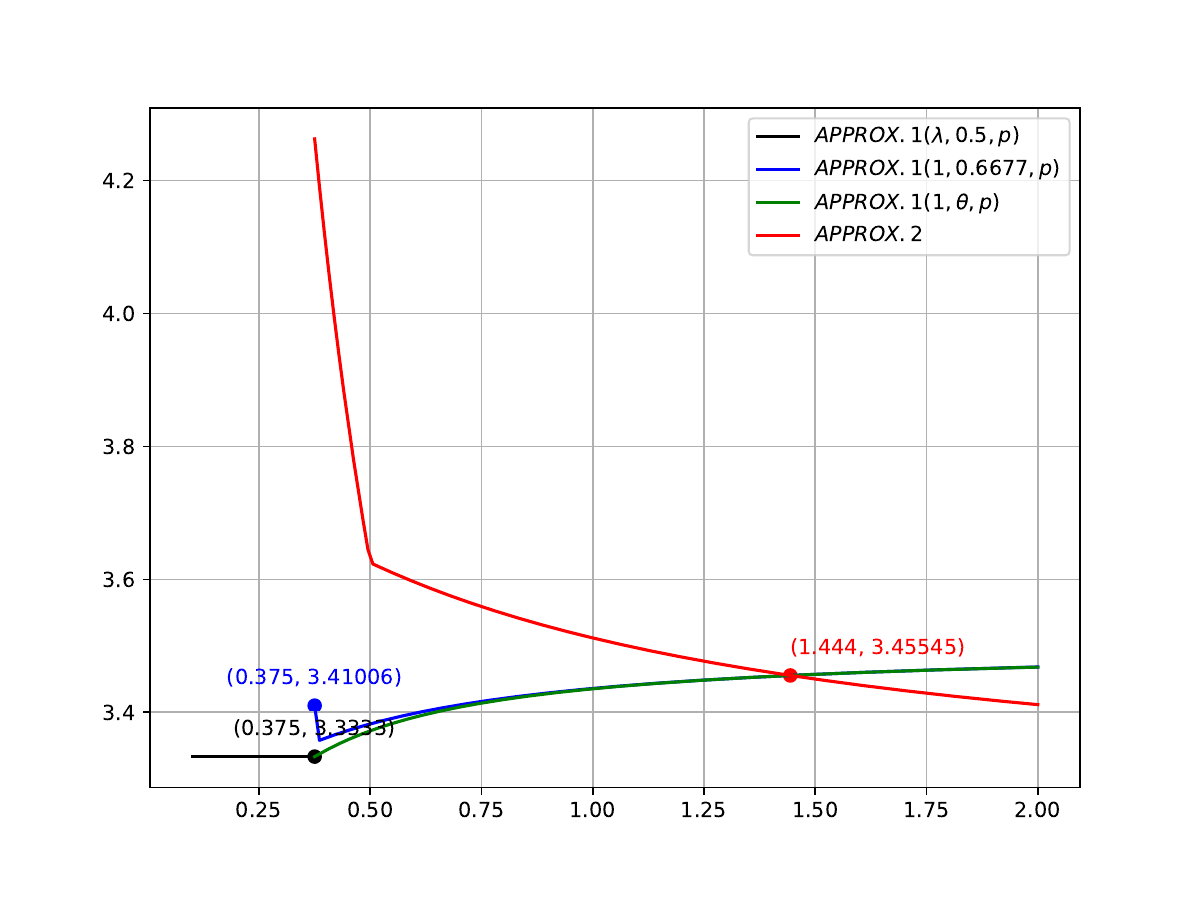}
    \caption{The approximation ratios of $APPROX.1(\lambda,\theta,p)$ and $APPROX.2$ under $\gamma\in(0,2)$, where the red line denotes the approximation ratio of $APPROX.2$.}
    \label{fig3}
\end{figure}
\end{proof}

Combining the results in Lemma~\ref{a=0} and Theorems~\ref{t2} and \ref{t4}, we obtain the following result.

\begin{theorem}\label{th23}
For unsplittable Cu-VRPSD, under $\alpha=1.5$, there is a randomized $3.456$-approximation algorithm.  
\end{theorem}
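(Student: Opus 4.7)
My plan is to assemble the three results already established in this section---Lemma~\ref{a=0}, Theorem~\ref{t2}, and Theorem~\ref{t4}---into a single algorithm whose choice of sub-procedure depends on the ratio $\gamma = a/b$. No new analysis is required beyond these prior results; the argument is purely a case split on $\gamma$ that checks each regime yields an approximation ratio at most $3.456$.

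First, I would dispose of the degenerate cases. If $a = 0$ (equivalently $\gamma = 0$), Lemma~\ref{a=0} gives an exact $O(n)$-time algorithm, which is trivially a $3.456$-approximation. The opposite degenerate case $b = 0$ (i.e., $\gamma = \infty$) corresponds to unsplittable VRPSD; I would handle it by running $APPROX.2$ with the parameters of Theorem~\ref{t4}, since the bound derived there as $\max\{R(1), R(\infty)\}$ remains well-defined and at most $3.456$ in this limit.

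Next, for strictly positive finite $\gamma$, I would dispatch according to the three intervals already analysed: for $\gamma \in (0, 0.375]$, invoke $APPROX.1(\lambda,\theta,p)$ with $\lambda = 4\gamma/\alpha$, $\theta = 0.5$, and the corresponding value of $p$ given in the proof of Theorem~\ref{t2}, yielding ratio $10/3 < 3.456$; for $\gamma \in (0.375, 1.444]$, invoke $APPROX.1(\lambda,\theta,p)$ with $\lambda = 1$, $\theta = 0.6677$, and the corresponding $p$, yielding ratio $\le 3.456$; for $\gamma \in (1.444, \infty)$, invoke $APPROX.2$, yielding ratio $\le 3.456$ by Theorem~\ref{t4}. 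In each case the algorithm applies Lemma~\ref{mid} to conclude that the stated per-realisation bound $\rho \cdot LB$ lifts to an unconditional approximation ratio of $\rho$.

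The main ``obstacle'' is essentially already overcome: the heavy lifting---analysis of $ALG.1$/$ALG.2$ and the numerical optimisation of parameters---was done in the proofs of Theorems~\ref{t2} and \ref{t4}. The only residual subtlety is to check continuity of the bounds at the boundary points $\gamma \in \{0.375,\,1.444\}$ and in the limit $\gamma \to \infty$, so that no value of $\gamma$ slips through the case split. This follows from noting that both $R(1)$ and $R(\infty)$ appearing in the proofs of Theorems~\ref{t2} and \ref{t4} are rational functions of $\gamma$, so the $3.456$ bound extends by continuity to the endpoints and to $\gamma = \infty$.
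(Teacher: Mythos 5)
Your proposal is correct and matches the paper's proof, which is exactly the one-line combination of Lemma~\ref{a=0}, Theorem~\ref{t2}, and Theorem~\ref{t4} via a case split on $\gamma$ (with the $b=0$ limit covered by $APPROX.2$ as in Corollary~\ref{coro1}); your extra worry about boundary points is moot since the stated intervals $(0,0.375]$, $[0.375,1.444]$, and $[1.444,\infty)$ already overlap and cover all of $\gamma>0$.
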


\begin{remark}
We believe that the analysis of our algorithm $APPROX.2$ is not tight. We proved an approximation ratio of $3.456$ for $APPROX.2$ when $\gamma\geq 1.444$, where in the worst case we have $\gamma=1.444$. However, numerical experiments show that the approximation ratio may achieve $3.408$ when $\gamma=1.444$. Moreover, by calling $APPROX.1(\lambda,\theta,p)$ when $\gamma\leq 1.033$ and $APPROX.2$ otherwise, we may obtain an approximation ratio of $3.438$ for unsplittable Cu-VRPSD. 
The details of the numerical experiments are put in Appendix~\ref{tightness}.
\end{remark}

We note that to obtain a better approximation algorithm for unsplittable Cu-VRPSD, a possible way is to run $ALG.1$ (resp., $ALG.2$) with a probability of $p_\gamma$ (resp., $1-p_\gamma$), where $p_\gamma$ is a function dependent on $\gamma$. Moreover, when executing $ALG.1$ or $ALG.2$, the parameters $\lambda$ and $\delta$ follow certain distributions related to $\gamma$, i.e., a multi-point distributed algorithm deployment as mentioned before.
Although this method is simple, conducting a tight analysis requires significant effort. We leave this for future work.

\begin{problem}
Determine the best possible approximation ratio that can be achieved using the above framework by optimizing the probability $p_\gamma$ and the parameter distributions of $\lambda$ and $\delta$ as functions of $\gamma$.
\end{problem}


Recall that Cu-VRPSD reduces to VRPSD when $b=0$, which corresponds to the case where $\gamma=\infty$. By the proof of Theorem~\ref{t4}, the approximation ratio of $APPROX.2$ is at most $\max_{\sigma\geq 0} \lrC{\frac{\alpha\cdot\sigma+1.5}{\max\lrC{\sigma,\ 1}}+\frac{1}{4}}\leq\alpha+1.75$.
Thus, $APPROX.2$ is a randomized $(\alpha+1.75)$-approximation algorithm for unsplittable VRPSD.

\begin{corollary}\label{coro1}
For unsplittable VRPSD, there is a randomized $(\alpha+1.75)$-approximation algorithm.
\end{corollary}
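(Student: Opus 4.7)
The plan is to view unsplittable VRPSD as the special case of unsplittable Cu-VRPSD with $b=0$, i.e., $\gamma=a/b=\infty$, and then specialize the analysis of $APPROX.2$ from Theorem~\ref{t4} to this limiting regime. Since VRPSD does not depend on a cargo cost, the only nontrivial step is to extract the correct asymptotic from the two summands that appear in the bound derived inside the proof of Theorem~\ref{t4}.

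First I would run $APPROX.2$ on the given VRPSD instance (with $a=1$, $b=0$), treating it as an instance of Cu-VRPSD. The bound established in the proof of Theorem~\ref{t4} on the expected approximation ratio of $APPROX.2$, before any case analysis on $\gamma$, is
\begin{equation*}
R(\sigma) \;\le\; \frac{\gamma\cdot(\alpha\cdot\sigma+1.5)+\lra{\tfrac{2}{3}\cdot\alpha\cdot\sigma+1}}{\gamma\cdot\max\{\sigma,\ 1\}+0.5}+\frac{6\gamma-1}{24\gamma+4}.
\end{equation*}
I would then divide the numerator and denominator of the first fraction by $\gamma$ and let $\gamma\to\infty$. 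The lower-order terms $\tfrac{2}{3}\alpha\sigma+1$ and $0.5$ vanish, and the second fraction tends to $\tfrac{1}{4}$. So the limiting bound on the approximation ratio is
\begin{equation*}
\max_{\sigma\ge 0}\ \lrC{\frac{\alpha\cdot\sigma+1.5}{\max\{\sigma,\ 1\}}+\frac{1}{4}}.
\end{equation*}

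The next step is to show that this maximum is at most $\alpha+1.75$. I would split on $\sigma\le 1$ and $\sigma\ge 1$. For $\sigma\le 1$ the denominator is $1$ and the expression equals $\alpha\sigma+1.75$, maximized at $\sigma=1$ giving $\alpha+1.75$. For $\sigma\ge 1$ the denominator is $\sigma$ and the expression equals $\alpha+\tfrac{1.5}{\sigma}+\tfrac{1}{4}$, which is decreasing in $\sigma$ and hence maximized at $\sigma=1$, again yielding $\alpha+1.75$. Combined with Lemma~\ref{mid} applied to the $b=0$ specialization of Lemma~\ref{lb}, this gives the claimed $(\alpha+1.75)$-approximation.

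There is no real obstacle here; the only care needed is to make sure the limit $\gamma\to\infty$ is taken correctly in both summands of the bound, and to verify that the $b=0$ version of the lower bound $LB$ (namely $a\cdot\max\{\tau,\eta\}$) is the relevant benchmark so that the ratio bound transfers from Cu-VRPSD to VRPSD directly. One could equivalently re-execute the proof of Theorem~\ref{t4} with $b$ set to $0$ from the outset, dropping the $Cu_2$ terms throughout; both routes yield the same $\max\{R(1),R(\infty)\}=\alpha+1.75$ bound.
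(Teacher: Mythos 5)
Your proposal is correct and matches the paper's own argument: the paper likewise specializes the bound from the proof of Theorem~\ref{t4} to the $b=0$ (i.e., $\gamma=\infty$) regime, obtaining $\max_{\sigma\geq 0}\lrC{\frac{\alpha\cdot\sigma+1.5}{\max\{\sigma,\,1\}}+\frac{1}{4}}\leq\alpha+1.75$ and concluding that $APPROX.2$ is a randomized $(\alpha+1.75)$-approximation for unsplittable VRPSD. Your additional case split on $\sigma\le 1$ versus $\sigma\ge 1$ just makes explicit the maximization that the paper leaves implicit.
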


\section{Two Algorithms for Unsplittable Cu-VRP}
In this section, we give an $(\alpha+1+\ln2+\varepsilon<3.194)$-approximation algorithm for unsplittable Cu-VRP, where $\varepsilon>$ is any small positive constant. 

\subsection{The First Algorithm}
Based on the well-known randomized rounding method for weighted $k$-set cover, we propose an $(\alpha+1+\ln2+\varepsilon<3.194)$-approximation algorithm, denoted as $ALG.3(\lambda,\delta)$, for unsplittable Cu-VRP with any $\gamma>\gamma_0$, where $\gamma_0>0$ is any fixed constant.

Recall that $ALG.2(\lambda,\delta)$ first satisfies customers $v_i$ with $d_i\leq\delta$ by traveling along the TSP tour and then customers $v_i$ with $d_i>\delta$ by possibly solving weighted $\frac{1-\delta}{\delta}$-set cover. However, it may only be used for $\delta=1/3$ since the best-known approximation ratio of weighted $3$-set cover is about $1.79$~\citep{DBLP:conf/sosa/0001L023}, which is already too large. 
In $ALG.3(\lambda,\delta)$, since the demands of customers are known in advance for unsplittable Cu-VRP, we first try to satisfy customers in $V^*\coloneqq\{v_i\in V'\mid d_i>\delta\}$ by solving weighted $\frac{1-\delta}{\delta}$-set cover using the randomized rounding method, which may yield better performance. However, due to the randomness, some customers in $V^*$ may still be unsatisfied. Then, we satisfy all remaining customers by calling $ALG.1(\lambda,\delta)$. 
This method was used to obtain an $(\alpha+1+\ln2-\varepsilon')$-approximation algorithm for some small constant $\varepsilon'>0$ for unsplittable VRP~\citep{uncvrp}. The details are shown as follows. 

To obtain an instance of weighted $\frac{1-\delta}{\delta}$-set cover, we use the method in Line~\ref{tour2} of $ALG.2(\lambda,\delta)$. Now, we have obtained a set of feasible sets $\S$, and each $S\in\S$ has a weight of $Cu(S)$. Then, we obtain the linear relaxation of weighted set cover as shown in $(\ref{lp})$, and it can be solved in $n^{O(1/\delta)}$ since $\size{\S}=n^{O(1/\delta)}$. In the randomized rounding method, we select each $S\in\S$ with a probability of $\min\{\ln2\cdot x_S,1\}$. Denote the set of selected sets by $\S'$, which corresponds to a set of tours $\T'$ satisfying a subset of customers $V^{**}\subseteq V^*$. Note that $Cu(\T')\leq Cu(\S')$ since 
we may perform shortcutting to ensure that each customer appears in only one tour and it does not increase the cumulative cost by the triangle inequality.
At last, we call $ALG.1(\lambda,\delta)$ to obtain a set of tours $\T''$ to satisfy the left customers in $V'\setminus V^{**}$. Due to the stochastic demands in unsplittable Cu-VRPSD, the load of the vehicle in $ALG.1(\lambda,\delta)$ may be greater than the delivered units of goods in each tour of $\T''$. In unsplittable Cu-VRP, we can optimize the tours in $\T''$ so that the load equals the delivered units of goods. Moreover, for each tour $T=v_0v_{i_1}\dots v_{i_T}v_0\in \T''$, we consider another tour with the opposite direction, i.e., $v_0v_{i_T}\dots v_{i_1}v_0$, and choose the better one into our final solution.
\begin{alignat}{3}
\text{minimize} & \quad & \sum_{S\in\S} Cu(S)\cdot x_S \notag\\
\text{subject to} & \quad & \sum_{\substack{S\in \S: v\in S}}x_S \geq\ & 1, \quad && \forall\ v \in V^*,\label{lp}\\
&& x_S \geq\ & 0, \quad && \forall\ S \in \S. \notag
\end{alignat}

The details of $ALG.3(\lambda,\delta)$ is shown in Algorithm~\ref{alg3}.
\begin{algorithm}[ht]
\caption{An algorithm for unsplittable Cu-VRPSD ($ALG.3(\lambda,\delta)$)}
\label{alg3}
\small
\vspace*{2mm}
\textbf{Input:} An instance of unsplittable Cu-VRPSD, and two parameters $\lambda\in(0,1]$, $\delta\in(0,\lambda/2]$, and $1/\delta\in\mathbb{N}$.\\
\textbf{Output:} A feasible solution $\T$ to unsplittable Cu-VRPSD.

\begin{algorithmic}[1]
\State Get an instance $(V^*,\S)$ of weighted $\frac{1-\delta}{\delta}$-set cover using Line~\ref{tour2} in $ALG.2(\lambda,\delta)$.

\State Solve the linear program of weighted set cover in (\ref{lp}).

\State Select each $S\in\S$ with a probability of $\min\{\ln2\cdot x_S, 1\}$. 
Denote the set of selected sets by $\S'$, corresponding tours by $\T'$, and satisfied customers by $V^{**}$.
\State Call $ALG.1(\lambda,\delta)$ to obtain a set of tours $\T''$ to satisfy the customers in $V'\setminus V^{**}$.
\State\label{opt} For each tour in $\T''$, ensure the load of the vehicle is the delivered units of goods, obtain another tour with the opposite direction, and choose the better one into $\T'''$.
\State Return $\T'\cup\T'''$.
\end{algorithmic}
\end{algorithm}

\subsubsection{The analysis}
\begin{lemma}[*]\label{pr}
It holds that $\EE{Cu(\S')}\leq \ln2\cdot Cu(\T^*)$, $\PP{v\notin V^{**}}=1$ for any $v\in V\setminus V^*$, and $\PP{v\notin V^{**}}\leq 1/2$ for any $v\in V^*$.
\end{lemma}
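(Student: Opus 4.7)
The plan is to handle the three parts of Lemma~\ref{pr} separately, using only the definition of the randomized rounding in Algorithm~\ref{alg3}, the LP~(\ref{lp}), and a few standard inequalities.

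For the bound $\EE{Cu(\S')} \leq \ln 2 \cdot Cu(\T^*)$, I would first observe that by linearity of expectation,
\[
\EE{Cu(\S')} = \sum_{S \in \S}\min\{\ln 2 \cdot x_S, 1\}\cdot Cu(S) \leq \ln 2 \cdot \sum_{S \in \S} x_S\cdot Cu(S).
\]
Then I would argue that the LP value $\sum_{S\in\S} x_S\cdot Cu(S)$ is at most $Cu(\T^*)$: from the optimal Cu-VRP solution $\T^*$, for each tour $T \in \T^*$ one can shortcut it to a tour on the customers of $V^*$ visited by $T$, obtaining a feasible set $S_T \in \S$ with $Cu(S_T) \leq Cu(T)$ by the triangle inequality and by re-optimizing the load (since the demand of $S_T$ is at most~$1$). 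Setting $x_{S_T}=1$ for each such $T$ and the remaining $x_S=0$ yields an integer feasible solution to (\ref{lp}) of cost at most $Cu(\T^*)$, so the LP optimum is at most $Cu(\T^*)$.

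The second claim is immediate from the construction: the ground set of the weighted set cover instance is exactly $V^*$, so every $S\in\S$ is a subset of $V^*$. Hence for any $v\in V\setminus V^*$, no set in $\S$ contains $v$, and so no selected set covers $v$, giving $\PP{v\notin V^{**}}=1$.

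For the third claim, fix $v\in V^*$ and note
\[
\PP{v \notin V^{**}} \;=\; \prod_{S \in \S:\, v \in S}\bigl(1-\min\{\ln 2 \cdot x_S,\, 1\}\bigr).
\]
If $\min\{\ln 2 \cdot x_S, 1\} = 1$ for some $S\ni v$, the product is $0 \leq 1/2$ and we are done. Otherwise $\ln 2 \cdot x_S < 1$ for every $S\ni v$, and using $1 - y \leq e^{-y}$ for $y\in[0,1]$,
\[
\PP{v \notin V^{**}} \;\leq\; \prod_{S \in \S:\, v \in S} e^{-\ln 2 \cdot x_S} \;=\; e^{-\ln 2 \cdot \sum_{S \in \S:\, v \in S} x_S} \;\leq\; e^{-\ln 2} \;=\; \tfrac{1}{2},
\]
where the final inequality uses the LP covering constraint $\sum_{S\ni v} x_S \geq 1$ from~(\ref{lp}).

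The only part that requires real care is the first claim, since one must justify that the LP is truly a lower bound on $Cu(\T^*)$ — in particular, that a tour in $\T^*$ whose customers in $V^*$ have total demand at most~$1$ corresponds to a feasible set in $\S$ of no larger cumulative cost. The remaining parts are direct calculations with the rounding probabilities.
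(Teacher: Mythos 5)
Your proposal is correct and follows essentially the same route as the paper's proof: linearity of expectation plus $\min\{\ln 2\cdot x_S,1\}\leq \ln 2\cdot x_S$ for the first claim, the observation that every set in $\S$ is a subset of $V^*$ for the second, and the product bound with $1-y\leq e^{-y}$ together with the LP covering constraint for the third. The only difference is that you spell out why the LP optimum is at most $Cu(\T^*)$ (by shortcutting each tour of $\T^*$ to its $V^*$-customers), a step the paper compresses into the assertion that the LP is a relaxation of unsplittable Cu-VRP.
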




\begin{lemma}[*]\label{t5}
For unsplittable Cu-VRP with any $\lambda\in(0,1]$, $\delta\in(0,\lambda/2]$, and $1/\delta\in\mathbb{N}$, $ALG.3(\lambda,\delta)$ generates a solution $\T$ with an expected cumulative cost of
\[
\ln2\cdot Cu(\T^*)+\frac{\gamma\cdot\lrA{\alpha\cdot\sigma+\frac{1}{\lambda-\delta}}+\frac{\lambda}{2}\cdot\lrA{\alpha\cdot\sigma+\frac{1}{\lambda-\delta}}}{\gamma\cdot\max\lrC{\sigma,\ 1}+0.5}\cdot LB.
\]
\end{lemma}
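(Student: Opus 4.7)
The plan is to split $\T = \T' \cup \T'''$ and bound the two pieces independently. For the set-cover piece, Lemma~\ref{pr} gives $\EE{Cu(\S')} \leq \ln 2 \cdot Cu(\T^*)$, and the shortcutting that converts each selected $S \in \S'$ into a valid tour of $\T'$ does not increase cumulative cost by the triangle inequality, so $\EE{Cu(\T')} \leq \ln 2 \cdot Cu(\T^*)$. The rest of the proof is to bound $\EE{Cu(\T''')}$ by $(a+b\lambda/2)\bigl(\alpha\tau + \eta/(\lambda-\delta)\bigr)$; after substituting $\gamma=a/b$, $\sigma=\tau/\eta$, and $LB/(b\eta) = \gamma\max\{\sigma,1\}+0.5$, this expression is exactly the second term of the stated bound.

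The crucial per-tour observation for $\T'''$, enabled by the pre-optimization in Line~\ref{opt}, is that for each tour $T$ with delivered demand $D_T$ and total weight $w(T)$, the forward and reverse cargo costs satisfy $Cu_2(T_{\mathrm{fwd}}) + Cu_2(T_{\mathrm{bwd}}) = b\,D_T\,w(T)$. Indeed, on each depot-incident edge exactly one orientation carries the full $D_T$ and the other carries $0$, while on any interior edge the forward load $D_T - \sum_{j\leq k} d_{i_j}$ and backward load $\sum_{j\leq k} d_{i_j}$ again sum to $D_T$. Consequently, choosing the cheaper orientation yields $Cu(T) \le a\,w(T) + \tfrac{b}{2} D_T\, w(T)$. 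I decompose $\T'''$ at depot visits: every tour is either (i) a main sub-tour in which the vehicle's total load never exceeds $\lambda$ throughout (the initial and reloaded states satisfy $L+\delta \le \lambda$, and the temporary $d_i$-load in Case~3.1 is also $\le \lambda$), so $D_T \le \lambda$ and $Cu(T) \le (a+b\lambda/2)w(T)$; or (ii) a Case~3.2 single-customer tour for a skipped $v_i$ with $d_i > \lambda$, in which case $w(T) = 2 l_i$, $D_T = d_i$, and $Cu(T) \le 2 a l_i + b d_i l_i$.

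Taking expectations, I invoke Lemma~\ref{l2} together with the probabilities from Lemma~\ref{pr}: each $v_i$ with $d_i \le \delta$ lies in $V'\setminus V^{**}$ almost surely, while each $v_i$ with $d_i > \delta$ lies in $V'\setminus V^{**}$ with probability at most $1/2$. A short case analysis on $d_i \in (0,\lambda]$ shows that in every range the expected per-customer additional-visit weight is at most $2 d_i l_i/(\lambda-\delta)$, using $\delta \in [0,\lambda/2]$ in the intermediate cases. Combined with the shortcutting bound of $\alpha\tau$ for the $\alpha$-approximate TSP tour on $V'\setminus V^{**}$ computed inside $ALG.1$, this yields $\EE{w(\text{main sub-tours})} \le \alpha\tau + \eta/(\lambda-\delta)$ and hence $\EE{Cu(\text{main sub-tours})} \le (a+b\lambda/2)\bigl(\alpha\tau + \eta/(\lambda-\delta)\bigr)$. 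For the skipped customers, the expected cost per $v_i$ is at most $a l_i + b d_i l_i/2$, which I then need to absorb into $(a+b\lambda/2)\cdot 2 d_i l_i/(\lambda-\delta)$: the $a$-part follows from $2 d_i > 2\lambda \ge \lambda-\delta$, and the $b$-part from $\lambda/(\lambda-\delta) \ge 1/2$. Summing gives the desired bound. The main obstacle is this last absorption step, since the raw per-tour cost $2 a l_i + b d_i l_i$ does not obviously fit the uniform $(a+b\lambda/2)\cdot 2 d_i l_i/(\lambda-\delta)$ template, and only the combination $d_i > \lambda$ together with the factor-$1/2$ saving from Lemma~\ref{pr} makes the arithmetic work.
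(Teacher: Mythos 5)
Your proposal is correct and follows essentially the same route as the paper's proof: the same split $\T=\T'\cup\T'''$ with Lemma~\ref{pr} handling the set-cover part, the same forward/reverse averaging argument showing each optimized tour costs at most $a\,w(T)+\tfrac{b}{2}D_T\,w(T)$, the same combination of Lemma~\ref{l2} with the probabilities $1$ and $1/2$ to get a per-customer expected additional weight of $2d_il_i/(\lambda-\delta)$, and the same final absorption of the single-customer tours for $d_i>\lambda$. The only difference is presentational (you work with unnormalized $a,b,\tau,\eta$ throughout, whereas the paper carries the normalized integral notation and simplifies at the end).
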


\begin{theorem}\label{t6}
For unsplittable Cu-VRP with any constant $\gamma_0>0$ and  some small $\varepsilon>0$, there is a randomized $(\alpha+1+\ln2+\varepsilon<3.194)$-approximation algorithm for $\gamma>\gamma_0$.
\end{theorem}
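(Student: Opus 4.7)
The plan is to invoke Lemma~\ref{t5}, which bounds $\EE{Cu(\T)}$ by $\ln 2\cdot Cu(\T^*)$ plus a multiple of $LB$, and to tune the parameters $(\lambda,\delta)$ so that this multiple is at most $\alpha+1+\varepsilon$. Since in unsplittable Cu-VRP the demand realization is deterministic, Lemma~\ref{lb} gives $LB\leq Cu(\T^*)$; combining then yields $\EE{Cu(\T)}\leq(\alpha+1+\ln 2+\varepsilon)\,Cu(\T^*)$. Under $\alpha=1.5$ one has $\alpha+1+\ln 2<3.1932$, so picking any $\varepsilon<0.0008$ keeps the final ratio below $3.194$.

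Write $g(\sigma):=\frac{(\gamma+\lambda/2)(\alpha\sigma+1/(\lambda-\delta))}{\gamma\max\lrC{\sigma,1}+0.5}$ for the $LB$-coefficient. Exactly as in the proof of Theorem~\ref{t2}, $g$ is affine-increasing on $[0,1]$ and a M\"obius function (hence monotone) on $[1,\infty)$, so $\sup_{\sigma\geq 0}g(\sigma)=\max\lrC{g(1),g(\infty)}$. The key choice is $\lambda:=\min\lrC{1,\ 2\gamma/\alpha}$, tuned so that $g(\infty)=\alpha(1+\lambda/(2\gamma))\leq\alpha+1$ for every $\gamma>0$. A direct calculation at $\delta=0$ gives: in Case~1 ($\gamma\geq\alpha/2$, $\lambda=1$), $g(1)=\alpha+1$ and $g(\infty)=\alpha+\alpha/(2\gamma)\leq\alpha+1$; in Case~2 ($\gamma<\alpha/2$, $\lambda=2\gamma/\alpha$), $g(\infty)=\alpha+1$ and
\[
g(1)=\frac{\gamma(\alpha+1)/\alpha\cdot\alpha(2\gamma+1)/(2\gamma)}{(2\gamma+1)/2}=\alpha+1,
\]
because the factor $2\gamma+1$ in the numerator cancels exactly with $2(\gamma+1/2)$ in the denominator. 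So at $\delta=0$ the ratio is identically $\alpha+1$ for every admissible $\gamma$.

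Finally I turn on a small $\delta>0$ with $1/\delta\in\mathbb{N}$ and $\delta\leq\lambda/2$. Since $g$ is continuous and nondecreasing in $\delta$, the excess $g(1;\lambda,\delta)-(\alpha+1)$ is bounded by $C(\gamma_0)\cdot\delta$ uniformly on $\gamma>\gamma_0$; in Case~2 this uses $\lambda-\delta\geq\lambda/2=\gamma/\alpha\geq\gamma_0/\alpha$ to keep $1/(\lambda-\delta)$ under control. Taking $\delta=\delta(\gamma_0,\varepsilon)$ small enough (and rounding so that $1/\delta\in\mathbb{N}$) forces $\max\lrC{g(1),g(\infty)}\leq\alpha+1+\varepsilon$, completing the estimate. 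The running time of $ALG.3(\lambda,\delta)$ is $n^{O(1/\delta)}$, which is polynomial once $\gamma_0$ and $\varepsilon$ are fixed constants; this is exactly where the hypothesis $\gamma>\gamma_0$ enters. The main technical obstacle is this uniform $\delta$-perturbation step in Case~2: while the $\delta=0$ identity is clean, one has to track the $\gamma$-dependence of the leading coefficient of $\delta$ in $g(1)$ and verify it stays bounded on $[\gamma_0,\infty)$.
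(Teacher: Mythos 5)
Your proposal is correct and follows essentially the same route as the paper: it invokes Lemma~\ref{t5}, uses $LB\leq Cu(\T^*)$, picks $\lambda=\min\{1,2\gamma/\alpha\}$ so that the $LB$-coefficient equals $\alpha+1$ at $\delta=0$ in both regimes, and then absorbs the $\delta>0$ perturbation into $\varepsilon$ while charging the $n^{O(1/\delta)}$ running time to the hypothesis $\gamma>\gamma_0$. The only cosmetic difference is that you package the $\sigma$-analysis as $\max\{g(1),g(\infty)\}$ (as in Theorem~\ref{t2}) where the paper argues directly on the cases $\sigma\leq 1$ and $\sigma\geq 1$, and the paper makes your ``uniform perturbation'' step explicit by taking $\delta$ proportional to $\lambda$ (namely $\delta\leq\lambda\varepsilon'/(1+\varepsilon')$ with $(1+\alpha)\varepsilon'\leq\varepsilon$), which yields the same bound.
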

\begin{proof}
Let $\lambda=\min\{1,2\gamma/\alpha\}$. 
By Lemma~\ref{t5}, for any $\lambda\in(0,1]$ and constant $\varepsilon>0$, setting $0<\delta\leq\frac{\lambda\cdot\varepsilon'}{1+\varepsilon'}$ satisfying $(1+\alpha)\cdot \varepsilon'\leq \varepsilon$, $\delta\in(0,\lambda/2]$ and $1/\delta\in\mathbb{N}$, $ALG.3(\lambda,\delta)$ generates a solution $\T$ with an expected approximation ratio of
\begin{equation}\label{eqt}
\begin{split}
\ln2+\frac{\gamma\cdot\lrA{\alpha\cdot\sigma+\frac{1}{\lambda-\delta}}+\frac{\lambda}{2}\cdot\lrA{\alpha\cdot\sigma+\frac{1}{\lambda-\delta}}}{\gamma\cdot\max\lrC{\sigma,\ 1}+0.5}&\leq \ln2+\frac{\gamma\cdot\lrA{\alpha\cdot\sigma+\frac{1}{\lambda}}+\gamma\cdot\frac{\varepsilon'}{\lambda}+\frac{\lambda}{2}\cdot\lrA{\alpha\cdot\sigma+\frac{1}{\lambda}}+\frac{\lambda}{2}\cdot\frac{\varepsilon'}{\lambda}}{\gamma\cdot\max\lrC{\sigma,\ 1}+0.5}\\
&\leq \ln2+\frac{\gamma\cdot\lrA{\alpha\cdot\sigma+\frac{1}{\lambda}}+\frac{\lambda}{2}\cdot\lrA{\alpha\cdot\sigma+\frac{1}{\lambda}}}{\gamma\cdot\max\lrC{\sigma,\ 1}+0.5}+\varepsilon,
\end{split}
\end{equation}
where the first inequality follows from $\frac{1}{\lambda-\delta}\leq \frac{1}{\lambda}+\frac{\varepsilon'}{\lambda}$ since $\delta\leq\frac{\lambda\cdot\varepsilon'}{1+\varepsilon'}$, and the second from $\lra{\frac{\gamma}{\lambda}+0.5}\cdot\varepsilon'\leq \lra{\gamma\cdot \max\{\sigma,1\}+0.5}\cdot\lra{1+\alpha}\cdot\varepsilon'\leq \varepsilon$ since $\lambda=\min\{1,2\gamma/\alpha\}$ and $(1+\alpha)\cdot \varepsilon'\leq \varepsilon$.

Next, by using an analysis similar to the one in the proof of Theorem~\ref{unttt0},  we show that the approximation ratio is bounded by $\alpha+1+\ln2+\varepsilon$.

\textbf{Case~1: $\gamma\geq\alpha/2$.} We have $\lambda=1$ and we further consider the following two cases.

\textbf{Case~1.1: $\sigma\leq1$.} By (\ref{eqt}), the approximation ratio is at most 
\begin{align*}
&\ln2+\frac{\gamma\cdot\lrA{\alpha+1}+\frac{1}{2}\cdot\lrA{\alpha+1}}{\gamma+0.5}+\varepsilon=\alpha+1+\ln2+\varepsilon.
\end{align*}

\textbf{Case~1.2: $\sigma\geq1$.} By (\ref{eqt}), the approximation ratio is at most 
\begin{align*}
\ln2+\frac{\gamma\cdot\lrA{\alpha\cdot\sigma+1}+\frac{1}{2}\cdot\lrA{\alpha\cdot\sigma+1}}{\gamma\cdot\sigma+0.5}+\varepsilon&= \ln2+\frac{\gamma\cdot\lrA{\alpha\cdot\sigma+1}+\frac{1}{2}\cdot\lrA{\alpha\cdot\sigma\cdot\lrA{\frac{1}{\sigma}+1-\frac{1}{\sigma}}+1}}{\gamma\cdot\sigma+0.5}+\varepsilon\\
&\leq\ln2+\frac{\gamma\cdot\lrA{\alpha\cdot\sigma+1}+\frac{1}{2}\cdot\alpha+\gamma\cdot\sigma\cdot\lrA{1-\frac{1}{\sigma}}+0.5}{\gamma\cdot\sigma+\frac{1}{2}}+\varepsilon\\
&=\alpha+1+\ln2+\varepsilon,
\end{align*}
where the inequality follows from $\alpha/2\leq \gamma$ by the definition of Case~1.

\textbf{Case~2: $\gamma\geq\alpha/2$.} We have $\lambda=2\gamma/\alpha$. By (\ref{eqt}), the approximation ratio is at most 
\begin{align*}
\ln2+\frac{\gamma\cdot\lrA{\alpha\cdot\sigma+\frac{1}{2\gamma/\alpha}}+\frac{2\gamma/\alpha}{2}\cdot\lrA{\alpha\cdot\sigma+\frac{1}{2\gamma/\alpha}}}{\gamma\cdot\max\{\sigma,\ 1\}+0.5}+\varepsilon&=\ln2+\frac{\gamma\cdot\alpha\cdot\sigma+\alpha/2+\gamma\cdot\sigma+1/2}{\gamma\cdot\max\{\sigma,\ 1\}+0.5}+\varepsilon\\
&=\ln2+(\alpha+1)\cdot\frac{\gamma\cdot\sigma+1/2}{\gamma\cdot\max\{\sigma,\ 1\}+0.5}+\varepsilon\\
&\leq\alpha+1+\ln2+\varepsilon.
\end{align*}

Let $\varepsilon$ be a sufficiently small constant and $\alpha=\frac{3}{2}$. We can set $\delta=\frac{1}{\ceil{\frac{1+\varepsilon'}{\lambda\cdot\varepsilon'}}}$, where $\varepsilon'=\frac{\varepsilon}{1+\alpha}$. Since $\lambda\leq 1$, the running time is $n^{O(1/\delta)}=n^{O(1/(\min\{\gamma,1\}\cdot \varepsilon))}$, dominated by solving the linear program of weighted set cover, and the approximation ratio is at most $2.5+\ln2+\varepsilon<3.194$. 
\end{proof}

\subsection{The Second Algorithm}
In this section, we propose an $(\alpha+1+\ln2-0.029)$-approximation algorithm for unsplittable Cu-VRP with $\gamma\in(0,0.285]$, denoted as $ALG.4(\lambda)$.
Combing with Lemma~\ref{a=0} and Theorem~\ref{t6}, $ALG.4(\lambda)$ implies an $(\alpha+1+\ln2+\varepsilon<3.194)$-approximation algorithm for unsplittable Cu-VRP.

In $ALG.4(\lambda)$, we call $ALG.1(\lambda,0)$ to obtain a set of tours $\T'$ to satisfy all customers, and then we optimize each tour in $\T'$ as Line~\ref{opt} in $ALG.3$.

\begin{algorithm}[ht]
\caption{An algorithm for unsplittable Cu-VRPSD ($ALG.4(\lambda)$)}
\label{alg4}
\small
\vspace*{2mm}
\textbf{Input:} An instance of unsplittable Cu-VRPSD, and two parameters $\lambda\in(0,1]$, $\delta\in(0,\lambda/2]$, and $1/\delta\in\mathbb{N}$.\\
\textbf{Output:} A feasible solution $\T$ to unsplittable Cu-VRPSD.

\begin{algorithmic}[1]
\State Call $ALG.1(\lambda,0)$ to obtain a set of tours $\T'$ to satisfy all customers.
\State\label{opt+} For each tour in $\T'$, ensure the load of the vehicle is the delivered units of goods, obtain another tour with the opposite direction, and choose the better one into $\T$.
\State Return $\T$.
\end{algorithmic}
\end{algorithm}

\begin{lemma}[*]\label{t7}
For unsplittable Cu-VRP with any $\lambda\in(0,1]$, $ALG.4(\lambda)$ generates a solution $\T$ with an expected cumulative cost of
\[
\frac{\gamma\cdot\lrA{\alpha\cdot\sigma+\int^{\lambda}_{0}\frac{2x}{\lambda}dF(x)+\int_\lambda^1 1dF(x)}+\lrA{\frac{\lambda}{2}\cdot\alpha\cdot\sigma+\int^{\lambda}_{0}\frac{x^2/2+\lambda\cdot x}{2\lambda}dF(x)+\int_\lambda^1\frac{x}{2}dF(x)}}{\gamma\cdot\max\lrC{\sigma,\ 1}+0.5}\cdot LB.
\]
\end{lemma}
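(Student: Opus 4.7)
The plan is to bound the expected cost of $ALG.4(\lambda)$ as the expected cost of $ALG.1(\lambda, 0)$ (given by Lemma~\ref{t1} with $\delta = 0$) minus an explicit saving that the optimization in Line~\ref{opt+} achieves. The vehicle cost part is immediate: Line~\ref{opt+} only rescales the load on each tour and possibly reverses its orientation, leaving the multiset of edges traversed intact; hence the $\gamma$-portion of Lemma~\ref{t7}'s bound equals the $\gamma$-portion of Lemma~\ref{t1} with $\delta=0$. For the cargo cost, the baseline upper bound from Lemma~\ref{t1} already applies since reducing the load and picking the better orientation can only decrease cargo, so it remains to extract a saving of $\int_0^\lambda \frac{x^2}{4\lambda} dF(x)$ in scaled units, which converts Lemma~\ref{t1}'s $\frac{x^2 + \lambda x}{2\lambda}$ coefficient into Lemma~\ref{t7}'s $\frac{x^2/2 + \lambda x}{2\lambda}$ via the identity $\frac{x^2 + \lambda x}{2\lambda} - \frac{x^2}{4\lambda} = \frac{x^2/2 + \lambda x}{2\lambda}$.

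To extract this saving I would use a reversal identity: for any tour $T$ with delivered demand $D_T$ and weight $w(T)$, the forward cargo plus the reverse cargo equals $b \cdot D_T \cdot w(T)$ exactly (edge-by-edge, the two orientation-specific loads always sum to $D_T$), so the better orientation chosen by Line~\ref{opt+} incurs cargo at most $b \cdot D_T \cdot w(T)/2$. On the dedicated single-customer tours $v_0 v_i v_0$ (used for $v_i$ with $d_i > \lambda$ and for each $v_i$ with $d_i \leq \lambda$ in Case~3.1 where $L_{i-1} < d_i$), this matches the $b \cdot d_i \cdot l_i$ already in Lemma~\ref{t1} and gives no improvement. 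The gain must therefore come from the main tours around each Case~3.1 customer $v_i$: in $ALG.1(\lambda, 0)$, the two "pass-through" edges incident to $v_i$ (ending the preceding main tour and starting the following one) carried respectively $L_{i-1}$ and $L_i = L_{i-1} + \lambda - d_i$ units in expectation; after load adjustment plus reversal, a careful accounting shows these contribute at most $b(\lambda - d_i/2) l_i$ in expectation conditional on Case~3.1 (versus the $b\lambda\, l_i$ in $ALG.1(\lambda, 0)$), a conditional saving of $\frac{b \, d_i}{2}\, l_i$.

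Since Case~3.1 occurs with probability $d_i/\lambda$ (by the uniformity of $L_{i-1}$ on $[0,\lambda)$ from Lemma~\ref{l0}), the unconditional saving per $v_i$ with $d_i \leq \lambda$ is $\frac{b\, d_i^2}{2\lambda}\, l_i$; summing over all such $v_i$ and rewriting through $\int_0^\lambda x^2\, dF(x) = \frac{2}{\eta}\sum_{d_i \leq \lambda} d_i^2 l_i$ produces the target $\int_0^\lambda \frac{x^2}{4\lambda}\, dF(x)$ in scaled units, which subtracted from Lemma~\ref{t1}'s bound yields exactly Lemma~\ref{t7}'s bound. The main obstacle is making the conditional-saving step rigorous: the orientation of a main tour is chosen globally to minimize its total cargo, not locally around each pass-through $v_i$, so the $\frac{d_i}{2}\, l_i$ saving cannot simply be read off one edge at a time. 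The hard work will be to aggregate the reversal identity across the preceding and following main tours of $v_i$, exploiting the fact that the edge-by-edge sum-to-$D_T$ property still forces an overall saving of at least $\frac{b\, d_i}{2}\, l_i$ per Case~3.1 customer even under the tour-level direction choice.
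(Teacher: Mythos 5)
There is a genuine gap, and it is precisely the step you flag as ``the hard work'': your plan routes the entire saving through the tour-reversal identity, and you correctly observe that the orientation of a main tour is chosen globally, so the per-customer saving of $\tfrac{b\,d_i}{2}l_i$ cannot be read off locally. That aggregation argument is never supplied, and it is doubtful it can be made to work as stated: the reversal identity only gives that the better orientation of a tour $T$ costs at most $b\cdot D_T\cdot w(T)/2$ in cargo, and since the baseline analysis already charges the TSP edges an expected load of $\lambda/2$, the orientation choice does not obviously yield any additional expected gain on the depot-connection edges you are targeting.

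The paper's proof shows the reversal is a red herring for this lemma: the whole saving comes from the load-equals-delivered adjustment alone, which you mention in passing but do not exploit. With $\delta=0$, a Case~3.1 customer $v_i$ forces the vehicle to haul $L_{i-1}$ \emph{undelivered} units back to the depot on the edge $v_iv_0$ that closes the preceding main tour; adjusting the initial load of that tour so that load equals delivered goods zeroes out this edge's cargo. The conditional cargo of the additional visits thus drops from $b(2L_{i-1}+\lambda)l_i$ to $b(L_{i-1}+\lambda)l_i$, and integrating over $L_{i-1}\sim U[0,d_i)$ with density $1/\lambda$ (Case~3.1 occurs exactly when $L_{i-1}<d_i$) replaces the coefficient $\frac{d_i^2+\lambda d_i}{\lambda}$ from Lemma~\ref{l2} by $\frac{d_i^2/2+\lambda d_i}{\lambda}$ --- exactly the $\frac{x^2}{4\lambda}$ reduction you computed, with no reversal and no global argument needed. (The single-customer tours for $d_i>\lambda$ are unchanged, as you note.) The orientation choice in Line~\ref{opt+} is only used in the analysis of $ALG.3$ (Claim~\ref{claim} inside Lemma~\ref{t5}); the proof of Lemma~\ref{t7} explicitly takes a different route. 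So your quantitative target and bookkeeping are right, but the mechanism you rely on is both unproven and unnecessary, while the sufficient mechanism is already in your hands.
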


Similarly, we use $ALG.4(\lambda)$ to design an algorithm for unsplittable Cu-VRP shown in Algorithm~\ref{APP4}.

\begin{algorithm}[ht]
\caption{An approximation algorithm for unsplittable Cu-VRPSD ($APPROX.4(\lambda,\theta,p)$)}
\label{APP4}
\small
\vspace*{2mm}
\textbf{Input:} An instance of unsplittable Cu-VRP, and three parameters $\lambda\in(0,1]$, $\theta\in(0,1)$ and $p\in(0,1)$.\\
\textbf{Output:} A feasible solution to unsplittable Cu-VRPSD.
\begin{algorithmic}[1]
\State 
Call $ALG.4(\lambda)$ with a probability of $p$ and call $ALG.4(\theta\cdot\lambda)$ with a probability of
$1-p$.
\end{algorithmic}
\end{algorithm}

\begin{theorem}[*]\label{t8}
For unsplittable Cu-VRP, when $\alpha=1.5$, we can find $(\lambda,\theta,p)$ such that the approximation ratio of $APPROX.4(\lambda,\theta,p)$ is bounded by $3.163$ for any $\gamma\in(0,0.428]$. 
Moreover, for any $\alpha\in[1,1.5]$, we can find $(\lambda,\theta,p)$ such that the approximation ratio of $APPROX.4(\lambda,\theta,p)$ is bounded by $\alpha+1+\ln2-0.029$ for any $\gamma\in(0,0.285]$.
\end{theorem}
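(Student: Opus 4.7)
The plan is to mirror the argument used for Theorem~\ref{t2}, but replacing Lemma~\ref{t1} with Lemma~\ref{t7} so that the reverse-direction optimization in $ALG.4$ gives a tighter cargo cost. Set $\lambda=\min\{1, 4\gamma/\alpha\}$ as before, and define $\mu\coloneqq 0$ if $\int^\lambda_0 xdF(x)=0$ and $\mu\coloneqq \int^\lambda_0 x^2dF(x)/\int^\lambda_0 xdF(x)$ otherwise. By applying the bounds $\int^1_\lambda 1\,dF(x)\le\frac{1}{\lambda}\int^1_\lambda x\,dF(x)$, $\int^\lambda_0 x\,dF(x)\le 1$, and $\int^1_0 x\,dF(x)=1$ exactly as in the proof of Theorem~\ref{t2}, I expect to reduce the expression in Lemma~\ref{t7} to the form
\[
R_1(\sigma)\le \frac{\gamma\lrA{\alpha\sigma+2/\lambda}+\lrA{\tfrac{\lambda}{2}\alpha\sigma+\tfrac{\mu/(2\lambda)+1}{1}\cdot\tfrac{1}{2}+\tfrac{1}{2}}}{\gamma\max\{\sigma,1\}+0.5}
\]
for $ALG.4(\lambda)$, and (using Lemma~\ref{a1} to bound $\int^{\theta\lambda}_0 xdF(x)\le (\lambda-\mu)/(\lambda-\theta\lambda)$) an analogous bound for $ALG.4(\theta\lambda)$. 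The only essential change from Theorem~\ref{t2} is that the coefficient in front of $\mu$ is halved (because $x^2$ in Lemma~\ref{t1} becomes $x^2/2$ in Lemma~\ref{t7}).

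Next I would choose $p$ as the unique convex combination that cancels the $\mu$-dependent terms in the numerators of the two bounds; solving the resulting linear equation gives
\[
p=\frac{\frac{1}{4(\lambda-\theta\lambda)}+\frac{\gamma}{\theta\lambda(\lambda-\theta\lambda)}}{\frac{1}{4\lambda}+\frac{1}{4(\lambda-\theta\lambda)}+\frac{\gamma}{\theta\lambda(\lambda-\theta\lambda)}},
\]
which lies in $[0,1]$ and differs from the formula in Theorem~\ref{t2} only through the halved cargo-cost coefficient. After eliminating $\mu$, the expected approximation ratio of $APPROX.4(\lambda,\theta,p)$ is a function $R(\sigma)$ of the form $(a'\sigma+b')/(c'\max\{\sigma,1\}+d')$ with positive constants. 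As in Theorem~\ref{t2}, this expression is maximized only at $\sigma=1$ or $\sigma\to\infty$, so the approximation ratio equals $\max\{R(1),R(\infty)\}$.

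With $\alpha=1.5$, $\lambda=\min\{1,4\gamma/\alpha\}$, and the above formula for $p$, I would then numerically minimize $\max\{R(1),R(\infty)\}$ over $\theta\in(0,1)$ for each $\gamma$; the two quantitative claims then reduce to verifying that a fixed choice of $\theta$ (depending on the regime) achieves the claimed bound. For the first claim, checking a single well-chosen $\theta$ across $\gamma\in(0,0.428]$ should yield $\max\{R(1),R(\infty)\}\le 3.163$. For the second claim, which asks for $\alpha+1+\ln 2-0.029$ with \emph{arbitrary} $\alpha\in[1,1.5]$ on the smaller range $\gamma\in(0,0.285]$, I would note that over this small-$\gamma$ regime $\lambda=4\gamma/\alpha$, and substitute this into the formulas for $R(1)$ and $R(\infty)$; after simplification the ratio becomes essentially affine in $\alpha$, so it suffices to verify the bound numerically at $\alpha=1.5$ (the worst value, since larger $\alpha$ is absorbed by the additive $\alpha+1$ term).

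The main obstacle is the last step: producing a clean closed-form argument that simultaneously covers all $\alpha\in[1,1.5]$ without doing the numerical sweep for each $\alpha$. The likely route, following the style the authors already adopted, is to defer the verification to a computer-assisted enumeration over a fine grid of $\theta$ (analogous to Figure~\ref{fig2}), tabulate the optimal $\theta$ as a function of $\gamma$ (and implicitly $\alpha$), and invoke monotonicity of $R(1)$ and $R(\infty)$ in $\alpha$ to conclude. Beyond this, the algebra is routine, so the proof will be marked with `*' and relegated to the appendix.
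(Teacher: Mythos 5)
Your overall strategy is the right one and matches the paper's: take the two bounds from Lemma~\ref{t7}, introduce $\mu$, use Lemma~\ref{a1} to control $\int_0^{\theta\lambda}x\,dF(x)$, pick $p$ to cancel the $\mu$-terms (your formula for $p$ is exactly the paper's), and reduce to $\max\{R(1),R(\infty)\}$. However, there is a concrete quantitative gap: you carry over $\lambda=\min\{1,4\gamma/\alpha\}$ from Theorem~\ref{t2}, and with that choice the claimed bounds are unreachable for \emph{any} $\theta$. Indeed, when $\lambda=c\gamma/\alpha$ one gets $R(\infty)=\alpha+\frac{c}{2}\bigl(p+(1-p)\theta\bigr)$ with $p=\frac{c\theta+4\alpha}{c\theta(2-\theta)+4\alpha}$; for $c=4$ and $\alpha=1.5$ the quantity $p+(1-p)\theta$ is at least about $0.927$ over all $\theta\in(0,1)$, so $R(\infty)\geq 3.35$, which already exceeds both $3.163$ and $\alpha+1+\ln 2-0.029\approx 3.164$. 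The paper instead rebalances to $\lambda=\min\{1,3.5\gamma/\alpha\}$: because Lemma~\ref{t7} halves only the $x^2$ contribution (not the vehicle-cost term $\gamma\cdot\frac{2}{\lambda}$), the optimal trade-off between $\gamma\cdot\frac{2}{\lambda}$ and $\frac{\lambda}{2}\alpha\sigma$ shifts, and $3.5$ is the constant that makes $\max\{R(1),R(\infty)\}$ land below the stated targets (with $\theta=0.5043$ for the first claim and $\theta=0.5$ for the second). The thresholds $0.428\approx 1.5/3.5$ and $0.285\approx 1/3.5$ are calibrated precisely so that $3.5\gamma/\alpha\le 1$ throughout; note that with your factor $4$ you would also have $4\gamma/\alpha>1$ for $\alpha=1$ near $\gamma=0.285$, contradicting your assumption that $\lambda=4\gamma/\alpha$ on the whole range.

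A secondary, smaller point: for the second claim the paper does not reduce to checking $\alpha=1.5$ by a monotonicity argument; it fixes $\theta=1/2$, computes $p=\frac{8\alpha+3.5}{8\alpha+5.25}$ in closed form, and verifies two explicit scalar inequalities (for the $\gamma\sigma$-coefficient and for the constant term) over all $\alpha\in[1,1.5]$. Once you correct the choice of $\lambda$, that style of verification goes through and your remaining steps are sound.
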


By Theorems~\ref{t6} and \ref{t8}, we have the following result.

\begin{theorem}\label{th7}
For unsplittable Cu-VRP with any small constant $\varepsilon>0$, there is a randomized $(\alpha+1+\ln2+\varepsilon<3.194)$-approximation algorithm.
\end{theorem}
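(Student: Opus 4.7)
The plan is to combine the two algorithms $ALG.3(\lambda,\delta)$ from Theorem~\ref{t6} and $APPROX.4(\lambda,\theta,p)$ from Theorem~\ref{t8} to cover the entire range of $\gamma = a/b$, together with the trivial case $a=0$ handled by Lemma~\ref{a=0}. First, I would observe that neither result alone is sufficient: Theorem~\ref{t6} only applies for $\gamma > \gamma_0$ with $\gamma_0$ a fixed positive constant (because the running time of $ALG.3(\lambda,\delta)$ is $n^{O(1/(\min\{\gamma,1\}\cdot\varepsilon))}$ and explodes as $\gamma \to 0$), while Theorem~\ref{t8} only yields the bound $\alpha + 1 + \ln 2 - 0.029$ when $\gamma \in (0, 0.285]$. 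The natural fix is to pick the threshold so that the two regions of applicability overlap.

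Concretely, given $\varepsilon > 0$, I would fix the constant $\gamma_0 = 0.285$ and design the final algorithm as a simple dispatcher on $\gamma$: if $a = 0$ (equivalently $\gamma = 0$), return the exact optimum via Lemma~\ref{a=0}; if $\gamma \in (0, \gamma_0]$, invoke $APPROX.4(\lambda,\theta,p)$ with the $(\lambda,\theta,p)$ guaranteed by Theorem~\ref{t8}; if $\gamma > \gamma_0$, invoke $ALG.3(\lambda,\delta)$ with the parameters of Theorem~\ref{t6} applied to this $\gamma_0$. In the middle regime we get approximation ratio at most $\alpha + 1 + \ln 2 - 0.029 < \alpha + 1 + \ln 2 + \varepsilon$, while in the large-$\gamma$ regime we get exactly $\alpha + 1 + \ln 2 + \varepsilon$. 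Taking the worst of the three, and plugging in $\alpha = 1.5$ (so $\alpha + 1 + \ln 2 \approx 3.1931$), yields the claimed bound $< 3.194$ for all sufficiently small $\varepsilon$.

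I would also verify that the overall running time is polynomial. The $a=0$ case is $O(n)$ by Lemma~\ref{a=0}. $APPROX.4$ runs in polynomial time since it just invokes $ALG.4(\lambda)$, which is a constant number of calls to $ALG.1(\lambda,0)$ plus the TSP approximation. For $\gamma > \gamma_0 = 0.285$, the running time of $ALG.3(\lambda,\delta)$ is $n^{O(1/(\min\{\gamma,1\}\cdot\varepsilon))} \le n^{O(1/(\gamma_0 \cdot \varepsilon))}$, which is polynomial for any fixed $\varepsilon$ since $\gamma_0$ is a fixed constant. This is precisely why we need the threshold to be bounded away from zero, and precisely why the combination with the small-$\gamma$ algorithm is necessary.

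There is no real obstacle here beyond the case split, since Theorems~\ref{t6} and \ref{t8} have already absorbed all the technical work; the only mild subtlety is choosing the threshold so that Theorem~\ref{t8}'s applicability interval $(0,0.285]$ can serve as $(0,\gamma_0]$ for a value of $\gamma_0 > 0$ large enough to keep Theorem~\ref{t6}'s running time polynomial, which is what motivates $\gamma_0 = 0.285$. With that alignment the proof is essentially a one-line case analysis plus taking the maximum of the three ratios.
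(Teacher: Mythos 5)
Your proposal is correct and matches the paper's proof, which likewise obtains Theorem~\ref{th7} by dispatching on $\gamma$: Lemma~\ref{a=0} for $a=0$, Theorem~\ref{t8} ($APPROX.4$) for $\gamma\in(0,0.285]$, and Theorem~\ref{t6} ($ALG.3$ with $\gamma_0=0.285$) for larger $\gamma$, with the running-time concern for small $\gamma$ resolved exactly as you describe.
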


Our approximation ratio for unsplittable Cu-VRP almost matches the current best-known approximation ratio of $\alpha+1+\ln2-\varepsilon'$ for unsplittable VRP~\citep{blauth2022improving,uncvrp}, where $\varepsilon'$ is a small constant related to $\alpha$.
A natural question is whether we can slightly improve our approximation ratio by using the methods in~\citep{blauth2022improving,uncvrp}. 
However, we remark that even with a stronger lower bound of $LB_\varepsilon\coloneqq a\cdot\max\{\tau,\ (1+\varepsilon)\cdot\eta\}+b\cdot0.5\cdot\eta$ for some constant $\varepsilon>0$, we still cannot achieve a better approximation ratio (for general $\gamma>0$) by using the analysis in Theorem~\ref{t6}.

\section{An Algorithm for Splittable Cu-VRPSD}
In this section, we propose a randomized $(\alpha+1)$-approximation algorithm for splittable Cu-VRPSD.

\subsection{The Algorithm}
In the splittable case, if the vehicle cannot fully satisfy the visiting customer, it can partially satisfy customers by delivering all remaining goods. Thus, we may no longer need to consider using the idea of backup goods. We will modify $ALG.1(\lambda, \delta)$ with $\delta=0$ into $ALG.S(\lambda)$, where we will not distinguish between backup goods and normal goods.

Initially, we load the vehicle with $S_0=L_0$ units of goods at the depot, where $L_{0}\sim U[0,\lambda)$. Then, $ALG.S(\lambda)$ adapts the following greedy strategy: the vehicle will return to the depot to reload $\lambda$ units of goods whenever its current load is less than the demand of the severing customer. 

The details of $ALG.S(\lambda)$ can be found in Algorithm~\ref{alg5}.

\begin{algorithm}[ht]
\caption{An algorithm for splittable Cu-VRPSD ($ALG.S(\lambda)$)}
\label{alg5}
\small
\vspace*{2mm}
\textbf{Input:} An instance of splittable Cu-VRPSD, and one parameter $\lambda\in(0,1]$.\\
\textbf{Output:} A feasible solution $\T$ to splittable Cu-VRPSD.

\begin{algorithmic}[1]
\State Obtain an $\alpha$-approximate TSP tour $T^*$ using an $\alpha$-approximation algorithm for metric TSP, orient $T^*$ in either clockwise or counterclockwise direction, and denote $T^*=v_0v_1v_2\dots v_n v_0$ by renumbering the customers following the direction.

\State Load the vehicle with $S_0\coloneqq L_0$ units of goods, where $L_0\sim U[0,\lambda)$.

\State Initialize $i\coloneqq 1$ and $V^*\coloneqq\emptyset$.

\While{$i\leq n$}

\State Go to customer $v_i$.

\If {$d_{i}\leq L_{i-1}$}
\State\label{case1ofalg5} Deliver $(d_i,0)$ units of goods to $v_i$, and update $S_i\coloneqq L_i$, where $L_{i}\coloneqq L_{i-1}+\ceil{\frac{d_i-L_{i-1}}{\lambda}}\cdot \lambda-d_i= L_{i-1}-d_i$.
\State $i \coloneqq i+1$.
\Else
\State Deliver $L_{i-1}$ units of goods to $v_i$, and update $d_i\coloneqq d_i - L_{i-1}$. 
\State\label{splitaddition} Return to the depot, load the vehicle with $\lambda$ units of goods, and update $L_{i-1}\coloneqq \lambda$.
\EndIf
\EndWhile
\State Go to the depot.
\end{algorithmic}
\end{algorithm}

\subsubsection{The analysis}
Similarly, for each $i\in[n+1]$, the vehicle  in $ALG.S(\lambda)$ carries $S_{i-1}=L_{i-1}$ units of goods when traveling along the edge $v_{i-1}v_{i\bmod (n+1)}$ of the TSP tour $T^*$. We have the following lemma.

\begin{lemma}\label{sl0}
For any $i\in [n+1]$, it holds $L_{i-1}=L_0 + \ceil{\frac{\sum_{j=1}^{i-1}d_j-L_0}{\lambda}}\cdot \lambda-\sum_{j=1}^{i-1} d_j$, and moreover, $L_{i-1}\sim U[0,\lambda)$, conditioned on $\chi=d$.
\end{lemma}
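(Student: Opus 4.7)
The plan is to mirror the proof of Lemma~\ref{l0} for $ALG.1$, establishing both claims by induction on $i$. Since $ALG.S(\lambda)$ is structurally simpler than $ALG.1$ (no backup goods, no skipped customers, and partial deliveries merely trigger additional depot trips until $v_i$ is fully served), the same counting argument applies with $\delta=0$ and the reload quantum equal to $\lambda$. First I would observe that after the inner reload loop for customer $v_i$ terminates, the vehicle has performed exactly $k_i\coloneqq\Ceil{(d_i-L_{i-1})/\lambda}$ reload trips (interpreted as $0$ when $d_i\leq L_{i-1}$, since then the fraction lies in $(-1,0]$), and the updated load satisfies the unified one-step recurrence
\[
L_i \;=\; L_{i-1}+\Ceil{\tfrac{d_i-L_{i-1}}{\lambda}}\cdot\lambda-d_i \;\in\; [0,\lambda).
\]

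For the closed-form formula, the base case $i=1$ is immediate: $L_0\in[0,\lambda)$ forces $\Ceil{-L_0/\lambda}=0$. For the inductive step, plugging the inductive hypothesis for $L_{i-1}$ into the recurrence reduces the claim to the integer identity
\[
\Ceil{\tfrac{\sum_{j=1}^{i-1}d_j-L_0}{\lambda}}+\Ceil{\tfrac{d_i-L_{i-1}}{\lambda}} \;=\; \Ceil{\tfrac{\sum_{j=1}^{i}d_j-L_0}{\lambda}}.
\]
This follows once one writes $(d_i-L_{i-1})/\lambda=(\sum_{j=1}^i d_j-L_0)/\lambda - k_{i-1}$ using the inductive formula, where $k_{i-1}$ is the integer $\Ceil{(\sum_{j=1}^{i-1}d_j-L_0)/\lambda}$, and invokes $\Ceil{x-n}=\Ceil{x}-n$ for any integer $n$.

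For the distributional claim, the formula shows that $L_{i-1}$ is the unique representative in $[0,\lambda)$ of $L_0-\sum_{j=1}^{i-1}d_j\pmod{\lambda}$; equivalently, $L_{i-1}$ equals $L_0$ translated by a fixed constant and reduced modulo $\lambda$. Conditioned on $\chi=d$, the shift is deterministic, so since $L_0\sim U[0,\lambda)$ and translation modulo $\lambda$ is a measure-preserving bijection of $[0,\lambda)$, the conditional law of $L_{i-1}$ is again $U[0,\lambda)$.

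I do not anticipate any serious obstacle: the argument is nearly identical in structure to that of Lemma~\ref{l0}, and the only bookkeeping is the ceiling identity above. One simplification compared to Lemma~\ref{l0} is that there is no need to separate customers with $d_j>\lambda$ (the indicator $h_j$ disappears) because in the splittable case every customer is fully served as the vehicle proceeds along the TSP tour. The most error-prone step is confirming the edge case $d_i\leq L_{i-1}$, which I would handle by checking that the recurrence correctly recovers $L_i=L_{i-1}-d_i$ there.
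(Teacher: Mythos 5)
Your proof is correct, but it takes a different route from the paper's. The paper proves this lemma by a reduction: it splits each customer $v_i$ with $d_i>\lambda$ into a group of co-located customers each of demand at most $\lambda$, observes that $ALG.S(\lambda)$ behaves identically on the split instance viewed as an unsplittable instance, and then invokes the proof of Lemma~\ref{l0} (with $\delta=0$ and all indicators $h_j=1$, which is why the $h_j$'s disappear from the formula). You instead redo the induction directly in the splittable setting. Both arguments rest on the same core facts — the unified one-step recurrence $L_i=L_{i-1}+\ceil{(d_i-L_{i-1})/\lambda}\cdot\lambda-d_i$ and the observation that $L_{i-1}$ is a deterministic translate of $L_0$ modulo $\lambda$ — but your version is more self-contained: it does not require verifying that the algorithm's trajectory is invariant under the customer-splitting, a point the paper asserts without proof. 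Your inductive step is also cleaner than the paper's own proof of Lemma~\ref{l0}, which derives the additivity of the ceiling terms from two-sided inequalities on $L_{i'}$ rather than from the identity $\ceil{x-n}=\ceil{x}-n$ for integer $n$; the identity argument you give is equivalent and shorter. The price you pay is length — the paper's reduction is a two-sentence proof — but nothing in your outline would fail, and your explicit handling of the edge case $d_i\leq L_{i-1}$ (ceiling equal to $0$ because the argument lies in $(-1,0]$) is exactly the right check.
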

\begin{proof}
By treating each customer $v_i$ with $d_i>\lambda$ as a set of customers with a total demand of $d_i$, where each customer has a demand of at most $\lambda$, one can view an instance of splittable Cu-VRPSD as an instance of unsplittable Cu-VRPSD, where each customer has a demand of at most $\lambda$. 
Then, this lemma follows directly from the proof of Lemma~\ref{l0}.
\end{proof}

\begin{lemma}\label{sl1}
In $ALG.S(\lambda)$, the expected cumulative cost conditioned on $\chi=d$ during the vehicle's travel from $v_{i-1}$ to $v_{i}$ is $a\cdot w(v_{i-1},v_{i})+b\cdot\frac{\lambda}{2}\cdot w(v_{i-1},v_{i})$.
\end{lemma}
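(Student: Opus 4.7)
The plan is to mirror the proof of Lemma~\ref{l1} almost verbatim, since $ALG.S(\lambda)$ is the splittable analog of $ALG.1(\lambda,0)$ with no distinction between normal and backup goods. The statement concerns only the segment of the itinerary from $v_{i-1}$ to $v_i$ along the TSP tour $T^*$, where the vehicle carries $S_{i-1}=L_{i-1}$ units of goods (the additional trips to the depot triggered in Line~\ref{splitaddition} are bookkept separately, analogously to the ``additional visits'' in $ALG.1$, and will be handled by the next lemma).

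First, I invoke Lemma~\ref{sl0}, which gives $L_{i-1}\sim U[0,\lambda)$ conditioned on $\chi=d$. Taking the expectation of the load yields
\[
\EE{L_{i-1}\mid \chi=d}=\int_{0}^{\lambda}\frac{x}{\lambda}\,dx=\frac{\lambda}{2}.
\]
Second, by the definition of the cumulative cost, the contribution of the edge $v_{i-1}v_i$ to $Cu(\T)$ conditioned on $\chi=d$ is $a\cdot w(v_{i-1},v_i)+b\cdot L_{i-1}\cdot w(v_{i-1},v_i)$; taking the conditional expectation and using linearity gives
\[
a\cdot w(v_{i-1},v_i)+b\cdot\EE{L_{i-1}\mid \chi=d}\cdot w(v_{i-1},v_i)=a\cdot w(v_{i-1},v_i)+b\cdot\frac{\lambda}{2}\cdot w(v_{i-1},v_i),
\]
which is exactly the claim.

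There is essentially no hard step here; the entire content is that the uniform distribution on $[0,\lambda)$ has mean $\lambda/2$, combined with linearity of expectation. The only point requiring a bit of care is the scope: the lemma counts only the cost incurred on the TSP edge $v_{i-1}v_i$ itself, not the possible depot excursion in Line~\ref{splitaddition}; this is consistent with the convention established for Lemma~\ref{l1}, and those excursion costs will enter the overall analysis through a separate accounting (as in Lemma~\ref{l2}).
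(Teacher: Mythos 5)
Your proof is correct and matches the paper's intended argument: the paper simply states that Lemma~\ref{sl1} follows by an analysis similar to Lemma~\ref{l1}, which is exactly what you carry out — invoking Lemma~\ref{sl0} for $L_{i-1}\sim U[0,\lambda)$, computing the mean $\lambda/2$, and applying linearity of expectation. Your remark on the scope (excluding the depot excursions of Line~\ref{splitaddition}, which are accounted for in Lemma~\ref{sl2}) is also consistent with the paper's convention.
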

\begin{proof}
It can be obtained by using an analysis similar to the one in the proof of Lemma~\ref{l1}. 
\end{proof}

By Line~\ref{splitaddition} in $ALG.1(\lambda)$, if the vehicle visits $v_i$ carrying $L_{i-1}<d_i$ units of goods, it will first deliver $L_{i-1}$ units of goods for $v_i$, and then proceed to the depot to reload $\lambda$ units of goods to deliver the remaining required demand for $v_i$. We refer to this process as an \emph{additional visit to $v_0$}. Note that if $d_i\leq\lambda$, the vehicle can satisfy $v_i$ using at most one additional visit to $v_0$, as after one additional visit to $v_0$, the vehicle carries $\lambda$ units of goods. However, if $\lambda<d_i$, the vehicle may incur more than one additional visits to $v_0$.

\begin{lemma}[*]\label{sl2}
Conditioned on $\chi=d$, when serving each customer $v_i$ in $ALG.S(\lambda)$, the expected cumulative cost of the vehicle due to the possible additional visit(s) to $v_0$ is $a\cdot 2\cdot \frac{d_i}{\lambda}\cdot l_i+b\cdot d_i\cdot l_i$.
\end{lemma}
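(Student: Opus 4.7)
\textbf{Proof plan for Lemma~\ref{sl2}.} The plan is to compute, for a fixed demand realization $\chi=d$, the expected number of additional depot visits incurred while serving $v_i$, and then multiply by the cost of each such visit.

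First, I would identify the random quantity that drives the cost. By Lemma~\ref{sl0}, when the vehicle arrives at $v_i$ it carries $L_{i-1}\sim U[0,\lambda)$. Tracing through Line~\ref{splitaddition} of $ALG.S(\lambda)$, the loop ``deliver current load, return to depot, reload $\lambda$'' is executed $k$ times, where $k$ is the smallest non-negative integer with $L_{i-1}+k\lambda\geq d_i$; equivalently $k=\lceil (d_i-L_{i-1})^{+}/\lambda\rceil$. Each such round makes the vehicle traverse $v_i\to v_0\to v_i$: the outbound leg $v_i\to v_0$ is with load $0$, and the return leg $v_0\to v_i$ is with load $\lambda$. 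Hence each additional visit contributes exactly $a\cdot 2l_i + b\cdot \lambda\cdot l_i$ to the cumulative cost, regardless of which round it is.

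Next, I would show that $\EE{k\mid \chi=d}=d_i/\lambda$. Write $d_i=m\lambda+r$ with $m\in\mathbb{Z}_{\geq 0}$ and $r\in[0,\lambda)$. A short case split on where $L_{i-1}$ falls relative to $r$ gives
\[
k=\begin{cases} m+1, & L_{i-1}\in[0,r),\\ m, & L_{i-1}\in[r,\lambda),\end{cases}
\]
(with the understanding that when $m=0$ and $L_{i-1}\geq r=d_i$ no extra visit is needed). Because $L_{i-1}$ is uniform on $[0,\lambda)$, the expectation is $(r/\lambda)(m+1)+((\lambda-r)/\lambda)m = m+r/\lambda = d_i/\lambda$. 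The two boundary cases $d_i\leq\lambda$ and $r=0$ both reduce to the same formula, so a single line handles them after the case split.

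Combining the two ingredients,
\[
\EE{\text{cost of additional visits to } v_0 \text{ for } v_i \mid \chi=d}
=\EE{k\mid \chi=d}\cdot(2al_i+b\lambda l_i)
=a\cdot\frac{2d_i}{\lambda}\cdot l_i+b\cdot d_i\cdot l_i,
\]
as claimed. The only mildly delicate step is the case analysis verifying $\EE{k\mid\chi=d}=d_i/\lambda$; this is just arithmetic on a uniform random variable, but it is the one point where one must be careful with the ceiling function and the boundary values $L_{i-1}=0$ and $L_{i-1}=r$, both of which occur with probability zero and can be assigned either way.
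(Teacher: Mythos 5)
Your proposal is correct. For $d_i\leq\lambda$ it coincides with the paper's argument: the vehicle makes at most one round trip, with probability $\PP{L_{i-1}<d_i}=d_i/\lambda$ by Lemma~\ref{sl0}, and each round trip costs $2al_i+b\lambda l_i$ since the outbound leg is empty and the return leg carries $\lambda$. Where you diverge is the case $d_i>\lambda$: the paper reduces it to the first case by implicitly splitting $v_i$ into a set of co-located virtual customers, each with demand at most $\lambda$, arguing that $ALG.S(\lambda)$ behaves identically on the split instance and then summing the per-virtual-customer bounds. You instead compute the expected number of round trips directly, writing $d_i=m\lambda+r$ and showing $k=m+1$ on the event $L_{i-1}\in[0,r)$ and $k=m$ on $L_{i-1}\in[r,\lambda)$, whence $\EE{k\mid\chi=d}=m+r/\lambda=d_i/\lambda$. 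Your route is more self-contained and handles both regimes in one calculation, avoiding the (slightly informal) equivalence claim between the original and split instances; the paper's route has the advantage of reusing the already-verified $d_i\leq\lambda$ analysis verbatim and of making transparent why the splittable problem behaves like an unsplittable instance with small demands. Either argument is acceptable; your boundary-case remarks (measure-zero events $L_{i-1}=0$ and $L_{i-1}=r$, and the degenerate case $m=0$) are exactly the points that need the care you give them.
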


\begin{lemma}[*]\label{t9}
For splittable Cu-VRPSD with any $\lambda\in(0,1]$, conditioned on $\chi=d$, $ALG.S(\lambda)$ generates a solution $\T$ with an expected cumulative cost of
\[
\frac{\gamma\cdot\lrA{\alpha\cdot\sigma+\frac{1}{\lambda}}+\frac{\lambda}{2}\cdot\lrA{\alpha\cdot\sigma+\frac{1}{\lambda}}}{\gamma\cdot\max\lrC{\sigma,\ 1}+0.5}\cdot LB.
\]
\end{lemma}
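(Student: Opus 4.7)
The plan is to mirror the proof of Lemma~\ref{t1}, but with the cleaner bookkeeping that the splittable setting allows. The expected cumulative cost of $\T$ decomposes naturally into two parts: (i) the cost accrued while the vehicle traverses edges of the TSP tour $T^*$ carrying its current load, and (ii) the cost accrued from the possible additional visits to the depot $v_0$ triggered when the vehicle runs out of goods.

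For part (i), by Lemma~\ref{sl1} each edge $e\in E(T^*)$ contributes $a\cdot w(e) + b\cdot\tfrac{\lambda}{2}\cdot w(e)$ in expectation, so summing and using $w(T^*)\leq\alpha\cdot\tau$ gives
\[
a\cdot\alpha\cdot\tau + b\cdot\tfrac{\lambda}{2}\cdot\alpha\cdot\tau.
\]
For part (ii), by Lemma~\ref{sl2} each customer $v_i$ contributes $a\cdot\tfrac{2d_i}{\lambda}\cdot l_i + b\cdot d_i\cdot l_i$ in expectation; summing over $i\in[n]$ and using $\sum_{i\in[n]}2d_i l_i=\eta$ gives
\[
a\cdot\tfrac{\eta}{\lambda} + b\cdot\tfrac{\eta}{2}.
\]

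Adding these two contributions yields a total expected cumulative cost of
\[
a\cdot\lrA{\alpha\cdot\tau+\tfrac{\eta}{\lambda}} + b\cdot\lrA{\tfrac{\lambda}{2}\cdot\alpha\cdot\tau+\tfrac{\eta}{2}}.
\]
Finally, I would factor out $LB$ exactly as in the proof of Lemma~\ref{t1}: divide the numerator and denominator by $b\cdot\eta$, use $\gamma=a/b$, $\sigma=\tau/\eta$, and the identity $LB/(b\cdot\eta)=\gamma\cdot\max\{\sigma,1\}+0.5$ from Lemma~\ref{lb}. Noting the algebraic coincidence $\tfrac{\lambda}{2}\cdot\tfrac{1}{\lambda}=\tfrac{1}{2}$, the $b$-term rewrites as $\tfrac{\lambda}{2}\cdot(\alpha\sigma+\tfrac{1}{\lambda})$, matching the form in the statement. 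Since there are no truncations into cases by demand size (all customers are treated uniformly by the greedy splittable rule), there is nothing really hard in this proof: the only mild care needed is in verifying that Lemma~\ref{sl2} already accounts for every depot visit, including the case $d_i>\lambda$ where several additional visits occur for a single customer, so that no double counting or omission creeps in.
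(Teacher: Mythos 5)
Your proposal is correct and follows essentially the same route as the paper's own proof: it invokes Lemma~\ref{sl1} for the edge-traversal cost, Lemma~\ref{sl2} for the additional depot visits (summing $\sum_i 2d_i l_i=\eta$), and then normalizes by $LB$ exactly as in Lemma~\ref{t1}. Your closing remark about Lemma~\ref{sl2} already covering the $d_i>\lambda$ case is also the right thing to check, and it matches how the paper handles that case.
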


\subsubsection{The application}
Next, we use $ALG.S(\lambda)$ to deign an $(\alpha+1)$-approximation algorithm for splittable Cu-VRPSD. The algorithm is shown in Algorithm~\ref{APP5}.

\begin{algorithm}[ht]
\caption{An $(\alpha+1)$-approximation algorithm for splittable Cu-VRPSD}
\label{APP5}
\small
\vspace*{2mm}
\textbf{Input:} An instance of splittable Cu-VRPSD. \\
\textbf{Output:} A feasible solution to splittable Cu-VRPSD.

\begin{algorithmic}[1]
\State Obtain a solution $\T$ by calling $ALG.S(\lambda)$, where $\lambda=\min\{1,2\gamma/\alpha\}$.
\State Return $\T$.
\end{algorithmic}
\end{algorithm}


\begin{theorem}\label{spttt0}
For splittable Cu-VRPSD, there is a randomized $(\alpha+1)$-approximation algorithm.
\end{theorem}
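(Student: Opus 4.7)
The plan is to invoke Lemma~\ref{t9} for Algorithm~\ref{APP5}'s choice $\lambda=\min\{1,2\gamma/\alpha\}$ and show the stated bound on the ratio specializes to $\alpha+1$ for every value of $\sigma\geq 0$. Because Lemma~\ref{t9} bounds the expected cumulative cost conditioned on any realization $\chi=d$ as a multiple of $LB$, Lemma~\ref{mid} lets me conclude by verifying a pointwise inequality in $\sigma$. Concretely, I would define
\[
R(\sigma)\coloneqq\frac{(\gamma+\lambda/2)\lrA{\alpha\sigma+1/\lambda}}{\gamma\cdot\max\{\sigma,1\}+0.5}
\]
and prove $R(\sigma)\leq \alpha+1$ in the two regimes produced by the $\min$ defining $\lambda$.

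The first regime is $2\gamma/\alpha\geq 1$, where $\lambda=1$ and the numerator is $(\gamma+0.5)(\alpha\sigma+1)$. For $\sigma\leq 1$ the denominator equals $\gamma+0.5$, so $R(\sigma)$ is maximized at $\sigma=1$ and equals exactly $\alpha+1$. For $\sigma\geq 1$ the inequality $R(\sigma)\leq \alpha+1$ is equivalent to $(\gamma+0.5)(\alpha\sigma+1)\leq(\alpha+1)(\gamma\sigma+0.5)$, which after cancellation becomes $(\gamma-\alpha/2)(1-\sigma)\leq 0$; this holds because the first factor is nonnegative by the case hypothesis and the second is nonpositive. This mimics the argument already used in the proof of Theorem~\ref{unttt0}.

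The second regime is $2\gamma/\alpha< 1$, where $\lambda=2\gamma/\alpha$ and $1/\lambda=\alpha/(2\gamma)$. Here the key algebraic simplification is
\[
(\gamma+\lambda/2)\lrA{\alpha\sigma+1/\lambda}=\gamma\cdot\frac{\alpha+1}{\alpha}\cdot\alpha\lrA{\sigma+\frac{1}{2\gamma}}=(\alpha+1)(\gamma\sigma+0.5),
\]
so $R(\sigma)=(\alpha+1)\cdot\frac{\gamma\sigma+0.5}{\gamma\max\{\sigma,1\}+0.5}\leq\alpha+1$ since $\gamma\sigma\leq\gamma\max\{\sigma,1\}$. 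Combining the two regimes with Lemma~\ref{mid} gives the theorem, and the runtime is dominated by the $\alpha$-approximate TSP computation in $ALG.S(\lambda)$.

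There is no serious obstacle here: the proof is a case split and one line of algebra per case. The only thing to be careful about is that the threshold $\lambda=\min\{1,2\gamma/\alpha\}$ is chosen precisely so that the two factors in the numerator balance against the denominator in Case~2 (producing the clean identity $(\gamma+\gamma/\alpha)\cdot\alpha/(2\gamma)=(\alpha+1)/2$) while Case~1's hypothesis $\gamma\geq\alpha/2$ fixes the sign needed for the large-$\sigma$ branch. These are exactly the structural features already exploited in Theorem~\ref{unttt0}, so the analysis transfers directly once Lemma~\ref{t9} is in hand.
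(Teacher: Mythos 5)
Your proposal is correct and follows essentially the same route as the paper: the paper's proof of Theorem~\ref{spttt0} also invokes Lemma~\ref{t9} with $\lambda=\min\{1,2\gamma/\alpha\}$ and then defers to the case analysis in the proof of Theorem~\ref{t6}, which is exactly the two-regime argument (with the $\sigma\leq 1$ / $\sigma\geq 1$ split and the sign observation $(\gamma-\alpha/2)(\sigma-1)\geq 0$) that you spell out. Your algebra in both regimes checks out, so no gap.
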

\begin{proof}
By Lemma~\ref{t9}, the expected approximation ratio of $ALG.S(\lambda)$ is at most $\frac{\gamma\cdot\lrA{\alpha\cdot\sigma+\frac{1}{\lambda}}+\frac{\lambda}{2}\cdot\lrA{\alpha\cdot\sigma+\frac{1}{\lambda}}}{\gamma\cdot\max\lrC{\sigma,\ 1}+0.5}$.
By the proof of Theorem~\ref{t6}, under $\lambda=\min\{1,2\gamma/\alpha\}$, this expression has a value of at most $\alpha+1$.
Therefore, Algorithm~\ref{APP5} is a randomized $(\alpha+1)$-approximation algorithm. 
\end{proof}

Since Cu-VRP is a special case of Cu-VRPSD, we obtain the following corollary.

\begin{corollary}\label{spttt00}
For splittable Cu-VRP, there is a randomized $(\alpha+1)$-approximation algorithm.
\end{corollary}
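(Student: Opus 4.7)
The plan is very short since this follows almost immediately from Theorem~\ref{spttt0}. I would first observe that Cu-VRP is literally the special case of Cu-VRPSD in which the random demand vector $\chi$ is degenerate: each $\chi_i$ equals the deterministic value $d_i$ with probability one. Under this specialization, the set of feasible policies for Cu-VRPSD (which may adaptively branch on observed demands) coincides with the set of feasible itineraries for Cu-VRP, because the realization is known in advance and no adaptivity is ever useful. Consequently, for any itinerary $\T$ we have $\EE{Cu(\T)} = Cu(\T)$, and the optimal expected cost for Cu-VRPSD equals the optimal cumulative cost $\OPT$ for Cu-VRP.

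I would then simply invoke the randomized $(\alpha+1)$-approximation algorithm from Theorem~\ref{spttt0} (namely Algorithm~\ref{APP5} calling $ALG.S(\lambda)$ with $\lambda = \min\{1, 2\gamma/\alpha\}$) on the Cu-VRP instance viewed as a Cu-VRPSD instance. The theorem guarantees the output itinerary $\T$ satisfies $\EE{Cu(\T)} \leq (\alpha+1)\cdot\EE{Cu(\T^*)}$, and by the previous observation both sides equal the ordinary expected cumulative cost of $\T$ and $(\alpha+1)\cdot\OPT$ respectively. Hence we obtain a randomized $(\alpha+1)$-approximation for splittable Cu-VRP. There is no real obstacle here; the only thing to double-check is that the algorithm $ALG.S(\lambda)$ remains meaningful when demands are deterministic, but since it only uses the random initial load $L_0 \sim U[0,\lambda)$ and reads each $d_i$ upon visiting $v_i$ (which is immediate when the $d_i$ are known), this is automatic.
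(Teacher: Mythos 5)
Your proposal is correct and matches the paper's own derivation: the paper obtains Corollary~\ref{spttt00} in exactly this way, noting that Cu-VRP is the special case of Cu-VRPSD with $\chi=d$ deterministic and invoking Theorem~\ref{spttt0}. Your additional remarks about adaptivity being useless and $ALG.S(\lambda)$ remaining well-defined are fine elaborations of the same one-line argument.
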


Similarly, our approximation ratio for splittable Cu-VRP almost matches the current best-known approximation ratio of $\alpha+1-\varepsilon$ for splittable VRP~\citep{blauth2022improving}, where $\varepsilon>\frac{1}{3000}$ when $\alpha=3/2$.
It remains unclear whether the methods in~\citep{blauth2022improving} can be used to achieve a better approximation ratio.

Moreover, Algorithm~\ref{APP5} can be derandomized as follows.
Since Algorithm~\ref{APP5} generates a solution based on partitioning an $\alpha$-approximate TSP tour, by the above corollary, there exists a partition such that the corresponding solution is an $(\alpha+1)$-approximate solution for splittable Cu-VRP. Given the TSP tour, based on the dynamic programming approach, the optimal partition with the minimum cumulative cost can be found in $O(n^2)$~\cite{gaur2013routing}. 
Hence, we can obtain a deterministic $(\alpha+1)$-approximation algorithm.

\section{Conclusion}
By using the idea of skipping customers with large demands during the TSP tour and satisfying them later, combined with careful analysis, we improve the approximation ratios for Cu-VRPSD, VRPSD, and Cu-VRP. Whether this idea is useful in designing practical algorithms for these problems is worthy of further study.

For (Cu-)VRPSD, our approximation algorithms are all randomized. Theorem~\ref{thm0} shows a method to obtain some straightforward deterministic algorithms by employing deterministic algorithms for (Cu-)VRP. 
It might be possible to obtain better deterministic approximation algorithms by exploring alternative techniques or by imposing additional restrictions on the distribution of each customer's demand.
Also, a significant lower bound on the approximation ratio is currently known only for unsplittable VRP. It would be interesting to establish stronger lower bounds for (Cu-)VRP(SD).

Moreover, since our numerical experiments show that based on $APPROX.1(\lambda,\theta,p)$ and $APPROX.2$ we may obtain an approximation ratio of $3.438$ for unsplittable Cu-VRPSD, it would be interesting to formally establish this approximation guarantee. Notably, both algorithms rely on a two-point distributed algorithm deployment. We suspect that a much better approximation ratio could be attained by using a multi-point distributed algorithm deployment, and we leave this as an open question for future research. 

\appendix
\section{Omitted Proofs}\label{omitted}
\subsection{Proof of Lemma~\ref{l0}}
\begingroup
\def\thelemma{\ref{l0}}
\begin{lemma}
For any $i\in [n+1]$, it holds $L_{i-1}=L_0 + \ceil{\frac{\sum_{j=1}^{i-1}h_j\cdot d_j-L_0}{\lambda-\delta}}\cdot (\lambda-\delta)-\sum_{j=1}^{i-1}h_j\cdot d_j$, and moreover, $L_{i-1}\sim U[0,\lambda-\delta)$, conditioned on $\chi=d$.
\end{lemma}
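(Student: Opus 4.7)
I would prove both assertions together by induction on $i$. The base case $i=1$ is immediate: the empty sum $\sum_{j=1}^{0}h_j d_j=0$ combined with $L_0\in[0,\lambda-\delta)$ forces $\lceil -L_0/(\lambda-\delta)\rceil=0$, so the closed-form formula collapses to $L_0$, and the distributional claim is simply the definition of $L_0$.

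For the inductive step, abbreviate $c_{i-1}\coloneqq\sum_{j=1}^{i-1}h_j d_j$ and $k_{i-1}\coloneqq\lceil(c_{i-1}-L_0)/(\lambda-\delta)\rceil$, so the hypothesis reads $L_{i-1}=L_0+k_{i-1}(\lambda-\delta)-c_{i-1}$. The workhorse is the ceiling identity
\[
\lceil\tfrac{d_i-L_{i-1}}{\lambda-\delta}\rceil \;=\; \lceil\tfrac{c_{i-1}+d_i-L_0}{\lambda-\delta}\rceil-k_{i-1},
\]
which follows from $\lceil x+k\rceil=\lceil x\rceil+k$ for $k\in\mathbb{Z}$ after plugging in the formula for $L_{i-1}$. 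I would then split according to the four cases of Algorithm~\ref{alg1}. In Cases~1, 2, and 3.1 we have $h_i=1$ and $c_i=c_{i-1}+d_i$; the update on Lines~\ref{case1ofalg1}, \ref{case1ofalg1+}, and \ref{case1ofalg1++} is of the uniform form $L_i=L_{i-1}+\lceil(d_i-L_{i-1})/(\lambda-\delta)\rceil(\lambda-\delta)-d_i$, so substituting the hypothesis and applying the identity telescopes the two ceilings into $k_i\coloneqq\lceil(c_i-L_0)/(\lambda-\delta)\rceil$ and produces the claimed formula for $L_i$. In Case~3.2 we have $h_i=0$, $L_i=L_{i-1}$, and $c_i=c_{i-1}$, so both sides of the target formula are preserved trivially. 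Along the way one checks $L_i\in[0,\lambda-\delta)$, but this is automatic from the defining inequalities of the ceiling $k_i$ (and, in Case~2, from $\delta\leq\lambda/2$ which guarantees the relevant ceiling equals $1$).

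For the uniformity claim, the point is that conditioned on $\chi=d$ the sum $c_{i-1}$ is a deterministic constant, and the closed-form formula exhibits $L_{i-1}$ as the image of $L_0$ under the ``shift by $-c_{i-1}$ modulo $\lambda-\delta$'' map on the torus $[0,\lambda-\delta)$. Since this map is a Lebesgue-measure-preserving bijection, $L_0\sim U[0,\lambda-\delta)$ immediately yields $L_{i-1}\sim U[0,\lambda-\delta)$. The only real obstacle is isolating the ceiling identity uniformly across cases; once that is done, each case reduces to a one-line substitution.
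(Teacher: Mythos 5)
Your proposal is correct and follows essentially the same route as the paper's proof: induction on $i$, a case split on whether $h_i=1$ or $h_i=0$, telescoping of the ceilings to obtain the closed form, and the observation that conditioned on $\chi=d$ the map $L_0\mapsto L_{i-1}$ is a shift modulo $\lambda-\delta$, hence measure-preserving. The only (cosmetic) difference is that you obtain the ceiling additivity directly from $\ceil{x+k}=\ceil{x}+k$ for integer $k$, whereas the paper derives it from the bounds $0\leq L_{i'}<\lambda-\delta$; your version is arguably slightly cleaner but the argument is the same.
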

\endgroup
\begin{proof}
Since $L_0\sim U[0,\lambda-\delta)$, the lemma holds for $i=1$. Assume that the equality holds for $i=i'\geq 1$, i.e., $L_{i'-1}=L_0 + \ceil{\frac{\sum_{j=1}^{i'-1}h_j\cdot d_j-L_0}{\lambda-\delta}}\cdot (\lambda-\delta)-\sum_{j=1}^{i'-1}h_j\cdot d_j$. Note that we have $0\leq L_{i'-1}<\lambda-\delta$. Next, we consider $L_{i'}$.

\textbf{Case~1: $d_{i'}\leq \lambda$.} We have $h_{i'}=1$. By Lines~\ref{case1ofalg1}, \ref{case1ofalg1+}, and \ref{case1ofalg1++}, we have $L_{i'}=L_{i'-1}+\ceil{\frac{d_{i'}-L_{i'-1}}{\lambda-\delta}}\cdot (\lambda-\delta)-d_{i'}$.
Hence, we have $L_{i'}\geq 0$ and $L_{i'}<\lambda-\delta$.
Therefore, we have $L_0 + \ceil{\frac{\sum_{j=1}^{i'-1}h_j\cdot d_j-L_0}{\lambda-\delta}}\cdot(\lambda-\delta)-\sum_{j=1}^{i'}h_j\cdot d_j+\ceil{\frac{d_{i'}-L_{i'-1}}{\lambda-\delta}}\cdot (\lambda-\delta)\geq 0$ and $L_0 + \ceil{\frac{\sum_{j=1}^{i'-1}h_j\cdot d_j-L_0}{\lambda-\delta}}\cdot(\lambda-\delta)-\sum_{j=1}^{i'}h_j\cdot d_j+\ceil{\frac{d_{i'}-L_{i'-1}}{\lambda-\delta}}\cdot (\lambda-\delta)< \lambda-\delta$. Alternatively, we have $\ceil{\frac{\sum_{j=1}^{i'-1}h_j\cdot d_j-L_0}{\lambda-\delta}}+\ceil{\frac{d_{i'}-L_{i'-1}}{\lambda-\delta}}-1< \frac{\sum_{j=1}^{i'}h_j\cdot d_j-L_0}{\lambda-\delta}\leq \ceil{\frac{\sum_{j=1}^{i'-1}h_j\cdot d_j-L_0}{\lambda-\delta}}+\ceil{\frac{d_{i'}-L_{i'-1}}{\lambda-\delta}}$, and hence $\ceil{\frac{\sum_{j=1}^{i'}h_j\cdot d_j-L_0}{\lambda-\delta}}=\ceil{\frac{\sum_{j=1}^{i'-1}h_j\cdot d_j-L_0}{\lambda-\delta}}+\ceil{\frac{d_{i'}-L_{i'-1}}{\lambda-\delta}}$. Therefore, we have $L_{i'}=L_{i'-1}+\ceil{\frac{d_{i'}-L_{i'-1}}{\lambda-\delta}}\cdot (\lambda-\delta)-d_{i'}=L_0+\ceil{\frac{\sum_{j=1}^{i'-1}h_j\cdot d_j-L_0}{\lambda-\delta}}\cdot(\lambda-\delta)-\sum_{j=1}^{i'}h_j\cdot d_j+\ceil{\frac{d_{i'}-L_{i'-1}}{\lambda-\delta}}\cdot (\lambda-\delta)=L_0 + \ceil{\frac{\sum_{j=1}^{i'}h_j\cdot d_j-L_0}{\lambda-\delta}}\cdot(\lambda-\delta)-\sum_{j=1}^{i'}h_j\cdot d_j$.

\textbf{Case~2: $d_i>\lambda$.} We have $h_i=0$, and hence $\sum_{j=1}^{i'-1}h_j\cdot d_j=\sum_{j=1}^{i'}h_j\cdot d_j$. By Line~\ref{case2ofalg1}, we have $L_{i'}=L_{i'-1}=L_0+\ceil{\frac{\sum_{j=1}^{i'-1}h_j\cdot d_j-L_0}{\lambda-\delta}}\cdot(\lambda-\delta)-\sum_{j=1}^{i'-1}h_j\cdot d_j=L_0 + \ceil{\frac{\sum_{j=1}^{i'}h_j\cdot d_j-L_0}{\lambda-\delta}}\cdot(\lambda-\delta)-\sum_{j=1}^{i'}h_j\cdot d_j$.

In both cases, we have $L_{i'}=L_0 + \ceil{\frac{\sum_{j=1}^{i'}h_j\cdot d_j-L_0}{\lambda-\delta}}\cdot(\lambda-\delta)-\sum_{j=1}^{i'}h_j\cdot d_j$. By induction, the equality holds for any $i\in [n+1]$. 

For any $i\in [n+1]$, we have $L_{i-1}=L_0 + \ceil{\frac{\sum_{j=1}^{i-1}h_j\cdot d_j-L_0}{\lambda-\delta}}\cdot(\lambda-\delta)-\sum_{j=1}^{i-1}h_j\cdot d_j$. 
Assume that $(\sum_{j=1}^{i-1}h_j\cdot d_j)\bmod(\lambda-\delta)=L'$, which is fixed conditioned on $\chi=d$. We have $L_{i-1}=\lambda-\delta+L_0-L'\in [\lambda-\delta-L',\lambda-\delta)$ when $L_0\in[0,L')$, and $L_{i-1}=L_0-L'\in [0, \lambda-\delta-L')$ when $L_0\in [L',\lambda-\delta)$. The relationship between $L_0$ and $L_{i-1}$ is bijective. Since $L_0\sim U[0,\lambda-\delta)$, we also obtain $L_{i-1}\sim U[0,\lambda-\delta)$, conditioned on $\chi=d$.
\end{proof}

\subsection{Proof of Lemma~\ref{l2}}
\begingroup
\def\thelemma{\ref{l2}}
\begin{lemma}
Conditioned on $\chi=d$, when serving each customer $v_i$ in $ALG.1(\lambda,\delta)$, the expected cumulative cost of the vehicle due to the possible additional visit(s) to $v_0$ is
\begin{itemize}
    \item $a\cdot\frac{2d_i}{\lambda-\delta}\cdot l_i+b\cdot \frac{(\lambda+\delta)\cdot d_i-d^2_i}{\lambda-\delta}\cdot l_i$ if $d_i\leq\delta$;
    \item $a\cdot\frac{4d_i-2\delta}{\lambda-\delta}\cdot l_i+b\cdot \frac{d^2_i+(\lambda-\delta)\cdot d_i}{\lambda-\delta}\cdot l_i$ if $\delta<d_i\leq\lambda-\delta$;
    \item $a\cdot\frac{2d_i+2\lambda-4\delta}{\lambda-\delta}\cdot l_i+b\cdot \frac{2d^2_i-(\lambda+\delta)\cdot d_i+\lambda^2-\delta^2}{\lambda-\delta}\cdot l_i$ if $\lambda-\delta<d_i\leq\lambda$;
    \item $a\cdot2\cdot l_i+b\cdot d_i\cdot l_i$ if $\lambda<d_i\leq 1$.
\end{itemize}
\end{lemma}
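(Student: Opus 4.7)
\textbf{Proof proposal for Lemma~\ref{l2}.} The plan is to do a direct case analysis on $d_i$, using that $L_{i-1}\sim U[0,\lambda-\delta)$ by Lemma~\ref{l0}. For each of the four ranges of $d_i$, I will partition the event space according to $L_{i-1}$ to decide which branch of the algorithm is invoked (Case~1, Case~2, Case~3.1, or Case~3.2 in the algorithm description), and for each branch record (i)~the number of additional visits to $v_0$, each contributing $2l_i$ to the traveling distance, and (ii)~the cargo carried on each leg of the detour. Summing over branches weighted by the branch probability (which is just the length of the corresponding $L_{i-1}$-interval divided by $\lambda-\delta$) and using $\EE{L_{i-1}\mid L_{i-1}\in [l,r)}=(l+r)/2$ will produce the expressions in the four bullets.

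The bookkeeping is routine in the three easier cases. When $d_i\leq\delta$, only algorithm Case~1 ($L_{i-1}\geq d_i$) or Case~2 ($L_{i-1}<d_i$) can occur, giving a single detour with cargo-cost contribution $b\cdot(2L_{i-1}+\lambda+\delta-2d_i)l_i$. When $\delta<d_i\leq\lambda-\delta$, the additional regime is algorithm Case~3.1, and because $d_i\leq\lambda-\delta$ the ceiling $\lceil(d_i-L_{i-1})/(\lambda-\delta)\rceil$ is always $1$, so the reload after the two depot detours is $L_i=L_{i-1}+(\lambda-\delta)-d_i$ and the cargo along the two detours is $b\cdot(L_{i-1}+\delta)l_i+b\cdot d_i\cdot l_i+0+b\cdot(L_i+\delta)l_i$. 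When $\lambda<d_i\leq 1$, the customer is cleaned up by a dedicated tour (Line~\ref{clean}), whose distance is $2l_i$ and whose cargo is $d_i$ outbound, $0$ inbound.

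The main obstacle is the case $\lambda-\delta<d_i\leq\lambda$, where algorithm Case~3.1 splits into two subregimes depending on whether the ceiling is $1$ or $2$. Concretely, I will partition the support of $L_{i-1}$ into three sub-intervals: $[d_i-\delta,\lambda-\delta)$ triggers algorithm Case~2 (one detour); $[d_i-\lambda+\delta,d_i-\delta)$ triggers Case~3.1 with $\lceil\cdot\rceil=1$ (two detours, final reload $L_{i-1}+\lambda-\delta-d_i$); and $[0,d_i-\lambda+\delta)$ triggers Case~3.1 with $\lceil\cdot\rceil=2$ (two detours, final reload $L_{i-1}+2(\lambda-\delta)-d_i$). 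I will check that $\delta\leq\lambda/2$ ensures these intervals are well-ordered and lie in $[0,\lambda-\delta)$, then compute the conditional mean cargo in each sub-interval. After weighting by $(\lambda-2\delta)/(\lambda-\delta)$, $(d_i-\lambda+\delta)/(\lambda-\delta)$, and $(\lambda-d_i)/(\lambda-\delta)$ respectively and simplifying, the vehicle cost collapses to $a\cdot\frac{2d_i+2\lambda-4\delta}{\lambda-\delta}\cdot l_i$, while the cargo cost reduces (after expanding the three quadratics in $d_i,\lambda,\delta$ and collecting terms) to $b\cdot\frac{2d_i^2-(\lambda+\delta)d_i+\lambda^2-\delta^2}{\lambda-\delta}\cdot l_i$. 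This algebraic simplification is the single nontrivial calculation in the proof; the rest is elementary enumeration and linearity of expectation.
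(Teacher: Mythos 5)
Your proposal is correct and follows essentially the same route as the paper: condition on the demand realization, use $L_{i-1}\sim U[0,\lambda-\delta)$ from Lemma~\ref{l0}, partition the support of $L_{i-1}$ according to which branch of the algorithm fires (including the same three-way split of $[0,\lambda-\delta)$ in the regime $\lambda-\delta<d_i\leq\lambda$ according to whether one detour occurs or two detours occur with ceiling $1$ or $2$), and integrate the per-leg cargo. One small slip: the three weights you list for the hard case are the right probabilities but are paired with the wrong sub-intervals by your ``respectively'' (the interval $[d_i-\delta,\lambda-\delta)$ has length $\lambda-d_i$, not $\lambda-2\delta$, etc.); with that pairing corrected the computation reproduces the paper's calculation exactly.
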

\endgroup
\begin{proof}
If $d_i>\lambda$, By Lines~\ref{case2ofalg1} and \ref{clean}, the vehicle incurs one additional visit to $v_0$, where the vehicle carries $d_i$ units of goods from $v_0$ to $v_i$ and $0$ units of goods from $v_i$ to $v_0$. So, the expected cumulative cost is $a\cdot2\cdot l_i+b\cdot d_i\cdot l_i$.
Next, we consider $d_i\leq\lambda$. 

By Lemma \ref{l0}, the vehicle carries $(L_{i-1},\delta)$ units of goods when traveling along $v_{i-1}v_i$ in $ALG.1(\lambda,\delta)$, and it holds that $L_{i-1}\sim U[0,\lambda-\delta)$.
Hence, the vehicle incurs one additional visits to $v_0$ with a probability of $\PP{d_i-\delta\leq L_{i-1}=x<d_i}$, and incurs two additional visits to $v_0$ with a probability of $\PP{0\leq L_{i-1}=x<d_i-\delta}$ (recall Figure~\ref{fig1}). We consider the following three cases.

\textbf{Case~1: $d_i\leq\delta$.} The vehicle incurs at most one additional visit to $v_0$.
If the vehicle incurs one additional visit to $v_0$, By Line~\ref{addone}, the vehicle carries $(0,\delta-(d_i-L_{i-1}))$ units of goods from $v_i$ to $v_0$, and $(L_{i-1}+\ceil{\frac{d_i-L_{i-1}}{\lambda-\delta}}\cdot (\lambda-\delta)-d_i,\delta)=(L_{i-1}+(\lambda-\delta)-d_i,\delta)$ units of goods from $v_0$ to $v_i$. 
So, the cumulative cost is $a\cdot 2\cdot l_i + b\cdot (\delta-(d_i-L_{i-1})+L_{i-1}+(\lambda-\delta)-d_i+\delta)\cdot l_i=a\cdot 2\cdot l_i + b\cdot (2L_{i-1}-2d_i+\lambda+\delta)\cdot l_i$. Since $L_{i-1}\sim U[0,\lambda-\delta)$ and $d_i\leq\delta\leq\lambda-\delta$, the expected cumulative cost is
\begin{align*}
&\int_{0}^{\min\{d_i, \lambda-\delta\}}\frac{a\cdot 2\cdot l_i + b\cdot (2x-2d_i+\lambda+\delta)\cdot l_i}{\lambda-\delta}dx\\
&=\int_{0}^{d_i}\frac{a\cdot 2\cdot l_i + b\cdot (2x-2d_i+\lambda+\delta)\cdot l_i}{\lambda-\delta}dx=a\cdot\frac{2d_i}{\lambda-\delta}\cdot l_i+b\cdot \frac{(\lambda+\delta)\cdot d_i-d^2_i}{\lambda-\delta}\cdot l_i.    
\end{align*}

\textbf{Case~2: $\delta<d_i\leq\lambda-\delta$.} The vehicle incurs at most two additional visits to $v_0$. Similarly, if the vehicle incurs one additional visit to $v_0$, By Line~\ref{addone}, the vehicle carries $(0,\delta-(d_i-L_{i-1}))$ units of goods from $v_i$ to $v_0$, and $(L_{i-1}+\ceil{\frac{d_i-L_{i-1}}{\lambda-\delta}}\cdot (\lambda-\delta)-d_i,\delta)=(L_{i-1}+(\lambda-\delta)-d_i,\delta)$ units of goods from $v_0$ to $v_i$, and the cumulative cost is $a\cdot 2\cdot l_i + b\cdot (\delta-(d_i-L_{i-1})+L_{i-1}+(\lambda-\delta)-d_i+\delta)\cdot l_i=a\cdot 2\cdot l_i + b\cdot (2L_{i-1}-2d_i+\lambda+\delta)\cdot l_i$. If the vehicle incurs two additional visits to $v_0$, By Lines~\ref{addtwo1} and~\ref{addtwo2}, the vehicle carries $(L_{i-1},\delta)$ units of goods from $v_i$ to $v_0$, $(d_i-\delta,\delta)$ units of goods from $v_0$ to $v_i$, $(0,0)$ units of goods from $v_i$ to $v_0$, and $(L_{i-1}+\ceil{\frac{d_i-L_{i-1}}{\lambda-\delta}}\cdot (\lambda-\delta)-d_i,\delta)=(L_{i-1}+(\lambda-\delta)-d_i,\delta)$ units of goods from $v_0$ to $v_i$, and the cumulative cost is $a\cdot 4\cdot l_i + b\cdot (L_{i-1}+\delta+d_i+0+L_{i-1}+(\lambda-\delta)-d_i+\delta)\cdot l_i=a\cdot 4\cdot l_i + b\cdot (2L_{i-1}+\lambda+\delta)\cdot l_i$. Hence, the expected cumulative cost is
\begin{align*}
&\int_{d_i-\delta}^{\min\{d_i, \lambda-\delta\}}\frac{a\cdot 2\cdot l_i + b\cdot (2x-2d_i+\lambda+\delta)\cdot l_i}{\lambda-\delta}dx+\int_{0}^{d_i-\delta}\frac{a\cdot 4\cdot l_i + b\cdot (2x+\lambda+\delta)\cdot l_i}{\lambda-\delta}dx\\
&=\int_{d_i-\delta}^{d_i}\frac{a\cdot 2\cdot l_i + b\cdot (2x-2d_i+\lambda+\delta)\cdot l_i}{\lambda-\delta}dx+\int_{0}^{d_i-\delta}\frac{a\cdot 4\cdot l_i + b\cdot (2x+\lambda+\delta)\cdot l_i}{\lambda-\delta}dx\\
&=\frac{a\cdot2\cdot\delta\cdot l_i+b\cdot\lambda\cdot\delta\cdot l_i}{\lambda-\delta}+\frac{a\cdot4\cdot(d_i-\delta)\cdot l_i+b\cdot(d^2_i+(\lambda-\delta)\cdot d_i-\delta\cdot\lambda)\cdot l_i}{\lambda-\delta}\\
&=a\cdot\frac{4d_i-2\delta}{\lambda-\delta}\cdot l_i+b\cdot \frac{d^2_i+(\lambda-\delta)\cdot d_i}{\lambda-\delta}\cdot l_i.    
\end{align*}

\textbf{Case~3: $\lambda-\delta<d_i\leq\lambda$.} The vehicle incurs at most two additional visits to $v_0$. If the vehicle incurs one additional visit to $v_0$, By Line~\ref{addone}, the vehicle carries $(0,\delta-(d_i-L_{i-1}))$ units of goods from $v_i$ to $v_0$, and $(L_{i-1}+\ceil{\frac{d_i-L_{i-1}}{\lambda-\delta}}\cdot (\lambda-\delta)-d_i,\delta)$ units of goods from $v_0$ to $v_i$, and the cumulative cost is $a\cdot 2\cdot l_i + b\cdot (\delta-(d_i-L_{i-1})+L_{i-1}+\ceil{\frac{d_i-L_{i-1}}{\lambda-\delta}}\cdot(\lambda-\delta)-d_i+\delta)\cdot l_i=a\cdot 2\cdot l_i + b\cdot (2L_{i-1}-2d_i+2\delta+\ceil{\frac{d_i-L_{i-1}}{\lambda-\delta}}\cdot(\lambda-\delta))\cdot l_i$. If the vehicle incurs two additional visits to $v_0$, By Lines~\ref{addtwo1} and~\ref{addtwo2}, the vehicle carries $(L_{i-1},\delta)$ units of goods from $v_i$ to $v_0$, $(d_i-\delta,\delta)$ units of goods from $v_0$ to $v_i$, $(0,0)$ units of goods from $v_i$ to $v_0$, and $(L_{i-1}+\ceil{\frac{d_i-L_{i-1}}{\lambda-\delta}}\cdot (\lambda-\delta)-d_i,\delta)$ units of goods from $v_0$ to $v_i$, and the cumulative cost is $a\cdot 4\cdot l_i + b\cdot (L_{i-1}+\delta+d_i+0+L_{i-1}+\ceil{\frac{d_i-L_{i-1}}{\lambda-\delta}}\cdot(\lambda-\delta)-d_i+\delta)\cdot l_i=a\cdot 4\cdot l_i + b\cdot (2L_{i-1}+2\delta+\ceil{\frac{d_i-L_{i-1}}{\lambda-\delta}}\cdot(\lambda-\delta))\cdot l_i$. Hence, the expected cumulative cost is
\begin{align*}
&\int_{d_i-\delta}^{\min\{d_i, \lambda-\delta\}}\frac{a\cdot 2\cdot l_i + b\cdot (2x-2d_i+2\delta+\ceil{\frac{d_i-x}{\lambda-\delta}}\cdot(\lambda-\delta))\cdot l_i}{\lambda-\delta}dx\\
&\quad+\int_{0}^{d_i-\delta}\frac{a\cdot 4\cdot l_i + b\cdot (2x+2\delta+\ceil{\frac{d_i-x}{\lambda-\delta}}\cdot(\lambda-\delta))\cdot l_i}{\lambda-\delta}dx\\
&=\int_{d_i-\delta}^{\lambda-\delta}\frac{a\cdot 2\cdot l_i + b\cdot (2x-2d_i+2\delta+1\cdot(\lambda-\delta))\cdot l_i}{\lambda-\delta}dx\\
&\quad+\int_{0}^{d_i+\delta-\lambda}\frac{a\cdot 4\cdot l_i + b\cdot (2x+2\delta+2\cdot(\lambda-\delta))\cdot l_i}{\lambda-\delta}dx\\
&\quad+\int_{d_i+\delta-\lambda}^{d_i-\delta}\frac{a\cdot 4\cdot l_i + b\cdot (2x+2\delta+1\cdot(\lambda-\delta))\cdot l_i}{\lambda-\delta}dx\\
&=\frac{a\cdot2\cdot (\lambda-d_i)\cdot l_i+b\cdot(d^2_i+(\delta-3\lambda)\cdot d_i+(2\lambda^2-\lambda\cdot\delta))}{\lambda-\delta}\\
&\quad+\frac{a\cdot 4\cdot (d_i+\delta-\lambda)\cdot l_i+b\cdot(d^2_i+2\delta\cdot d_i+\delta^2-\lambda^2)}{\lambda-\delta}\\
&\quad+\frac{a\cdot4\cdot(\lambda-2\delta)\cdot l_i+b\cdot((2\lambda-4\delta)\cdot d_i+\delta\cdot\lambda-2\delta^2)}{\lambda-\delta}\\
&=a\cdot\frac{2d_i+2\lambda-4\delta}{\lambda-\delta}\cdot l_i+b\cdot \frac{2d^2_i-(\lambda+\delta)\cdot d_i+\lambda^2-\delta^2}{\lambda-\delta}\cdot l_i,  
\end{align*}
where the first equality follows from that $\ceil{\frac{d_i-x}{\lambda-\delta}}=1$ if $d_i+\delta-\lambda\leq x\leq \lambda-\delta$ and $\ceil{\frac{d_i-x}{\lambda-\delta}}=2$ if $0\leq x<d_i+\delta-\lambda$ since in our setting $\delta\leq \lambda-\delta$ and in this case $\lambda-\delta<d_i\leq\lambda$.
\end{proof}

\subsection{Proof of Lemma~\ref{pr}}
\begingroup
\def\thelemma{\ref{pr}}
\begin{lemma}
It holds that $\EE{Cu(\S')}\leq \ln2\cdot Cu(\T^*)$, $\PP{v\notin V^{**}}=1$ for any $v\in V\setminus V^*$, and $\PP{v\notin V^{**}}\leq 1/2$ for any $v\in V^*$.
\end{lemma}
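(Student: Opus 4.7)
The plan is to verify the three bullets of Lemma~\ref{pr} separately, each via a short direct argument based on the standard analysis of randomized rounding for weighted set cover, tailored to the geometry of our construction.

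For the first bullet, the approach is to apply linearity of expectation to the random selection step:
\[
\EE{Cu(\S')} \;=\; \sum_{S\in\S} \min\{\ln 2\cdot x_S,\,1\}\cdot Cu(S) \;\leq\; \ln 2 \cdot \sum_{S\in\S} x_S\cdot Cu(S) \;=\; \ln 2\cdot \mathrm{LP}^*,
\]
where $\mathrm{LP}^*$ is the optimal value of (\ref{lp}). The remaining step is to argue $\mathrm{LP}^*\leq Cu(\T^*)$. For this, take the optimal itinerary $\T^*$ and, for each tour $T\in\T^*$, extract the sub-tour on $V^*\cap V'(T)$ by shortcutting over non-$V^*$ customers; by the triangle inequality, the cumulative cost does not increase. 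The resulting collection of shortcut tours, each restricted to a feasible subset (total demand still at most $1$), is an integral feasible solution to (\ref{lp}) of cost at most $Cu(\T^*)$, so it dominates the LP optimum.

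For the second bullet, the observation is simply that every feasible set $S\in\S$ is by construction a subset of $V^*$, hence every tour in $\T'$ visits only customers in $V^*$. Therefore no $v\in V\setminus V^*$ can belong to $V^{**}$, and $\PP{v\notin V^{**}}=1$ trivially.

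For the third bullet, fix any $v\in V^*$. If some $S\ni v$ already has $\min\{\ln 2\cdot x_S,1\}=1$, then $v\in V^{**}$ with probability $1$ and we are done; so assume $\ln 2\cdot x_S\leq 1$ for every $S\ni v$. By independence of the choices across sets,
\[
\PP{v\notin V^{**}} \;=\; \prod_{S\in\S:\,v\in S}(1-\ln 2\cdot x_S) \;\leq\; \prod_{S\in\S:\,v\in S} e^{-\ln 2\cdot x_S} \;=\; \exp\!\left(-\ln 2\cdot \sum_{S\in\S:\,v\in S} x_S\right) \;\leq\; e^{-\ln 2} \;=\; \tfrac{1}{2},
\]
using $1-y\leq e^{-y}$ and the LP constraint $\sum_{S\ni v} x_S\geq 1$. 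The only delicate point in the whole argument is the inequality $\mathrm{LP}^*\leq Cu(\T^*)$ in the first bullet, since $\T^*$ is defined as the optimum over all itineraries for the full customer set $V'$ rather than the set cover instance on $V^*$; the shortcutting argument via the triangle inequality handles this cleanly and is the main thing to spell out carefully.
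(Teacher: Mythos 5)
Your proof is correct and follows essentially the same route as the paper's: linearity of expectation plus $\min\{\ln 2\cdot x_S,1\}\leq \ln 2\cdot x_S$ for the first claim, the trivial observation that rounding only touches $V^*$ for the second, and the standard $1-y\leq e^{-y}$ bound with the covering constraint for the third. The only difference is that you explicitly justify $\mathrm{LP}^*\leq Cu(\T^*)$ by shortcutting the optimal itinerary's tours onto their $V^*$-customers, whereas the paper simply asserts that the LP is a relaxation; your added detail is a correct and welcome elaboration of the same step, not a different argument.
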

\endgroup
\begin{proof}
Since the linear program is a relaxation of unsplittable Cu-VRP, we have $\sum_{S\in\S}Cu(S)\cdot x_S\leq\OPT$. Moreover, since each $S\in\S$ is selected into $\S'$ with a probability of $\min\{\ln2\cdot x_S,1\}$, we can obtain $\EE{Cu(\S')}=\sum_{S\in\S}Cu(S)\cdot \min\{\ln2\cdot x_S,1\}\leq\ln2\cdot\sum_{S\in\S}Cu(S)\cdot x_S\leq\ln2\cdot\OPT$.

Since randomized rounding is used only for customers in $V^{*}$, we have $\PP{v\notin V^{**}}=1$ for any $v\in V\setminus V^*$. Next, we consider $v\in V^*$.

Since $\PP{v\notin V^{**}}=\prod_{S\in \S: v\in S}\PP{S\notin\S'}$, we may assume that $\min\{\ln2\cdot x_S,1\}=\ln2\cdot x_S$ for any $S\in \S$ with $v\in S$ since otherwise $\PP{v\notin V^{**}}=0\leq1/2$ holds trivially. Then, we have $\PP{v\notin V^{**}}=\prod_{S\in \S: v\in S}(1-\ln2\cdot x_S)\leq e^{-\sum_{S\in \S: v\in S}\ln2\cdot x_S}\leq e^{-\ln2}=1/2$ since $1-x\leq e^{-x}$ for any $x\in[0,1]$ and we have $\sum_{\substack{S\in \S: v\in S}}x_S \geq 1$ for any $v\in V^*$ by the linear program.
\end{proof}

\subsection{Proof of Lemma~\ref{t5}}
\begingroup
\def\thelemma{\ref{t5}}
\begin{lemma}
For unsplittable Cu-VRP with any $\lambda\in(0,1]$, $\delta\in(0,\lambda/2]$, and $1/\delta\in\mathbb{N}$, $ALG.3(\lambda,\delta)$ generates a solution $\T$ with an expected cumulative cost of
\[
\ln2\cdot Cu(\T^*)+\frac{\gamma\cdot\lrA{\alpha\cdot\sigma+\frac{1}{\lambda-\delta}}+\frac{\lambda}{2}\cdot\lrA{\alpha\cdot\sigma+\frac{1}{\lambda-\delta}}}{\gamma\cdot\max\lrC{\sigma,\ 1}+0.5}\cdot LB.
\]
\end{lemma}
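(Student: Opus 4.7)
The plan is to split $\T=\T'\cup\T'''$ and bound $\EE{Cu(\T)}=\EE{Cu(\T')}+\EE{Cu(\T''')}$ term by term. For the first, shortcutting gives $Cu(\T')\leq Cu(\S')$, so Lemma~\ref{pr} yields $\EE{Cu(\T')}\leq\ln 2\cdot Cu(\T^*)$. It therefore suffices to establish
\[
\EE{Cu(\T''')}\leq \lrA{a+\frac{b\lambda}{2}}\cdot\lrA{\alpha\tau+\frac{\eta}{\lambda-\delta}},
\]
after which substituting $\gamma=a/b$, $\sigma=\tau/\eta$, and $LB=b\eta\cdot(\gamma\cdot\max\{\sigma,1\}+1/2)$ rearranges the right-hand side into the second summand of the lemma.

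I decompose the itinerary produced by $ALG.1(\lambda,\delta)$ on $V'\setminus V^{**}$ into \emph{regular} tours (portions of the sub-TSP tour together with reload trips to $v_0$) and \emph{single-customer} tours $v_0 v_i v_0$ for customers $v_i$ with $d_i>\lambda$. In $ALG.1(\lambda,\delta)$ every regular tour has starting load at most $\lambda$, so after Line~\ref{opt} its total delivery $D(T)$ also satisfies $D(T)\leq\lambda$. For any tour whose starting load equals $D(T)$, the forward and reverse cargo loads on each physical edge sum to $D(T)$; the direction-optimization step therefore bounds its cargo cost by $b\cdot D(T)\cdot w(T)/2$. Hence each regular tour contributes at most $(b\lambda/2)\cdot w(T)$ to cargo cost, while each single-customer tour contributes exactly $b\cdot d_i\cdot l_i$.

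For the distance estimates, the $\alpha$-approximate TSP tour on $(V'\setminus V^{**})\cup\{v_0\}$ has weight at most $\alpha\tau$ by shortcutting an optimum TSP on $V$, and the per-customer expected extra distance in $ALG.1(\lambda,\delta)$ equals the coefficient of $a$ in Lemma~\ref{l2}. Inserting the marginal probability from Lemma~\ref{pr} ($=1$ if $d_i\leq\delta$ and $\leq 1/2$ otherwise), I will verify four pointwise inequalities that collectively bound the expected regular distance plus $\frac{1}{\lambda}\cdot\EE{\sum_{v_i\in V'\setminus V^{**},\ d_i>\lambda}d_i\cdot l_i}$ by $\alpha\tau+\eta/(\lambda-\delta)=\alpha\tau+\sum_i\frac{2d_i}{\lambda-\delta}l_i$: (a) $\frac{2d_i}{\lambda-\delta}$ for $d_i\leq\delta$ (equality); (b) $\frac{1}{2}\cdot\frac{4d_i-2\delta}{\lambda-\delta}\leq\frac{2d_i}{\lambda-\delta}$; (c) $\frac{1}{2}\cdot\frac{2d_i+2\lambda-4\delta}{\lambda-\delta}\leq\frac{2d_i}{\lambda-\delta}$ using $d_i>\lambda-\delta\geq\lambda-2\delta$; and (d) $\frac{d_i}{\lambda}\leq\frac{2d_i}{\lambda-\delta}$ for $d_i>\lambda$. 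The same inequalities, stripped of the $\lambda/2$ factor, control the vehicle cost of $\T'''$ by $a\alpha\tau+a\eta/(\lambda-\delta)$.

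Summing the vehicle and cargo estimates gives the displayed bound on $\EE{Cu(\T''')}$, and combining with $\EE{Cu(\T')}\leq\ln 2\cdot Cu(\T^*)$ yields the lemma. The main obstacle is the cargo bookkeeping: single-customer tours for customers with $d_i>\lambda$ carry more than $\lambda$ units and therefore do not obey the clean $(\lambda/2)\cdot w(T)$ bound, so their cargo cost must be absorbed separately using the $1/2$ marginal from Lemma~\ref{pr} paired with inequality (d); this pairing is exactly what allows the single-customer and regular contributions to combine into one aggregate bound of the form $(a+b\lambda/2)\cdot(\alpha\tau+\eta/(\lambda-\delta))$.
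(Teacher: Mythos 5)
Your proposal is correct and follows essentially the same route as the paper's proof: the split $\T=\T'\cup\T'''$ with Lemma~\ref{pr} handling the set-cover part, the decomposition of the $ALG.1$ output into regular tours (delivering at most $\lambda$) and single-customer tours for $d_i>\lambda$, the direction-reversal argument bounding each regular tour's cargo cost by $\tfrac{b\lambda}{2}\cdot w(T)$ (the paper's Claim~5), and the per-customer distance coefficients from Lemma~\ref{l2} weighted by the non-coverage probabilities $1$ and $\tfrac12$, collapsed via the same pointwise inequalities into $\alpha\tau+\eta/(\lambda-\delta)$. Your explicit pairing of the $\tfrac12$ marginal with the bound $\tfrac{d_i}{2}\leq\tfrac{\lambda d_i}{\lambda-\delta}$ for the huge customers' cargo is exactly the paper's final step $\tfrac{\lambda x}{2}\leq\tfrac12\leq\tfrac{x}{\lambda-\delta}$.
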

\endgroup
\begin{proof}
Recall that $\T=\T'\cup\T'''$. The set of tours $\T'$ are obtained by the selected sets in $\S'$ and $Cu(\T')\leq Cu(\S')$ by the triangle inequality. By Lemma~\ref{pr}, we have $\EE{Cu(\T')}\leq \ln2\cdot\OPT$. Next, we consider $\T''$, which is obtained by calling $ALG.1(\lambda,\delta)$ for customers in $V'\setminus V^{**}$.

Define small customers $V_s\coloneqq\{v_i\in V'\mid 0<d_i\leq \delta\}$, big customers $V_b\coloneqq\{v_i\in V'\mid \delta<d_i\leq \lambda-\delta\}$, large customers $V_l\coloneqq\{v_i\in V'\mid \lambda-\delta<d_i\leq \lambda\}$, and huge customers $V_h\coloneqq\{v_i\in V'\mid \delta<\lambda\leq 1\}$. Note that $V^*=V_b\cup V_l\cup V_h$. 

By Lemma~\ref{pr}, we know that $\PP{v\in V_s\cap (V'\setminus V^{**})}=\PP{v\notin V^{**}}=1$ for each $v\in V_s$, and
$\PP{v\in V_b\cap (V'\setminus V^{**})}=\PP{v\notin V^{**}}\leq 1/2$ for each $v\in V_b$. Similarly, we have $\PP{v\in V_l\cap (V'\setminus V^{**})}\leq1/2$ for each $v\in V_l$ and $\PP{v\in V_h\cap (V'\setminus V^{**})}\leq1/2$ for each $v\in V_h$.

By $ALG.1(\lambda,\delta)$, there are two kinds of tours in $\T''$, where the first kind of tour satisfying customers in Lines~\ref{case1ofalg1}, \ref{case1ofalg1+}, \ref{addtwo1}, and \ref{addtwo2} delivers at most $\lambda$ units of goods, and the second tour satisfying a customer $v_i$ in Line~\ref{clean} delivers exactly $d_i$ units of goods. Denote the set of the first kind and the second kind of tours by $\T''_1$ and $\T''_2$, respectively. By Lemma~\ref{t1}, the expected vehicle cost of $\T''_1$ is
\begin{equation}\label{eq3}
\begin{split}
&\EEE{Cu_1(\T''_1)}\\
&=\frac{\EEE{a\cdot\lrA{\alpha\cdot\tau+\sum_{v_i\in V_s\cap (V'\setminus V^{**})}\frac{2d_i}{\lambda-\delta}\cdot l_i+\sum_{v_i\in V_b\cap (V'\setminus V^{**})}\frac{4d_i-2\delta}{\lambda-\delta}\cdot l_i+\sum_{v_i\in V_l\cap (V'\setminus V^{**})}\frac{2d_i+2\lambda-4\delta}{\lambda-\delta}\cdot l_i}}}{LB}\cdot LB\\
&\leq\frac{{a\cdot\lrA{\alpha\cdot\tau+\sum_{v_i\in V_s}\frac{2d_i}{\lambda-\delta}\cdot l_i+\sum_{v_i\in V_b}\frac{2d_i-\delta}{\lambda-\delta}\cdot l_i+\sum_{v_i\in V_l}\frac{d_i+\lambda-2\delta}{\lambda-\delta}\cdot l_i}}}{LB}\cdot LB\\
&=\frac{\gamma\cdot\lrA{\alpha\cdot\sigma+\int^{\delta}_0\frac{x}{\lambda-\delta}dF(x)+\int^{\lambda-\delta}_{\delta}\frac{x-\delta/2}{\lambda-\delta}dF(x)+\int^\lambda_{\lambda-\delta}\frac{x/2+\lambda/2-\delta}{\lambda-\delta}dF(x)}}{\gamma\cdot\max\lrC{\sigma,\ 1}+0.5}\cdot LB\\
&\leq\frac{\gamma\cdot\lrA{\alpha\cdot\sigma+\int^{\delta}_0\frac{x}{\lambda-\delta}dF(x)+\int^{\lambda}_{\delta}\frac{x}{\lambda-\delta}dF(x)}}{\gamma\cdot\max\lrC{\sigma,\ 1}+0.5}\cdot LB=\frac{\gamma\cdot\lrA{\alpha\cdot\sigma+\int^{\lambda}_0\frac{x}{\lambda-\delta}dF(x)}}{\gamma\cdot\max\lrC{\sigma,\ 1}+0.5}\cdot LB,
\end{split}
\end{equation}
where the first inequality follows from $\PP{v\in V_s\cap (V'\setminus V^{**})}=1$ for each $v\in V_s$,
$\PP{v\in V_b\cap (V'\setminus V^{**})}\leq 1/2$ for each $v\in V_b$, and $\PP{v\in V_l\cap (V'\setminus V^{**})}\leq 1/2$ for each $v\in V_l$ by the previous analysis.

The expected cumulative cost of $\T''_2$ is
\begin{equation}\label{eq4}
\begin{split}
&\EEE{Cu(\T''_2)}=\frac{\EEE{a\cdot{\sum_{v_i\in V_h}2\cdot l_i}+b\cdot\sum_{v_i\in V_h}d_i\cdot l_i}}{LB}\cdot LB\\
&\leq\frac{{a\cdot{\sum_{v_i\in V_h}l_i}+b\cdot\sum_{v_i\in V_h}\frac{ d_i}{2}\cdot l_i}}{LB}\cdot LB=\frac{\gamma\cdot\int^1_{\lambda}\frac{1}{2}dF(x)+\int^1_{\lambda}\frac{x}{4}dF(x)}{\gamma\cdot\max\lrC{\sigma,\ 1}+0.5}\cdot LB.
\end{split}
\end{equation}

The set of tours $\T'''$ is obtained by optimizing the set of tours $\T''$. 
Denote the set of the optimized tours in $\T''_1$ by $\T'''_1$ and in $\T''_2$ by $\T'''_2$. Thus, we have $\T'''=\T'''_1\cup \T'''_2$.

\begin{claim}\label{claim}
It holds that $Cu(\T'''_2)=Cu(\T''_2)$, and $Cu(\T'''_1)\leq \lrA{1+\frac{\lambda}{2\gamma}}\cdot Cu_1(\T''_1)$.
\end{claim}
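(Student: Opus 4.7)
The plan is to handle the two parts separately and to exploit a simple but crucial pairing identity between forward and reverse traversals of the same undirected tour.

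For the first equality $Cu(\T'''_2)=Cu(\T''_2)$, I would observe that every tour $T\in\T''_2$ was produced by Line~\ref{clean} of $ALG.1(\lambda,\delta)$, hence has the form $v_0 v_i v_0$ for some huge customer $v_i$, with the vehicle already carrying exactly the delivered amount $d_i$ on the outgoing edge and $0$ on the returning edge. Consequently the pre-optimization in Line~\ref{opt} (forcing load = delivered units) has no effect, and reversing a two-edge tour through a single customer gives a symmetric itinerary with identical cumulative cost. Therefore the optimization step leaves each such tour untouched.

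For the second inequality, I would first isolate the key lemma: if $T=v_0 c_1 c_2 \cdots c_k v_0$ is any tour that, after pre-optimization, starts with load $D_T = \sum_{j} d_{c_j}$, then
\[
Cu_2(T^{\mathrm{fwd}}) + Cu_2(T^{\mathrm{rev}}) \;=\; b \cdot D_T \cdot w(T).
\]
The proof is a direct pairing on each undirected edge of $T$: on the edge $(c_{j-1},c_j)$ the forward load is $\sum_{i\geq j} d_{c_i}$, while on the same edge traversed in the opposite direction the load is $\sum_{i<j} d_{c_i}$, and these sum to $D_T$; the two depot-incident edges carry loads $D_T$ and $0$ in the two directions and likewise sum to $D_T$. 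Taking the better of the two directions in Line~\ref{opt} therefore yields cargo cost at most $b\cdot D_T\cdot w(T)/2$.

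Next, I would apply this lemma tour-by-tour to $\T''_1$. Each such tour arises from a depot-to-depot segment generated in Lines~\ref{case1ofalg1}, \ref{case1ofalg1+}, \ref{addtwo1}, or \ref{addtwo2} of $ALG.1(\lambda,\delta)$, and by construction the vehicle's load on any edge of the TSP tour never exceeds $\lambda$, so the total delivered in the tour satisfies $D_T\le \lambda$. Writing $T'''$ for the optimized version of $T$, the above gives
\[
Cu(T''') \;\leq\; a\cdot w(T) + \tfrac{b\cdot D_T}{2}\cdot w(T) \;\leq\; a\cdot w(T) + \tfrac{b\lambda}{2}\cdot w(T) \;=\; \lrA{1+\tfrac{\lambda}{2\gamma}}\cdot Cu_1(T),
\]
using $\gamma=a/b$ and $Cu_1(T)=a\cdot w(T)$. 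Summing over all $T\in\T''_1$ yields the claimed bound $Cu(\T'''_1)\leq (1+\lambda/(2\gamma))\cdot Cu_1(\T''_1)$.

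The only subtlety I anticipate is bookkeeping: one must verify that decomposing the $ALG.1(\lambda,\delta)$ itinerary into depot-to-depot tours produces a well-defined family $\T''_1\cup\T''_2$ in which every tour of $\T''_1$ really does deliver at most $\lambda$ units after the pre-optimization, and that the reverse-direction argument is applied to each tour independently without interfering with the depot visits of the others. Once this is verified, both statements of the claim follow from the pairing identity above.
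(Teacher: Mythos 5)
Your proposal is correct and follows essentially the same route as the paper: the single-customer tours in $\T''_2$ are unaffected by the optimization, and for each tour in $\T''_1$ the forward and reverse cargo costs pair up to $b\cdot D_T\cdot w(T)$ edge by edge, so the better direction costs at most $a\cdot w(T)+b\cdot\frac{\lambda}{2}\cdot w(T)=\lrA{1+\frac{\lambda}{2\gamma}}\cdot Cu_1(T)$ using $D_T\leq\lambda$. The paper phrases this by writing out $Cu(T)$ and $Cu(T')$ explicitly and bounding the minimum by their average, which is the same pairing identity you state.
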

\begin{proof}[Claim proof]
The second kind of tour in $\T''_2$ can not be improved using the optimization method in Line~\ref{opt} since it is a single tour satisfying a single customer. Hence, we have $Cu(\T'''_2)=Cu(\T''_2)$. Next, we consider the first kind of tour.

For each tour $T=v_0v_{1}\dots v_{k}v_0\in \T''_1$, Line~\ref{opt} ensures that the load of the vehicle equals the delivered units of goods on $T$. Hence, the vehicle carries $\sum_{j=1}^{i-1}d_j$ units of goods when traveling along the edge $v_{i-1}v_{i\bmod (k+1)}$ for each $i\in[k+1]$, and then we have 
\[
Cu(T)=a\cdot\sum_{i=1}^{k+1}w(v_{i-1},v_{i\bmod (k+1)})+b\cdot\sum_{i=1}^{k+1}\lrA{\sum_{j=1}^{i-1}d_j}\cdot w(v_{i-1},v_{i\bmod (k+1)}).
\]

Consider the tour with the opposite direction, i.e., $T'=v_0v_{k}\dots v_{i}v_0\in \T''$. we have 
\[
Cu(T')=a\cdot\sum_{i=1}^{k+1}w(v_{i-1},v_{i\bmod (k+1)})+b\cdot\sum_{i=1}^{k+1}\lrA{\sum_{j=1}^{k}d_j-\sum_{j=1}^{i-1}d_j}\cdot w(v_{i-1},v_{i\bmod (k+1)}).
\]

Therefore, the better one has a cumulative cost of at most 
\begin{align*}
&a\cdot\sum_{i=1}^{k+1}w(v_{i-1},v_{i\bmod (k+1)})+b\cdot\frac{1}{2}\cdot\sum_{j=1}^{k}d_j\cdot\sum_{i=1}^{k+1}w(v_{i-1},v_{i\bmod (k+1)})\\
&\leq a\cdot\sum_{i=1}^{k+1}w(v_{i-1},v_{i\bmod (k+1)})+b\cdot\frac{\lambda}{2}\cdot\sum_{i=1}^{k+1}w(v_{i-1},v_{i\bmod (k+1)})=\lrA{1+\frac{\lambda}{2\gamma}}\cdot Cu_1(T),
\end{align*}
where the inequality follows from that the first kind of tour delivers at most $\lambda$ units of goods, and the equality from $\gamma=a/b$ and $Cu_1(T)=a\cdot\sum_{i=1}^{k+1}w(v_{i-1},v_{i\bmod (k+1)})$.

Therefore, we have $Cu(\T'''_1)\leq \lrA{1+\frac{\lambda}{2\gamma}}\cdot Cu_1(\T''_1)$.
\end{proof}

By (\ref{eq3}), (\ref{eq4}), and the Claim~\ref{claim}, the expected cumulative cost of $\T'''$ is at most 
\begin{align*}
&\lrA{1+\frac{\lambda}{2\gamma}}\cdot\frac{\gamma\cdot\lrA{\alpha\cdot\sigma+\int^{\lambda}_0\frac{x}{\lambda-\delta}dF(x)}}{\gamma\cdot\max\lrC{\sigma,\ 1}+0.5}\cdot LB+\frac{\gamma\cdot\int^1_{\lambda}\frac{1}{2}dF(x)+\int^1_{\lambda}\frac{x}{4}dF(x)}{\gamma\cdot\max\lrC{\sigma,\ 1}+0.5}\cdot LB\\
&=\frac{\gamma\cdot\lrA{\alpha\cdot\sigma+\int^{\lambda}_0\frac{x}{\lambda-\delta}dF(x)+\int^1_{\lambda}\frac{1}{2}dF(x)}+\frac{\lambda}{2}\cdot\lrA{\alpha\cdot\sigma+\int^{\lambda}_0\frac{x}{\lambda-\delta}dF(x)+\int^1_{\lambda}\frac{\lambda\cdot x}{2}dF(x)}}{\gamma\cdot\max\lrC{\sigma,\ 1}+0.5}\cdot LB\\
&\leq \frac{\gamma\cdot\lrA{\alpha\cdot\sigma+\int^{\lambda}_0\frac{x}{\lambda-\delta}dF(x)+\int^1_{\lambda}\frac{x}{\lambda-\delta}dF(x)}+\frac{\lambda}{2}\cdot\lrA{\alpha\cdot\sigma+\int^{\lambda}_0\frac{x}{\lambda-\delta}dF(x)+\int^1_{\lambda}\frac{x}{\lambda-\delta}dF(x)}}{\gamma\cdot\max\lrC{\sigma,\ 1}+0.5}\cdot LB\\
&=\frac{\gamma\cdot\lrA{\alpha\cdot\sigma+\frac{1}{\lambda-\delta}}+\frac{\lambda}{2}\cdot\lrA{\alpha\cdot\sigma+\frac{1}{\lambda-\delta}}}{\gamma\cdot\max\lrC{\sigma,\ 1}+0.5}\cdot LB,
\end{align*}
where the inequality follows from $\frac{\lambda\cdot x}{2}\leq \frac{1}{2}\leq \frac{x}{\lambda-\delta}$ for any $x\in[\lambda, 1]$.
\end{proof}
\subsection{Proof of Lemma~\ref{t7}}
\begingroup
\def\thelemma{\ref{t7}}
\begin{lemma}
For unsplittable Cu-VRP with any $\lambda\in(0,1]$, $ALG.4(\lambda)$ generates a solution $\T$ with an expected cumulative cost of
\[
\frac{\gamma\cdot\lrA{\alpha\cdot\sigma+\int^{\lambda}_{0}\frac{2x}{\lambda}dF(x)+\int_\lambda^1 1dF(x)}+\lrA{\frac{\lambda}{2}\cdot\alpha\cdot\sigma+\int^{\lambda}_{0}\frac{x^2/2+\lambda\cdot x}{2\lambda}dF(x)+\int_\lambda^1\frac{x}{2}dF(x)}}{\gamma\cdot\max\lrC{\sigma,\ 1}+0.5}\cdot LB.
\]
\end{lemma}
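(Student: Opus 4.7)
The plan is to combine Lemma~\ref{t1} at $\delta=0$, which gives the expected cumulative cost of $ALG.1(\lambda,0)$, with a careful accounting of the effect of Line~\ref{opt+} of $ALG.4(\lambda)$. Since Line~\ref{opt+} only re-chooses loads and directions of the tours already constructed by $ALG.1(\lambda,0)$ and does not alter the set of edges traversed by the vehicle, the expected vehicle cost of $ALG.4(\lambda)$ equals that of $ALG.1(\lambda,0)$. Consequently, the entire $\gamma$-block in the numerator of the target expression is inherited verbatim from Lemma~\ref{t1} with $\delta=0$.

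All the improvement is concentrated in the cargo cost. I would partition the tours of $\T'$ into single-customer satisfying tours $v_0v_iv_0$ (produced by Case~3.1 when $d_i\leq\lambda$ or by Case~3.2 when $d_i>\lambda$) and multi-customer tours that traverse a contiguous segment of $T^*$ between two consecutive visits to $v_0$. Each single-customer satisfying tour already has load equal to delivered amount and is symmetric under reversal of direction, so Line~\ref{opt+} leaves its cargo contribution unchanged, reproducing the corresponding terms of Lemma~\ref{t1} exactly. For a multi-customer tour $\pi$ ending at its Case~3.1 trigger $v_b$, the load-equals-delivered adjustment removes the carried-but-undelivered surplus $L_{b-1}^\pi$ from every edge of $\pi$, and the direction choice then bounds the cargo on $\pi$ above by $\tfrac{1}{2}b\,D(\pi)\,w(\pi)$ with $D(\pi)\leq\lambda$.

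The remaining work is to turn this per-tour improvement into a per-customer saving. Lemma~\ref{l0} gives $L_{i-1}\sim U[0,\lambda)$, whence a customer $v_i$ with $0<d_i\leq\lambda$ is served inside a multi-customer tour with probability $1-d_i/\lambda$ and triggers the Case~3.1 detour otherwise, and $\EE{L_{i-1}\mid L_{i-1}<d_i}=d_i/2$ determines the expected wasted load at each trigger. Combining these with the triangle-inequality fact that each of the two sub-arcs of $\pi$ between $v_0$ and $v_i$ has length at least $l_i$, and averaging over $L_0$, attributes an expected cargo reduction of $b\cdot\frac{d_i^2}{2\lambda}\cdot l_i$ to every such $v_i$. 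Summing over $i$ and normalising via~(\ref{int}) converts the Lemma~\ref{t1} integrand $\frac{x^2+\lambda x}{2\lambda}$ into $\frac{x^2/2+\lambda x}{2\lambda}$, producing the claimed expression.

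The hard part is this final per-customer accounting: the cargo reduction at $v_i$ depends simultaneously on $L_0$, on which other customers end up served on the same tour as $v_i$, and on the global geometry of $\pi$ through $w(\pi)$. Distilling this into the clean local quantity $b\cdot d_i^2 l_i/(2\lambda)$ requires invoking the triangle inequality locally around $v_i$ so that $w(\pi)$ drops out of the final expression, and then integrating the direction minimum against the law of $L_0$. This is where all the routine calculations must be lined up with care.
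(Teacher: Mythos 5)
Your proposal matches the paper's proof in all essentials: both treat the single-customer tours as unchanged by Line~\ref{opt+}, keep the vehicle-cost block from Lemma~\ref{t1} with $\delta=0$ verbatim, and obtain the improved integrand $\frac{x^2/2+\lambda\cdot x}{2\lambda}$ by observing that making the load equal the delivered amount removes the undelivered surplus $L_{i-1}$ from the return leg of each triggered segment tour, which together with $\Pr[L_{i-1}<d_i]=d_i/\lambda$ and $\EE{L_{i-1}\mid L_{i-1}<d_i}=d_i/2$ gives exactly the per-customer expected cargo saving $b\cdot\frac{d_i^2}{2\lambda}\cdot l_i$. The only cosmetic difference is that your intermediate direction-reversal bound $\tfrac{1}{2}\cdot b\cdot D(\pi)\cdot w(\pi)$ is never needed in the final accounting; the paper likewise discards the reversal step for this lemma and relies solely on the load adjustment.
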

\endgroup
\begin{proof}
Firstly, $ALG.4(\lambda)$ calls $ALG.1(\lambda,0)$ to obtain a set of tours $\T'$. As in the proof of Lemma~\ref{t5}, there are two kinds of tours in $\T'$, where the first kind of tour satisfying customers in Lines~\ref{case1ofalg1}, \ref{case1ofalg1+}, \ref{addtwo1}, and \ref{addtwo2} delivers at most $\lambda$ units of goods, and the second tour satisfying a customer $v_i$ in Line~\ref{clean} delivers exactly $d_i$ units of goods. Denote the set of the first kind and the second kind of tours by $\T'_1$ and $\T'_2$, respectively.

By Lemma~\ref{t1}, we can obtain 
\begin{align*}
\EEE{Cu(\T'_1)}\leq\frac{\gamma\cdot\lrA{\alpha\cdot\sigma+\int^{\lambda}_{0}\frac{2x}{\lambda}dF(x)}+\lrA{\frac{\lambda}{2}\cdot\alpha\cdot\sigma+\int^{\lambda}_{0}\frac{x^2+\lambda\cdot x}{2\lambda}dF(x)}}{\gamma\cdot\max\lrC{\sigma,\ 1}+0.5}\cdot LB,
\end{align*}
and 
\begin{equation}\label{eq5}
\EEE{Cu(\T'_2)}\leq\frac{\gamma\cdot\int_\lambda^1 1dF(x)+\int_\lambda^1\frac{x}{2}dF(x)}{\gamma\cdot\max\lrC{\sigma,\ 1}+0.5}\cdot LB.
\end{equation}

The set of tours $\T$ is obtained by optimizing the set of tours $\T'$. 
Denote the set of the optimized tours in $\T'_1$ by $\T_1$ and in $\T'_2$ by $\T_2$. 

\begin{claim}\label{claim+}
It holds that $Cu(\T_2)=Cu(\T'_2)$ and 
\[\EEE{Cu(\T'_1)}\leq\frac{\gamma\cdot\lrA{\alpha\cdot\sigma+\int^{\lambda}_{0}\frac{2x}{\lambda}dF(x)}+\lrA{\frac{\lambda}{2}\cdot\alpha\cdot\sigma+\int^{\lambda}_{0}\frac{x^2/2+\lambda\cdot x}{2\lambda}dF(x)}}{\gamma\cdot\max\lrC{\sigma,\ 1}+0.5}\cdot LB.
\]
\end{claim}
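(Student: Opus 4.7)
The plan is to prove the two assertions of Claim~\ref{claim+} separately, paralleling Claim~\ref{claim} but accounting for the additional ``load-equals-delivered'' step in Line~\ref{opt+}.

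For the equality $Cu(\T_2)=Cu(\T'_2)$: every tour in $\T'_2$ is a single-customer round-trip $v_0 v_i v_0$ from Line~\ref{clean} for a huge customer $v_i$ with $d_i>\lambda$. The vehicle already carries exactly $d_i$ units outbound and $0$ inbound, so the load-equals-delivered step is vacuous, and reversing the two-edge walk merely swaps which edge bears the load $d_i$, giving identical cumulative cost. Hence Line~\ref{opt+} leaves every tour in $\T'_2$ unchanged, yielding $Cu(\T_2) = Cu(\T'_2)$.

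For the bound on $\EEE{Cu(\T_1)}$: I decompose the trajectory of $ALG.1(\lambda,0)$ into maximal closed sub-walks beginning and ending at $v_0$. Each such sub-walk $T$ in $\T'_1$ is either (i) a TSP-segment that traverses part of $T^*$ and ends with a Case~3.1-triggered return to $v_0$, or (ii) a single-customer emergency round-trip $v_0 v_i v_0$. In both cases the delivered amount $D(T)$ is at most $\lambda$. After Line~\ref{opt+}, the initial load on $T$ is reset to $D(T)$ and the better of the two directions is retained. Mimicking Claim~\ref{claim}, on every edge of $T$ the forward and reversed loads sum to exactly $D(T)$, so the chosen direction has cumulative cost at most $a\cdot w(T)+\tfrac{b}{2}D(T)\cdot w(T)$. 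Summing over $T\in\T'_1$,
\[
\EEE{Cu(\T_1)} \leq a\cdot\EEE{w(\T'_1)} + \tfrac{b}{2}\cdot\EEE{\sum_{T\in\T'_1}D(T)\cdot w(T)}.
\]

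I would then evaluate the two expectations with the same per-customer bookkeeping as in Lemma~\ref{t1} (with $\delta=0$). The first expectation matches the $\gamma\bigl(\alpha\sigma+\int_0^\lambda\tfrac{2x}{\lambda}dF(x)\bigr)$ part of the stated bound, reproducing Lemma~\ref{t1}'s vehicle-cost contribution for $\T'_1$. The second splits into a TSP-traversal part yielding $\tfrac{\lambda}{2}\alpha\tau$ via $\EEE{L_{i-1}}=\tfrac{\lambda}{2}$ from Lemma~\ref{l0} together with $w(T^*)\leq\alpha\tau$, plus a per-customer emergency-phase part of $\tfrac{b}{2}D(T)w(T)$ aggregated over the three sub-walks surrounding each Case~3.1 trigger. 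Working through this accounting shows that the reversal halves the quadratic-in-$d_i$ cargo while preserving the linear part, producing the integrand $\tfrac{x^2/2+\lambda x}{2\lambda}=\tfrac{x^2}{4\lambda}+\tfrac{x}{2}$ in place of Lemma~\ref{t1}'s $\tfrac{x^2+\lambda x}{2\lambda}=\tfrac{x^2}{2\lambda}+\tfrac{x}{2}$.

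The main obstacle will be this last bookkeeping step: rigorously verifying that the combined effect of (i) setting initial load to $D(T)$ on each sub-walk and (ii) reversing to the better direction exactly halves the coefficient of $d_i^2$ from $\tfrac{1}{\lambda}$ to $\tfrac{1}{2\lambda}$ in the per-customer cargo, while preserving the linear-in-$d_i$ coefficient. The preservation of the linear part should follow because it corresponds to the unavoidable $bd_il_i$ delivery cargo of the emergency round-trip itself, whose two edges are both incident to $v_0$ and therefore get paired with themselves under reversal, admitting no averaging savings.
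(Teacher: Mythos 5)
Your treatment of $Cu(\T_2)=Cu(\T'_2)$ matches the paper's. For the main inequality, however, you take a genuinely different route from the paper, and the route you take has its decisive step still open. The paper explicitly abandons the direction-reversal averaging of Claim~\ref{claim} here (``we need a different method''): it keeps the forward orientation and harvests the gain solely from the load-equals-delivered step, observing that the first return leg $v_i\to v_0$ of a Case~3.1 double visit now carries $0$ instead of $L_{i-1}$, so the conditional cost of the four extra legs drops from $b(2L_{i-1}+\lambda)l_i$ to $b(L_{i-1}+\lambda)l_i$; integrating $\int_0^{d_i}(x+\lambda)\frac{dx}{\lambda}$ gives $\frac{d_i^2/2+\lambda d_i}{\lambda}$, i.e.\ exactly the integrand $\frac{x^2/2+\lambda x}{2\lambda}$, with the TSP-edge cargo left at its unoptimized expected value $\frac{\lambda}{2}$ per unit length.

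Your per-tour averaging bound $a\cdot w(T)+\frac{b}{2}D(T)w(T)$ is a valid upper bound on the better orientation, but the bookkeeping you defer is precisely where the proof lives, and your stated heuristic for it does not hold up. If you only use $D(T)\le\lambda$, the two non-emergency depot legs attached to a trigger at $v_i$ cost $\frac{b}{2}(\lambda+\lambda)l_i$ and the emergency round trip costs $b d_i l_i$, giving $b(\lambda+d_i)l_i$ per trigger and hence the \emph{unimproved} integrand $\frac{x^2+\lambda x}{2\lambda}$ of Lemma~\ref{t1} -- no gain at all. To recover the claimed bound along your route you need the sharper pair of estimates $D(T^{\mathrm{bef}})=L_{\mathrm{start}}-L_{i-1}\le\lambda-L_{i-1}$ and $D(T^{\mathrm{aft}})\le L_i=L_{i-1}+\lambda-d_i$, whose sum $2\lambda-d_i$ is independent of $L_{i-1}$ and yields the per-trigger budget $\lambda+\frac{d_i}{2}$, i.e.\ $\frac{\lambda d_i+d_i^2/2}{\lambda}$ after multiplying by the trigger probability $\frac{d_i}{\lambda}$. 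Note also that your attribution of terms is reversed: the emergency round trip contributes the quadratic part $\frac{x^2}{2\lambda}$ unconditionally (its conditional cost $bd_il_i$ times probability $\frac{d_i}{\lambda}$), while the surviving linear part comes from the $\lambda$ in $2\lambda-d_i$ on the two non-emergency legs; it is the averaging on those two legs, combined with the two load bounds above, that halves the quadratic coefficient. With that inequality supplied, your argument closes and gives the same final expression as the paper's, but as written the proposal does not yet establish the improvement over Lemma~\ref{t1}.
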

\begin{proof}[Claim proof]
By the proof of Claim~\ref{claim}, we can obtain $Cu(\T_2)=Cu(\T'_2)$. To prove the inequality, we need a different method from that used in Claim~\ref{claim}.

Recall Case~2 in the proof Lemma~\ref{l2}.
Since $\delta=0$, if the vehicle incurs some additional visits to $v_0$ when serving a customer $v_i$, it must incur two additional visits to $v_0$. Moreover, the trips by the additional visits to $v_0$ belong to some tours in $\T'_1$.
If the vehicle incurs two additional visits, 
the vehicle carries $(L_{i-1},0)$ units of goods from $v_i$ to $v_0$, $(d_i,0)$ units of goods from $v_0$ to $v_i$, $(0,0)$ units of goods from $v_i$ to $v_0$, and $(L_{i-1}+\lambda-d_i,0)$ units of goods from $v_0$ to $v_i$, and the cumulative cost is $a\cdot 4\cdot l_i + b\cdot (L_{i-1}+d_i+0+L_{i-1}+\lambda-d_i)\cdot l_i=a\cdot 4\cdot l_i + b\cdot (2L_{i-1}+\lambda)\cdot l_i$. 
Specifically, we can assume that the vehicle carries $(0,0)$ instead of $(L_{i-1},0)$ units of goods from $v_i$ to $v_0$ in the first visit to $v_0$, as we will optimize the tours in $\T'_1$ so that the load of the vehicle equals the delivered units of goods. Hence, the cumulative cost reduces to $a\cdot 4\cdot l_i + b\cdot (0+d_i+0+L_{i-1}+\lambda-d_i)\cdot l_i=a\cdot 4\cdot l_i + b\cdot (L_{i-1}+\lambda)\cdot l_i$. Then, following the proof of Lemma~\ref{t1}, we can obtain 
\[\EEE{Cu(\T_1)}\leq\frac{\gamma\cdot\lrA{\alpha\cdot\sigma+\int^{\lambda}_{0}\frac{2x}{\lambda}dF(x)}+\lrA{\frac{\lambda}{2}\cdot\alpha\cdot\sigma+\int^{\lambda}_{0}\frac{x^2+2\lambda\cdot x}{4\lambda}dF(x)}}{\gamma\cdot\max\lrC{\sigma,\ 1}+0.5}\cdot LB.
\]
This finishes the proof.
\end{proof}
By (\ref{eq5}) and Claim~\ref{claim+}, the theorem is proved.
\end{proof}

\subsection{Proof of Theorem~\ref{t8}}
\begingroup
\def\thetheorem{\ref{t8}}
\begin{theorem}
For unsplittable Cu-VRP, when $\alpha=1.5$, we can find $(\lambda,\theta,p)$ such that the approximation ratio of $APPROX.4(\lambda,\theta,p)$ is bounded by $3.163$ for any $\gamma\in(0,0.428]$. 
Moreover, for any $\alpha\in[1,1.5]$, we can find $(\lambda,\theta,p)$ such that the approximation ratio of $APPROX.4(\lambda,\theta,p)$ is bounded by $\alpha+1+\ln2-0.029$ for any $\gamma\in(0,0.285]$.
\end{theorem}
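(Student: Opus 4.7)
The plan is to follow the template of the proof of Theorem~\ref{t2}, but with Lemma~\ref{t7} in place of Lemma~\ref{t1}. The essential quantitative improvement comes from Lemma~\ref{t7}: the cargo cost in $ALG.4(\lambda)$ contains the term $\int^\lambda_0\frac{x^2/2+\lambda x}{2\lambda}dF(x)$ rather than the larger $\int^\lambda_0\frac{x^2+\lambda x}{2\lambda}dF(x)$ appearing in Lemma~\ref{t1}. Intuitively, this halving of the $x^2$ contribution (produced by the reverse-direction pre-optimization in Line~\ref{opt+}) is what allows us to push the ratio below $\alpha+1+\ln 2$ by an additive $0.029$.

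First, I would introduce the auxiliary quantity $\mu\coloneqq\int^\lambda_0 x^2 dF(x)\big/\int^\lambda_0 x\, dF(x)$ (with the convention $\mu=0$ when the denominator vanishes). Applying Lemma~\ref{t7} together with the standard inequalities $\int^1_\lambda 1\,dF(x)\le \tfrac{1}{\lambda}\int^1_\lambda x\,dF(x)$ and $\int^1_0 x\,dF(x)=1$ used in Theorem~\ref{t2}, the ratio of $ALG.4(\lambda)$ is bounded by an expression whose $\mu$-dependence carries a coefficient of $\tfrac{1}{4\lambda}$, as opposed to the $\tfrac{1}{2\lambda}$ encountered in Theorem~\ref{t2}. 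Using Lemma~\ref{a1} to bound $\int^{\theta\lambda}_0 x\,dF(x)\le \tfrac{\lambda-\mu}{\lambda-\theta\lambda}$ yields an analogous expression for $ALG.4(\theta\lambda)$.

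Second, I would pick the mixing probability $p\in[0,1]$ so that the $\mu$-dependent terms in the two bounds cancel after averaging with weights $p$ and $1-p$; the algebra here mirrors the derivation of the $p$ formula in Theorem~\ref{t2}, only with the adjusted coefficient $\tfrac{1}{4\lambda}$. After this cancellation the approximation ratio becomes a function $R(\sigma)$ of the form $(A\sigma+B)\big/(C\max\{\sigma,1\}+D)$ with positive constants depending on $(\gamma,\alpha,\lambda,\theta)$, so, exactly as in Theorem~\ref{t2}, $\sup_{\sigma\ge 0}R(\sigma)=\max\{R(1),R(\infty)\}$.

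Finally, for each $(\gamma,\alpha)$ in the target range I would numerically sweep $(\lambda,\theta)\in(0,1]\times(0,1)$ on a fine grid (as the authors did in Figure~\ref{fig2}), evaluate the induced $p$ and the quantity $\max\{R(1),R(\infty)\}$, and exhibit parameters attaining the claimed bounds of $3.163$ for $\gamma\in(0,0.428]$ with $\alpha=1.5$, and $\alpha+1+\ln 2-0.029$ for $\gamma\in(0,0.285]$ with $\alpha\in[1,1.5]$. The main obstacle is exactly this verification step: unlike Theorem~\ref{unttt0}, where the closed form $\lambda=\min\{1,4\gamma/\alpha\}$ immediately yields $\alpha+2$, here the saving of $0.029$ over $\alpha+1+\ln 2$ is so small that the optimal $(\lambda,\theta)$ varies sensitively with $(\gamma,\alpha)$ and no clean symbolic bound is available; the argument is therefore computation-driven, parallel to the numerical strategy used by the authors in Theorem~\ref{t2}.
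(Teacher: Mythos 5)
Your structural plan matches the paper's proof of Theorem~\ref{t8} almost exactly: start from Lemma~\ref{t7}, note that the $x^2/2$ term halves the $\mu$-coefficient from $\tfrac{1}{2\lambda}$ to $\tfrac{1}{4\lambda}$, bound $ALG.4(\theta\lambda)$ via Lemma~\ref{a1}, choose $p$ to cancel the $\mu$-terms (the paper gets $p=\frac{\frac{1}{4(\lambda-\theta\lambda)}+\frac{\gamma}{\theta\lambda(\lambda-\theta\lambda)}}{\frac{1}{4\lambda}+\frac{1}{4(\lambda-\theta\lambda)}+\frac{\gamma}{\theta\lambda(\lambda-\theta\lambda)}}$), and reduce to $\max\{R(1),R(\infty)\}$. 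Up to that point you are on the paper's track.

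The gap is in your final verification step. You propose to sweep $(\lambda,\theta)$ numerically ``for each $(\gamma,\alpha)$ in the target range,'' but the claims must hold uniformly over a continuum, including $\gamma$ arbitrarily close to $0$, and a finite grid computation cannot certify that. Worse, the limit $R(\infty)=\alpha+\frac{(p\lambda+(1-p)\theta\lambda)\,\alpha}{2\gamma}+o(1)$ blows up as $\gamma\to0$ unless $\lambda$ is taken proportional to $\gamma$; so sweeping $\lambda$ over a fixed grid in $(0,1]$ independent of $\gamma$ cannot even produce the right parameters near $\gamma=0$. The paper's essential device, which you do not supply, is the closed-form choice $\lambda=\min\{1,3.5\gamma/\alpha\}=3.5\gamma/\alpha$ together with a \emph{constant} $\theta$ ($0.5043$ for the first claim, $0.5$ for the second). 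With $\lambda\propto\gamma$ every occurrence of $\gamma/\lambda$ and $\lambda/\gamma$ collapses to a function of $\alpha$ alone; in particular $p=\frac{8\alpha+3.5}{8\alpha+5.25}$ is independent of $\gamma$, and the bound reduces to explicit single-variable inequalities in $\alpha$ over $[1,1.5]$ (e.g.\ $\alpha+\frac{1}{2}\bigl(1+\frac{8\alpha+3.5}{8\alpha+5.25}\bigr)\cdot\frac{3.5}{2}\leq\alpha+1+\ln2-0.029$), which the paper verifies symbolically. So your remark that ``no clean symbolic bound is available'' is inaccurate: the whole point of the $3.5\gamma/\alpha$ scaling is to make such a bound available, and without it (or an equivalent continuity/monotonicity argument in $\gamma$) your computation-driven plan does not close the proof.
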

\endgroup
\begin{proof}
By Lemma~\ref{t7}, the approximation ratio of $ALG.4(\lambda,0)$ is at most
\begin{align*}
&\frac{\gamma\cdot\lrA{\alpha\cdot\sigma+\int^{\lambda}_{0}\frac{2x}{\lambda}dF(x)+\int_\lambda^1 1dF(x)}+\lrA{\frac{\lambda}{2}\cdot\alpha\cdot\sigma+\int^{\lambda}_{0}\frac{x^2/2+\lambda\cdot x}{2\lambda}dF(x)+\int_\lambda^1\frac{x}{2}dF(x)}}{\gamma\cdot\max\lrC{\sigma,\ 1}+0.5}\\
&=\frac{\gamma\cdot\lrA{\alpha\cdot\sigma+\frac{2}{\lambda}\cdot\int_{0}^{\lambda}xdF(x)+\int_{\lambda}^{1}1dF(x)}+\lrA{\frac{\lambda}{2}\cdot\alpha\cdot\sigma+\frac{1+\mu/2\lambda}{2}\cdot\int_{0}^{\lambda}xdF(x)+\frac{1}{2}\cdot\int_{\lambda}^{1}xdF(x)}}{\gamma\cdot\max\lrC{\sigma,\ 1}+0.5}\\
&\leq \frac{\gamma\cdot\lrA{\alpha\cdot\sigma+\frac{2}{\lambda}\cdot\int_{0}^{\lambda}xdF(x)+\frac{1}{\lambda}\cdot\int_{\lambda}^{1}xdF(x)}+\lrA{\frac{\lambda}{2}\cdot\alpha\cdot\sigma+\frac{1+\mu/2\lambda}{2}\cdot\int_{0}^{\lambda}xdF(x)+\frac{1}{2}\cdot\int_{\lambda}^{1}xdF(x)}}{\gamma\cdot\max\lrC{\sigma,\ 1}+0.5}\\
&= \frac{\gamma\cdot\lrA{\alpha\cdot\sigma+\frac{1}{\lambda}\cdot\int_{0}^{\lambda}xdF(x)+\frac{1}{\lambda}}+\lrA{\frac{\lambda}{2}\cdot\alpha\cdot\sigma+\frac{\mu/2\lambda}{2}\cdot\int_{0}^{\lambda}xdF(x)+\frac{1}{2}}}{\gamma\cdot\max\lrC{\sigma,\ 1}+0.5}\\
&\leq \frac{\gamma\cdot\lrA{\alpha\cdot\sigma+\frac{2}{\lambda}}+\lrA{\frac{\lambda}{2}\cdot\alpha\cdot\sigma+\frac{\mu/2\lambda+1}{2}}}{\gamma\cdot\max\lrC{\sigma,\ 1}+0.5},
\end{align*}
where the first equality follows from the definition of $\mu$, the second equality from $\int^1_0xdF(x)=1$ by (\ref{int}), the first inequality from $\int_{\lambda}^{1}1dF(x)\leq \frac{1}{\lambda}\cdot\int_{\lambda}^{1}xdF(x)$, and the second inequality from $\int_{0}^{\lambda}xdF(x)\leq \int_{0}^{1}xdF(x)=1$.

By Lemma~\ref{t7}, the approximation ratio of $ALG.4(\theta\cdot\lambda,0)$ is at most
\begin{align*}
&\frac{\gamma\cdot\lrA{\alpha\cdot\sigma+\int_{0}^{\theta\cdot\lambda}\frac{2x}{\theta\cdot\lambda}dF(x)+\int_{\theta\cdot\lambda}^{1}1dF(x)}+\lrA{\frac{\theta\cdot\lambda}{2}\cdot\alpha\cdot\sigma+\int_{0}^{\theta\cdot\lambda}\frac{x^2/2+\theta\cdot\lambda\cdot x}{2\cdot\theta\cdot\lambda}dF(x)+\int_{\theta\cdot\lambda}^{1}\frac{x}{2}dF(x)}}{\gamma\cdot\max\lrC{\sigma,\ 1}+0.5}\\
&\leq \frac{\gamma\cdot\lrA{\alpha\cdot\sigma+\frac{2}{\theta\cdot\lambda}\cdot\int_{0}^{\theta\cdot\lambda}xdF(x)+\frac{1}{\theta\cdot\lambda}\cdot\int_{\theta\cdot\lambda}^{1}xdF(x)}+\lrA{\frac{\theta\cdot\lambda}{2}\cdot\alpha\cdot\sigma+\frac{3}{4}\cdot\int_{0}^{\theta\cdot\lambda}xdF(x)+\frac{1}{2}\cdot\int_{\theta\cdot\lambda}^{1}xdF(x)}}{\gamma\cdot\max\lrC{\sigma,\ 1}+0.5}\\
&= \frac{\gamma\cdot\lrA{\alpha\cdot\sigma+\frac{1}{\theta\cdot\lambda}\int_{0}^{\theta\cdot\lambda}xdF(x)+\frac{1}{\theta\cdot\lambda}}+\lrA{\frac{\theta\cdot\lambda}{2}\cdot\alpha\cdot\sigma+\frac{1}{4}\cdot\int_{0}^{\theta\cdot\lambda}xdF(x)+\frac{1}{2}}}{\gamma\cdot\max\lrC{\sigma,\ 1}+0.5}\\
&\leq \frac{\gamma\cdot\lrA{\alpha\cdot\sigma+\frac{1}{\theta\cdot\lambda}\cdot\frac{\lambda-\mu}{\lambda-\theta\cdot\lambda}+\frac{1}{\theta\cdot\lambda}}+\lrA{\frac{\theta\cdot\lambda}{2}\cdot\alpha\cdot\sigma+\frac{1}{4}\cdot\frac{\lambda-\mu}{\lambda-\theta\cdot\lambda}+\frac{1}{2}}}{\gamma\cdot\max\lrC{\sigma,\ 1}+0.5}\\
&= \frac{\gamma\cdot\lrA{\alpha\cdot\sigma+\frac{1}{\theta\cdot\lambda}\cdot\frac{2\lambda-\mu-\theta\cdot\lambda}{\lambda-\theta\cdot\lambda}}+\lrA{\frac{\theta\cdot\lambda}{2}\cdot\alpha\cdot\sigma+\frac{1}{4}\cdot\frac{3\lambda-\mu-2\theta\cdot\lambda}{\lambda-\theta\cdot\lambda}}}{\gamma\cdot\max\lrC{\sigma,\ 1}+0.5},
\end{align*}
where the first inequality follows from $\int_{0}^{\theta\cdot\lambda}x^2dF(x)\leq \int_{0}^{\theta\cdot\lambda}\theta\cdot\lambda\cdot xdF(x)$, the second inequality from (\ref{eq1}), and the first equality from $\int^1_0xdF(x)=1$ by (\ref{int}).

Recall that in $APPROX.4(\lambda,\theta,p)$ we call $ALG.4(\lambda,0)$ (resp., $ALG.4(\theta\cdot\lambda,0)$) with a probability of $p$ (resp., $1-p$). Hence, to erase the items related to $\mu$ in the numerators of the approximation ratios of $ALG.4(\lambda,0)$ and $ALG.4(\theta\cdot\lambda,0)$, we need to set $p$ such that 
\[
p\cdot\frac{1}{4}\cdot\frac{1}{\lambda}+(1-p)\cdot\lrA{\gamma\cdot\frac{1}{\theta\cdot\lambda}\cdot\frac{-1}{\lambda-\theta\cdot\lambda}+\frac{1}{4}\cdot\frac{-1}{\lambda-\theta\cdot\lambda}}=0.
\]
Then, we can obtain $p=\frac{\frac{1}{4(\lambda-\theta\cdot\lambda)}+\frac{\gamma}{\theta\cdot\lambda(\lambda-\theta\cdot\lambda)}}{\frac{1}{4\lambda}+\frac{1}{4(\lambda-\theta\cdot\lambda)}+\frac{\gamma}{\theta\cdot\lambda(\lambda-\theta\cdot\lambda)}}$, and the approximation ratio is $\max_{\sigma\geq 0} R(\sigma)$, where
\begin{equation}\label{ratio2}
\begin{split}
&R(\sigma)\coloneqq\frac{\gamma\cdot\lrA{\alpha\cdot\sigma+p\cdot\frac{2}{\lambda}+(1-p)\cdot\frac{1}{\theta\cdot\lambda}\cdot\frac{2\lambda-\theta\cdot\lambda}{\lambda-\theta\cdot\lambda}}+\lrA{\frac{p\cdot\lambda+(1-p)\cdot\theta\cdot\lambda}{2}\cdot\alpha\cdot\sigma+p\cdot\frac{1}{2}+(1-p)\cdot\frac{1}{4}\cdot\frac{3\lambda-2\theta\cdot\lambda}{\lambda-\theta\cdot\lambda}}}{\gamma\cdot\max\lrC{\sigma,\ 1}+0.5}\\
&=\frac{\lrA{\gamma\cdot\alpha\cdot\sigma+\frac{p\cdot\lambda+(1-p)\cdot\theta\cdot\lambda}{2}\cdot\alpha\cdot\sigma}+\lrA{\gamma\cdot\lrA{p\cdot\frac{2}{\lambda}+(1-p)\cdot\frac{1}{\theta\cdot\lambda}\cdot\frac{2\lambda-\theta\cdot\lambda}{\lambda-\theta\cdot\lambda}}+\lrA{p\cdot\frac{1}{2}+(1-p)\cdot\frac{1}{4}\cdot\frac{3\lambda-2\theta\cdot\lambda}{\lambda-\theta\cdot\lambda}}}}{\gamma\cdot\max\lrC{\sigma,\ 1}+0.5}
\end{split}
\end{equation}

When $\alpha=1.5$, $\gamma\in(0,0.428]$, and $\lambda=\min\{3.5\gamma/\alpha,1\}$, we have the following result.
\begin{itemize}
    \item When $\gamma\in(0,0.428]$, setting $\theta=0.5043$, we have $\max_{\sigma\geq 0} R(\sigma)<3.163$.
\end{itemize}

\begin{claim}
When $\gamma\in(0,0.428]$, setting $\theta=0.5043$, we have $\max_{\sigma\geq 0} R(\sigma)\leq 3.163$.
\end{claim}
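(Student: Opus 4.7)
The plan is to follow the same reduction used in the proof of Theorem~\ref{t2} and then verify two algebraic inequalities on $\gamma\in(0,0.428]$.

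First I would observe that, by inspection of (\ref{ratio2}), $R(\sigma)$ has the form $\frac{A\sigma+B}{\gamma\cdot\max\{\sigma,1\}+1/2}$ where $A,B>0$ depend only on $(\lambda,\theta,p,\gamma)$ but not on $\sigma$. For $\sigma\in[0,1]$ the denominator is the constant $\gamma+1/2$, so $R$ is linear and increasing in $\sigma$, hence its maximum on $[0,1]$ occurs at $\sigma=1$. For $\sigma\geq 1$, $R(\sigma)=\frac{A\sigma+B}{\gamma\sigma+1/2}$ is a M\"obius function of $\sigma$ with positive coefficients, so it is monotone and its maximum on $[1,\infty)$ is attained at an endpoint. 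Therefore $\max_{\sigma\geq 0}R(\sigma)=\max\{R(1),R(\infty)\}$, exactly as in the proof of Theorem~\ref{t2}.

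Next I would compute the two boundary values explicitly under the stipulated parameters: $\alpha=1.5$, $\lambda=\min\{3.5\gamma/\alpha,1\}=7\gamma/3$ (valid on $(0,3/7]$, which contains $(0,0.428]$), $\theta=0.5043$, and the prescribed $p=\frac{\frac{1}{4(\lambda-\theta\lambda)}+\frac{\gamma}{\theta\lambda(\lambda-\theta\lambda)}}{\frac{1}{4\lambda}+\frac{1}{4(\lambda-\theta\lambda)}+\frac{\gamma}{\theta\lambda(\lambda-\theta\lambda)}}$. Substituting $\lambda=7\gamma/3$ gives a closed-form rational expression for $p$ depending only on $\gamma$. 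The $R(\infty)$ expression reduces to $R(\infty)=\alpha+\frac{\alpha(p\lambda+(1-p)\theta\lambda)}{2\gamma}=1.5+1.75\bigl(p+(1-p)\theta\bigr)$, a simple monotone function of $p$, and thus of $\gamma$. The $R(1)$ expression requires substituting $p$, $\lambda$, and $\theta$ into the full numerator of (\ref{ratio2}) and dividing by $\gamma+1/2$; after clearing fractions this becomes a rational function $R(1)=\frac{N(\gamma)}{D(\gamma)}$ in the single variable $\gamma\in(0,3/7]$ with polynomial $N,D$ of low degree.

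Finally I would verify the two inequalities $R(1)\leq 3.163$ and $R(\infty)\leq 3.163$ over the interval $(0,0.428]$. The $R(\infty)$ bound follows immediately once one checks that $p\leq(3.163-1.5-1.75\cdot 0.5043)/(1.75\cdot(1-0.5043))\approx 0.902$ on the interval, which is a monotone one-variable check. The $R(1)$ bound reduces to $3.163\cdot D(\gamma)-N(\gamma)\geq 0$ on $(0,0.428]$, a polynomial inequality that can be certified by computing the polynomial, then either bounding each coefficient after a Taylor expansion around a worst-case $\gamma$, or by a direct calculus argument: differentiate with respect to $\gamma$, locate the (at most few) critical points, and check the sign at those points and at the endpoints $\gamma\to 0^+$ and $\gamma=0.428$. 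As in the proof of Theorem~\ref{t2}, the choice $\theta=0.5043$ was found by numerical optimization, so one expects the worst case to occur at $\gamma=0.428$ (the right endpoint, where $\lambda$ is about to change form); the calculation should confirm this.

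The main obstacle is purely algebraic: the expression for $R(1)$ after substituting $p$ is cumbersome, and a clean symbolic bound on $(0,0.428]$ requires either careful interval arithmetic or a computer-algebra verification. Mirroring the style of the earlier explicit verifications in this paper (e.g.\ the $\theta=0.5$ claim inside Theorem~\ref{t2}), I would present the final step as a direct calculation, relying on the Python computation already referenced in the paper's GitHub repository for the numerical certification, and providing the two closed-form identities for $R(1)$ and $R(\infty)$ as $\gamma$-rational functions so that the bound $\leq 3.163$ is checkable by inspection.
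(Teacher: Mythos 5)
Your proposal is correct and would go through, but it is organized differently from the paper's argument, and it misses the structural observation that makes the paper's verification trivial. Because $\lambda=3.5\gamma/\alpha$ on the whole interval $(0,0.428]$, the ratio $\gamma/\lambda=\alpha/3.5$ is a \emph{constant}, so after substitution $p$ does not depend on $\gamma$ at all (it equals $\approx 0.8987$), and likewise the entire non-$\sigma$ part of the numerator of $R(\sigma)$ in (\ref{ratio2}) collapses to a fixed number $\approx 1.5811$, while the coefficient of $\sigma$ equals $\gamma$ times the fixed number $\alpha+1.75\bigl(p+(1-p)\theta\bigr)\approx 3.1622$. The paper therefore never reduces to $\max\{R(1),R(\infty)\}$: it bounds the numerator by $3.163\cdot\gamma\cdot\sigma+3.163/2$ uniformly and concludes $R(\sigma)\leq 3.163\cdot\frac{\gamma\sigma+0.5}{\gamma\max\{\sigma,1\}+0.5}\leq 3.163$ for every $\sigma\geq0$ simultaneously, with only two scalar comparisons to check. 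Your route --- endpoint reduction, then $R(\infty)=1.5+1.75(p+(1-p)\theta)$ and a separate treatment of $R(1)$ as a rational function of $\gamma$ --- is equivalent in substance (your two endpoint checks follow from the paper's two constant checks), but the calculus/polynomial-inequality machinery you anticipate for $R(1)$ is unnecessary: $R(1)=\frac{A'\gamma+C}{\gamma+0.5}$ with $A'\leq 3.163$ and $C\leq 3.163/2$ constants, so there is no worst-case $\gamma$ to locate. One numerical caution: your threshold for the $R(\infty)$ check is $p\leq 0.8997$ (not $0.902$), and the actual $p\approx 0.8987$, so the margin is only about $0.001$; similarly $C\approx 1.58108$ against $3.163/2=1.5815$. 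Both checks pass, but they are tight enough that the "checkable by inspection" framing needs the digits carried carefully.
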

\begin{proof}[Claim proof]
Note that $\lambda=\min\{1, 3.5\gamma/\alpha\}=3.5\gamma/\alpha$ and $\alpha=1.5$. Setting $\theta=0.5043$, we can obtain $\gamma\cdot\alpha\cdot\sigma+\frac{p\cdot\lambda+(1-p)\cdot\theta\cdot\lambda}{2}\cdot\alpha\cdot\sigma\leq 3.163\cdot\gamma\cdot\sigma$ and $\gamma\cdot\lra{p\cdot\frac{2}{\lambda}+(1-p)\cdot\frac{1}{\theta\cdot\lambda}\cdot\frac{2\lambda-\theta\cdot\lambda}{\lambda-\theta\cdot\lambda}}+\lra{p\cdot\frac{1}{2}+(1-p)\cdot\frac{1}{4}\cdot\frac{3\lambda-2\theta\cdot\lambda}{\lambda-\theta\cdot\lambda}}\leq 1.5811<3.163/2$. Hence, by (\ref{ratio2}), we have 
\begin{align*}
&R(\sigma)\leq \frac{3.163\cdot\gamma\cdot\sigma+3.163/2}{\gamma\cdot\max\lrC{\sigma,\ 1}+0.5}\leq 3.163\cdot\frac{\gamma\cdot\sigma+0.5}{\gamma\cdot\max\lrC{\sigma,\ 1}+0.5}\leq 3.163,
\end{align*}
which finishes the proof.
\end{proof}

Note that $\frac{1}{3.5}\approx 0.2857$. We can obtain the following result for any $1\leq \alpha\leq 1.5$.
\begin{itemize}
    \item When $\gamma\in(0,0.285]$, setting $\theta=0.5$, we have $\max_{\sigma\geq 0} R(\sigma)\leq \alpha+1+\ln2-0.029$ for any $1\leq \alpha \leq 1.5$.
\end{itemize}
\begin{claim}
When $\gamma\in(0,0.285]$, setting $\theta=0.5$, we have $\max_{\sigma\geq 0} R(\sigma)\leq \alpha+1+\ln2-0.029$ for any $1\leq \alpha \leq 1.5$.
\end{claim}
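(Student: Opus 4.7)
The plan is to follow the template of the two preceding claims in this proof: specialize~(\ref{ratio2}) and the formula for $p$ at $\theta=1/2$, exploit the identity $\lambda=3.5\gamma/\alpha$ (valid throughout, since $3.5\gamma/\alpha\leq 3.5\cdot 0.285<1$ on the stated range, so the clipping to $1$ is inactive), and reduce to a one-variable inequality in $\alpha$. At $\theta=1/2$ we have $\theta\lambda=\lambda-\theta\lambda=\lambda/2$, so $(2\lambda-\theta\lambda)/(\lambda-\theta\lambda)=3$ and $(3\lambda-2\theta\lambda)/(\lambda-\theta\lambda)=4$. After clearing common factors, the formula for $p$ simplifies to $p=(2\lambda+16\gamma)/(3\lambda+16\gamma)$, which on inserting $\lambda=3.5\gamma/\alpha$ collapses to $p(\alpha)=(7+16\alpha)/(10.5+16\alpha)$, depending only on $\alpha$.

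With these substitutions the expression in~(\ref{ratio2}) reduces to
\[
R(\sigma)=\frac{U(\alpha)\,\gamma\sigma+V(\alpha)}{\gamma\max\{\sigma,1\}+0.5},
\]
where $U(\alpha)\coloneqq\alpha+\tfrac{3.5\,(1+p(\alpha))}{4}$ and $V(\alpha)\coloneqq\tfrac{2\alpha(3-2p(\alpha))}{3.5}+1-\tfrac{p(\alpha)}{2}$. As in the proofs of Theorems~\ref{unttt0} and~\ref{t2}, the supremum over $\sigma\geq 0$ of such a ratio is attained at $\sigma=1$ or $\sigma\to\infty$. Here $R(\infty)=U(\alpha)$, while $R(1)=(U(\alpha)\gamma+V(\alpha))/(\gamma+0.5)$ is a monotone M\"obius function of $\gamma$ with limits $U(\alpha)$ as $\gamma\to\infty$ and $2V(\alpha)$ as $\gamma\to 0^{+}$, so for every $\gamma>0$ it lies between these two values. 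Consequently
\[
\sup_{\gamma\in(0,0.285]}\max_{\sigma\geq 0}R(\sigma)\ \leq\ \max\bigl\{U(\alpha),\ 2V(\alpha)\bigr\},
\]
and it suffices to prove the one-variable inequality $\max\{U(\alpha),\,2V(\alpha)\}\leq \alpha+1+\ln 2-0.029$ for every $\alpha\in[1,1.5]$.

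The main obstacle is that this last inequality is genuinely tight near the right endpoint. Direct evaluation at $\alpha=1.5$ yields $p=62/69$, $U(1.5)\approx 3.1612$, and $2V(1.5)\approx 3.1636$, while the target $\alpha+1+\ln 2-0.029\approx 3.1641$ leaves a gap of only about $5\times 10^{-4}$. I would handle this by writing $U(\alpha)$ and $2V(\alpha)$ as explicit rational functions of $\alpha$ after clearing the denominator $10.5+16\alpha$ of $p(\alpha)$, defining $g(\alpha)\coloneqq \alpha+1+\ln 2-0.029-2V(\alpha)$ and $h(\alpha)\coloneqq \alpha+1+\ln 2-0.029-U(\alpha)$, and verifying $g,h\geq 0$ on $[1,1.5]$ by a standard monotonicity/sign analysis of these smooth functions at the endpoints together with bounds on $g',h'$ that preclude interior zeros. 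Since the binding inequality is $2V(\alpha)\leq\alpha+1+\ln 2-0.029$ and the slack is small, an interval-arithmetic certificate (or the numerical script referenced in Footnote~1) is the cleanest way to make the verification rigorous.
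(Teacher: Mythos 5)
Your proposal is correct and follows essentially the same route as the paper: after specializing $\theta=1/2$ and $\lambda=3.5\gamma/\alpha$ (your $p(\alpha)=(16\alpha+7)/(16\alpha+10.5)$ is the paper's $(8\alpha+3.5)/(8\alpha+5.25)$), it reduces to exactly the same two one-variable inequalities $U(\alpha)\le \alpha+1+\ln 2-0.029$ and $2V(\alpha)\le \alpha+1+\ln 2-0.029$ that the paper verifies, the only difference being that you reach them via $\max\{R(1),R(\infty)\}$ and a monotonicity-in-$\gamma$ argument while the paper bounds the $\sigma$-dependent and constant parts of the numerator separately against the corresponding parts of the denominator. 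Your numerical check of the tightness at $\alpha=1.5$ matches the paper's values, and like the paper you leave the final verification of the two rational inequalities on $[1,1.5]$ to a routine calculation.
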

\begin{proof}[Claim proof]
We set $\lambda=\min\{1, 3.5\gamma/\alpha\}=3.5\gamma/\alpha$ and $\theta=1/2$. We can obtain that
\begin{equation}\label{thep}
p=\frac{\frac{1}{4(\lambda-\theta\cdot\lambda)}+\frac{\gamma}{\theta\cdot\lambda(\lambda-\theta\cdot\lambda)}}{\frac{1}{4\lambda}+\frac{1}{4(\lambda-\theta\cdot\lambda)}+\frac{\gamma}{\theta\cdot\lambda(\lambda-\theta\cdot\lambda)}}=\frac{\frac{1}{4(1-\theta)}+\frac{\alpha}{3.5}\cdot \frac{1}{\theta(1-\theta)}}{\frac{1}{4}+\frac{1}{4(1-\theta)}+\frac{\alpha}{3.5}\cdot\frac{1}{\theta(1-\theta)}}=\frac{8\alpha+3.5}{8\alpha+5.25},
\end{equation}
where the first equality follows from $\lambda=\min\{1, 3.5\gamma/\alpha\}=3.5\gamma/\alpha$ and the second from $\theta=1/2$.

One one hand, we have
\begin{equation}\label{ratio2-1}
\begin{split}
\gamma\cdot\alpha\cdot\sigma+\frac{p\cdot\lambda+(1-p)\cdot\theta\cdot\lambda}{2}\cdot\alpha\cdot\sigma &=\gamma\cdot\alpha\cdot\sigma\cdot\lrA{1+\frac{p+(1-p)\cdot\theta}{2}\cdot\frac{3.5}{\alpha}}\\
&=\gamma\cdot\sigma\cdot\lrA{\alpha+\frac{1+\frac{8\alpha+3.5}{8\alpha+5.25}}{2}\cdot\frac{3.5}{2}}\\
&<\gamma\cdot\sigma\cdot\lrA{\alpha+1+\ln2-0.03}\\
&\leq\gamma\cdot\max\{\sigma,\ 1\}\cdot\lrA{\alpha+1+\ln2-0.029},
\end{split}
\end{equation}
where the first equality follows from $\lambda=3.5\gamma/\alpha$, the second equality from (\ref{thep}) and $\theta=0.5$,
the first inequality from $\alpha+\frac{1+\frac{8\alpha+3.5}{8\alpha+5.25}}{2}\cdot\frac{3.5}{2}\leq \alpha+1+\ln2-0.029$ for any $1\leq \alpha\leq 1.5$, and the second inequality from $\sigma\leq\max\{\sigma,\ 1\}$.

On the other hand, by (\ref{thep}), we have
\begin{equation}\label{ratio2-2}
\begin{split}
&\gamma\cdot\lrA{p\cdot\frac{2}{\lambda}+(1-p)\cdot\frac{1}{\theta\cdot\lambda}\cdot\frac{2\lambda-\theta\cdot\lambda}{\lambda-\theta\cdot\lambda}}+\lrA{p\cdot\frac{1}{2}+(1-p)\cdot\frac{1}{4}\cdot\frac{3\lambda-2\theta\cdot\lambda}{\lambda-\theta\cdot\lambda}}\\
&=\frac{\alpha}{3.5}\cdot\lrA{2p+(1-p)\cdot\frac{1}{\theta}\cdot\frac{2-\theta}{1-\theta}}+p\cdot\frac{1}{2}+(1-p)\cdot\frac{1}{4}\cdot\frac{3-2\theta}{1-\theta}\\
&=\frac{\alpha}{3.5}\cdot\lrA{2+\frac{7}{8\alpha+5.25}}+\frac{4\alpha+3.5}{8\alpha+5.25}\\
&<\frac{1}{2}\cdot\lrA{\alpha+1+\ln2 -0.029},
\end{split}
\end{equation}
where the first equality follows from $\lambda=3.5\gamma/\alpha$, the second equality from (\ref{thep}) and $\theta=0.5$, and
the inequality from $\frac{\alpha}{3.5}\cdot\lrA{2+\frac{7}{8\alpha+5.25}}+\frac{4\alpha+3.5}{8\alpha+5.25}\leq \frac{1}{2}\cdot\lrA{\alpha+1+\ln2 -0.029}$ for any $1\leq \alpha\leq 1.5$.

Therefore, by (\ref{ratio2}), (\ref{ratio2-1}), and (\ref{ratio2-2}), we have 
\begin{align*}
&R(\sigma)<\frac{\gamma\cdot\max\{\sigma,\ 1\}\cdot\lrA{\alpha+1+\ln2-0.029}+\frac{1}{2}\cdot\lrA{\alpha+1+\ln2 -0.029}}{\gamma\cdot\max\lrC{\sigma,\ 1}+0.5}=\alpha+1+\ln2 -0.029,
\end{align*}
which finishes the proof.
\end{proof}

This finishes the proof.
\end{proof}

\subsection{Proof of Lemma~\ref{sl2}}
\begingroup
\def\thelemma{\ref{sl2}}
\begin{lemma}
Conditioned on $\chi=d$, when serving each customer $v_i$ in $ALG.S(\lambda)$, the expected cumulative cost of the vehicle due to the possible additional visit(s) to $v_0$ is $a\cdot 2\cdot \frac{d_i}{\lambda}\cdot l_i+b\cdot d_i\cdot l_i$.
\end{lemma}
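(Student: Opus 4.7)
The plan is to identify the random number of additional visits to $v_0$ as a simple function of $L_{i-1}$ and $d_i$, compute its expectation using the fact $L_{i-1}\sim U[0,\lambda)$ from Lemma~\ref{sl0}, and then multiply by the deterministic per-visit cumulative cost. The key observation is that every additional visit has the same structure---an empty trip $v_i\to v_0$ followed by a full $\lambda$-load trip $v_0\to v_i$---so the per-visit cost is independent of which visit it is.

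First I would read off from Line~\ref{splitaddition} of $ALG.S(\lambda)$ that the vehicle keeps returning to $v_0$ to reload $\lambda$ units until $v_i$'s remaining demand is fully served. Letting $k$ denote the number of additional visits incurred at $v_i$, this gives $k=0$ when $d_i\leq L_{i-1}$ and $k=\Ceil{(d_i-L_{i-1})/\lambda}$ otherwise. For each such additional visit, the vehicle travels $v_i\to v_0$ carrying $0$ units (having just emptied its load at $v_i$) and $v_0\to v_i$ carrying exactly $\lambda$ units (having just refilled), so each additional visit contributes $2a\cdot l_i+b\cdot\lambda\cdot l_i$ to the cumulative cost.

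Next I compute $\EE{k\mid\chi=d}$ by writing $d_i=j\lambda+r$ with $j=\Floor{d_i/\lambda}$ and $r\in[0,\lambda)$. A short case split on whether $L_{i-1}$ lies in $[0,r)$ or $[r,\lambda)$ yields $k=j+1$ on the first subinterval and $k=\max\lrC{0,j}$ on the second. Integrating against the uniform density $1/\lambda$ over $[0,\lambda)$ gives $\EE{k\mid\chi=d}=\frac{r(j+1)+(\lambda-r)\cdot j}{\lambda}=d_i/\lambda$ when $j\geq 1$, and direct integration also yields $d_i/\lambda$ when $j=0$ (so $r=d_i$). Multiplying by the per-visit cost produces exactly $a\cdot\frac{2d_i}{\lambda}\cdot l_i+b\cdot d_i\cdot l_i$, as claimed.

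The only mild obstacle is handling $d_i>\lambda$, where the vehicle may reload multiple times; but the ceiling function telescopes cleanly against the uniformly distributed $L_{i-1}$, producing the clean identity $\EE{k\mid\chi=d}=d_i/\lambda$ uniformly for every $d_i\in[0,1]$, which is why the cargo-cost contribution simplifies to the $\lambda$-free quantity $b\cdot d_i\cdot l_i$.
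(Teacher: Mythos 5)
Your proposal is correct, and for demands $d_i\leq\lambda$ it coincides with the paper's argument: one additional visit occurs with probability $d_i/\lambda$ under $L_{i-1}\sim U[0,\lambda)$, and each such visit costs exactly $2a\cdot l_i+b\cdot\lambda\cdot l_i$ because the vehicle travels to the depot empty and returns with a full $\lambda$-load. Where you diverge is the case $d_i>\lambda$. The paper handles it by implicitly splitting $v_i$ into a cluster of co-located virtual customers each of demand at most $\lambda$, arguing that $ALG.S(\lambda)$ behaves identically on the modified instance, and then summing the single-visit bound over the cluster. You instead compute the number of reloads directly as $k=\Ceil{(d_i-L_{i-1})/\lambda}$, write $d_i=j\lambda+r$, and show by the case split on whether $L_{i-1}<r$ that $\EE{k\mid\chi=d}=j+r/\lambda=d_i/\lambda$; multiplying by the fixed per-visit cost (legitimate here since every reload trip has the identical empty-out/full-back structure) gives the claim. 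Your route is more self-contained and avoids the equivalence-of-instances argument, which the paper must justify separately; the paper's splitting argument has the advantage of reusing the already-established small-demand case verbatim and of making transparent why the answer is linear in $d_i$. Both are sound, and your telescoping identity $\EE{\Ceil{(d_i-L)/\lambda}}=d_i/\lambda$ for $L\sim U[0,\lambda)$ is checked correctly in both the $j\geq 1$ and $j=0$ subcases.
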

\endgroup
\begin{proof}
We first consider $d_i\leq\lambda$. In this case, the vehicle can satisfy $v_i$ using at most one additional visit to $v_0$.

By Lemma~\ref{sl0}, the vehicle carries $L_{i-1}$ units of goods during the vehicle's travel from $v_{i-1}$ to $v_i$, where $L_{i-1}\sim U[0,\lambda)$. Hence, the vehicle incurs one additional visit to $v_0$ with a probability of $\PP{L_{i-1}=x<d_i}=\int_{0}^{d_i}\frac{1}{\lambda}dx=\frac{d_i}{\lambda}$.

If the vehicle incurs an additional visit to $v_0$, by $ALG.S(\lambda)$, the vehicle will carry $0$ units of goods from $v_i$ to $v_0$ and $\lambda$ units of goods from $v_0$ to $v_i$. So, the cumulative cost of this additional visit will be $a\cdot 2\cdot l_i + b\cdot \lambda\cdot l_i$. Since $L_{i-1}\sim U[0,\lambda)$ and $d_i\leq \lambda$, the expected cumulative cost of this additional visit conditioned on the demand realization is 
$
\int_{0}^{d_i}\frac{a\cdot 2\cdot l_i + b\cdot \lambda\cdot l_i}{\lambda}dx=a\cdot 2\cdot \frac{d_i}{\lambda}\cdot l_i+b\cdot d_i\cdot l_i.
$

Next, we consider $d_i>\lambda$.

In this case, we may implicitly split each customer $v_i$ with $d_i>\lambda$ into a set of customers at the same place, denoted as $V_i$, with each having a demand of at most $\lambda$ and all having a total demand of $d_i$. Define $l_v\coloneqq w(v_0,v)$ for each $v\in V_i$. Hence, $l_i=l_v$ for each $v\in V_i$ and $d_i\cdot l_i=\sum_{v\in V_i} d_v\cdot l_v$. 
Denote the original instance and the new instance by $I$ and $I'$, respectively. The TSP tour in $I'$, denoted as $v_0p_1\dots p_nv_0$ where $p_i$ can be any permutation of customers in $V_i$, has the same cost as the TSP tour $T^*=v_0v_1\dots v_nv_0$ in $I$.
It can be verified that based on the greedy strategy, $ALG.S(\lambda)$ generates two equivalent solutions for $I$ and $I'$ with the same cumulative cost. 
Moreover, the expected cumulative cost of the possible additional visits when visiting $v_i$ in $I$ equals to the total expected cumulative cost of the possible additional visits when visiting all customers in $V_i$ in $I'$.

Since the demand of each customer in $I'$ is at most $\lambda$, by the previous analysis of the case where $d_i\leq\lambda$, the expected cumulative cost of the possible additional visits to $v_i$ conditioned on the demand realization is 
\[
\sum_{v\in V_i}\lrA{a\cdot 2\cdot \frac{d_v}{\lambda}\cdot l_v+b\cdot d_v\cdot l_v}=a\cdot 2\cdot\sum_{v\in V_i} \frac{d_v}{\lambda}\cdot l_i+b\cdot \lambda\cdot\sum_{v\in V_i} d_v\cdot l_i=a\cdot 2\cdot \frac{d_i}{\lambda}\cdot l_i+b\cdot d_i\cdot l_i,
\]
which finishes the proof.
\end{proof}

\subsection{Proof of Lemma~\ref{t9}}
\begingroup
\def\thelemma{\ref{t9}}
\begin{lemma}
For splittable Cu-VRPSD with any $\lambda\in(0,1]$, conditioned on $\chi=d$, $ALG.S(\lambda)$ generates a solution $\T$ with an expected cumulative cost of
\[
\frac{\gamma\cdot\lrA{\alpha\cdot\sigma+\frac{1}{\lambda}}+\frac{\lambda}{2}\cdot\lrA{\alpha\cdot\sigma+\frac{1}{\lambda}}}{\gamma\cdot\max\lrC{\sigma,\ 1}+0.5}\cdot LB.
\]
\end{lemma}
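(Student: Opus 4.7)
The plan is to mirror the proof of Lemma~\ref{t1}, but with the simpler splittable bookkeeping already packaged into Lemmas~\ref{sl1} and~\ref{sl2}. First, I would decompose the total cumulative cost of $\T$ into two pieces: (i) the cost accumulated by the vehicle while it traverses the edges of the $\alpha$-approximate TSP tour $T^*$, and (ii) the cost accumulated by the ``additional visits'' to $v_0$ caused by each customer $v_i$ whose remaining demand exceeds the current load. Since in the splittable case every detour to the depot is cleanly attributable to exactly one customer, these two pieces partition $E(\T)$.

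For piece (i), I would sum Lemma~\ref{sl1} over the edges of $T^*$ to obtain $a\cdot w(T^*)+b\cdot\tfrac{\lambda}{2}\cdot w(T^*)\le a\cdot\alpha\cdot\tau+b\cdot\tfrac{\lambda}{2}\cdot\alpha\cdot\tau$, using $w(T^*)\le\alpha\tau$. For piece (ii), I would sum Lemma~\ref{sl2} over all customers to obtain
\[
\sum_{i\in[n]}\lrA{a\cdot\frac{2d_i}{\lambda}\cdot l_i+b\cdot d_i\cdot l_i}
= a\cdot\frac{\eta}{\lambda}+b\cdot\frac{\eta}{2},
\]
using the definition $\eta=\sum_i 2d_i l_i$. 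Adding the two pieces yields the conditional expected cumulative cost
\[
a\cdot\lrA{\alpha\tau+\frac{\eta}{\lambda}}+b\cdot\lrA{\frac{\lambda}{2}\cdot\alpha\tau+\frac{\eta}{2}}.
\]

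To finish, I would divide numerator and denominator by $b\cdot\eta$ and use $\gamma=a/b$, $\sigma=\tau/\eta$, together with the identity $LB/(b\eta)=\gamma\cdot\max\{\sigma,1\}+0.5$ from Lemma~\ref{lb}. This rewrites the above expression as
\[
\frac{\gamma\cdot\lrA{\alpha\sigma+\tfrac{1}{\lambda}}+\lrA{\tfrac{\lambda}{2}\alpha\sigma+\tfrac{1}{2}}}{\gamma\cdot\max\{\sigma,1\}+0.5}\cdot LB,
\]
which factors as $\frac{\gamma\cdot(\alpha\sigma+1/\lambda)+\tfrac{\lambda}{2}\cdot(\alpha\sigma+1/\lambda)}{\gamma\cdot\max\{\sigma,1\}+0.5}\cdot LB$ since $\tfrac{\lambda}{2}\cdot\tfrac{1}{\lambda}=\tfrac{1}{2}$, matching the claimed bound.

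There is no real obstacle here beyond careful bookkeeping: all the randomness (the uniform $L_0$) has already been integrated out inside Lemmas~\ref{sl1} and~\ref{sl2}, and the splittable case avoids the demand-range case analysis of Lemma~\ref{l2}. The only thing to double-check is that every trip the vehicle makes in Algorithm~\ref{alg5} is counted exactly once by the partition above, in particular that the final return from $v_n$ to $v_0$ is already included in piece (i) via the edge $v_nv_0$ of $T^*$, and that the reload trips in Line~\ref{splitaddition} are exactly what Lemma~\ref{sl2} accounts for; both follow directly from the algorithm description.
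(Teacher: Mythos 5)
Your proposal is correct and follows essentially the same route as the paper's own proof: sum Lemma~\ref{sl1} over the edges of $T^*$, sum Lemma~\ref{sl2} over the customers, and normalize by $b\cdot\eta$ using Lemma~\ref{lb}; the only cosmetic difference is that you write the detour contribution directly in terms of $\eta$ rather than via the $\int dF$ notation. Note that, as in the paper, the stated ``cost of'' is really an upper bound because of the step $w(T^*)\le\alpha\tau$.
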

\endgroup
\begin{proof}
By Lemma~\ref{sl1}, the expected cumulative cost of the vehicle for traveling the edges on the TSP tour $T^*$ is 
$
\sum_{e\in E(T^*)}\lra{a\cdot w(e)+b\cdot\frac{\lambda}{2}\cdot w(e)}=a\cdot w(T^*)+b\cdot\frac{\lambda}{2}\cdot w(T^*)\leq a\cdot\alpha\cdot\tau+b\cdot\frac{\lambda}{2}\cdot\alpha\cdot\tau,
$
where the inequality follows from $w(T^*)\leq \alpha\cdot\tau$ since $T^*$ is an $\alpha$-approximate TSP tour.

By Lemma~\ref{sl2}, when serving customers in $V'$, the expected cumulative cost due to the possible additional visit(s) to $v_0$ is
\[
\sum_{v_i\in V'}\lrA{a\cdot 2\cdot \frac{d_i}{\lambda}\cdot l_i+b\cdot d_i\cdot l_i}=a\cdot\sum_{v_i\in V'} 2\cdot \frac{d_i}{\lambda}\cdot l_i+b\cdot\sum_{v_i\in V'} d_i\cdot l_i.
\]

Therefore, $ALG.S(\lambda)$ generates a solution $\T$ with an expected cumulative cost of
\begin{align*}
&\frac{a\cdot\lrA{\alpha\cdot\tau+\sum_{v_i\in V'} 2\cdot \frac{d_i}{\lambda}\cdot l_i}+b\cdot\lrA{\frac{\lambda}{2}\cdot\alpha\cdot\tau+\sum_{v_i\in V'} d_i\cdot l_i}}{LB}\cdot LB\\
&=\frac{\gamma\cdot\lrA{\alpha\cdot\sigma+\int_{0}^{1} \frac{1}{\lambda}dF(x)}+\lrA{\frac{\lambda}{2}\cdot\alpha\cdot\sigma+\int_{0}^{1} \frac{1}{2}dF(x)}}{\gamma\cdot\max\lrC{\sigma,\ 1}+0.5}\cdot LB\\
&=\frac{\gamma\cdot\lrA{\alpha\cdot\sigma+\frac{1}{\lambda}}+\frac{\lambda}{2}\cdot\lrA{\alpha\cdot\sigma+\frac{1}{\lambda}}}{\gamma\cdot\max\lrC{\sigma,\ 1}+0.5}\cdot LB
\end{align*}
where the equality follows from dividing both the numerator and denominator by $b\cdot\eta$, using $\gamma=a/b$, $\sigma=\tau/\eta$, and $LB=a\cdot\max\{\tau, \eta\}+b\cdot\frac{1}{2}\cdot\eta$ by Lemma~\ref{lb}, and equitation (\ref{int}).
\end{proof}

\section{Numerical Evaluation of the Approximation Ratio of APPROX.2}\label{tightness}
Let $N$ be a multiple of 3, saying $N=300$. 
For any $i\in\{0,1,2\}$ and $j\in[N]$, we set 
\begin{equation}\label{N1}
r^i_j=\int_{0}^{\frac{j}{N}}x^idF(x).
\end{equation}

By (\ref{int}), for any $i\in\{1,2\}$ and $j\in[t]$, we have 
\begin{equation}\label{N2}
\frac{j-1}{N}\cdot (r^{i-1}_j-r^{i-1}_{j-1})\leq (r^i_j-r^i_{j-1})\leq \frac{j}{N}\cdot (r^{i-1}_j-r^{i-1}_{j-1})\quad\quad\mbox{and}\quad\quad r^1_N=1.
\end{equation}

By the proof of Theorem~\ref{t4}, the approximation ratio of $ALG.1(1,1/3)$ is at most
\begin{equation}\label{N3}
\begin{split}
&\frac{\gamma\cdot\lrA{\alpha\cdot\sigma+\int^{\frac{1}{3}}_0\frac{3x}{2}dF(x)+\int^\frac{2}{3}_{\frac{1}{3}}\frac{6x-1}{2}dF(x)+\int^1_\frac{2}{3}\frac{3x+1}{2}dF(x)}}{\gamma\cdot\max\lrC{\sigma,\ 1}+0.5}\\
&\quad+\frac{\lrA{\frac{2}{3}\cdot\alpha\cdot\sigma+\int^{\frac{1}{3}}_0\frac{4x-3x^2}{4}dF(x)+\int^\frac{2}{3}_{\frac{1}{3}}\frac{3x^2+2x}{4}dF(x)+\int^1_\frac{2}{3}\frac{9x^2-6x+4}{6}dF(x)}}{\gamma\cdot\max\lrC{\sigma,\ 1}+0.5}\\
&=\frac{\gamma\cdot\lrA{\alpha\cdot\sigma+\frac{3}{2}r^1_{\frac{1}{3}N}+3(r^1_{\frac{2}{3}N}-r^1_{\frac{1}{3}N})-\frac{1}{2}(r^0_{\frac{2}{3}N}-r^0_{\frac{1}{3}N})+\frac{3}{2}(r^1_{N}-r^1_{\frac{2}{3}N})+\frac{1}{2}(r^0_{N}-r^0_{\frac{2}{3}N})}}{\gamma\cdot\max\lrC{\sigma,\ 1}+0.5}\\
&\quad+\frac{\lrA{\frac{2}{3}\cdot\alpha\cdot\sigma-\frac{3}{4}r^2_{\frac{1}{3}N}+r^1_{\frac{1}{3}N}+\frac{3}{4}(r^2_{\frac{2}{3}N}-r^2_{\frac{1}{3}N})+\frac{1}{2}(r^1_{\frac{2}{3}N}-r^1_{\frac{1}{3}N})+\frac{3}{2}(r^2_{N}-r^2_{\frac{2}{3}N})-(r^1_{N}-r^1_{\frac{2}{3}N})+\frac{2}{3}(r^0_{N}-r^0_{\frac{2}{3}N})}}{\gamma\cdot\max\lrC{\sigma,\ 1}+0.5},
\end{split}
\end{equation}
where the equality follows from (\ref{N1}).

Moreover, the approximation ratio of $ALG.2(1,1/3)$ is at most
\begin{equation}\label{N4}
\begin{split}
&\frac{\gamma\cdot\lrA{\alpha\cdot\sigma+\int^{\frac{1}{3}}_0\frac{3x}{2}dF(x)}+\lrA{\frac{2}{3}\cdot\alpha\cdot\sigma+\int^{\frac{1}{3}}_0\frac{4x-3x^2}{4}dF(x)}}{\gamma\cdot\max\lrC{\sigma,\ 1}+0.5}+\min\lrC{\frac{\int^{1}_\frac{1}{3}\frac{2\gamma+x}{2}dF(x)}{\gamma\cdot\max\lrC{\sigma,\ 1}+0.5},\ 1}\\
&=\frac{\gamma\cdot\lrA{\alpha\cdot\sigma+\frac{3}{2}r^1_{\frac{1}{3}N}}+\lrA{\frac{2}{3}\cdot\alpha\cdot\sigma-\frac{3}{4}r^2_{\frac{1}{3}N}+r^1_{\frac{1}{3}N}}}{\gamma\cdot\max\lrC{\sigma,\ 1}+0.5}+\min\lrC{\frac{
\frac{1}{2}(r^1_{N}-r^1_{\frac{1}{3}N})+\gamma\cdot(r^0_{N}-r^0_{\frac{1}{3}N})}{\gamma\cdot\max\lrC{\sigma,\ 1}+0.5},\ 1},
\end{split}
\end{equation}
where the equality follows from (\ref{N1}).

We have the following two cases.

\textbf{Case~1: $\frac{1}{2}(r^1_{N}-r^1_{\frac{1}{3}N})+\gamma\cdot(r^0_{N}-r^0_{\frac{1}{3}N})\leq \gamma\cdot\max\lrC{\sigma,\ 1}+0.5$.}

Recall that in $APPROX.2$ we call $ALG.1(1,1/3)$ (resp., $ALG.2(1,1/3)$) with a probability of $0.5$ (resp., $0.5$). By (\ref{N1})-(\ref{N4}), the approximation ratio of $APPROX.2$ is at most the value of the following linear program, where $\gamma$, $\sigma$, and $N$ are all fixed parameters.
{\small
\begin{alignat}{2}\label{lp1}
\max\quad & y_\sigma \nonumber \\
\mbox{s.t.}\quad 
& y_\sigma\leq \frac{1}{2}(A+B),\nonumber\\
&A\leq\frac{\gamma\cdot\lrA{\alpha\cdot\sigma+\frac{3}{2}r^1_{\frac{1}{3}N}+3(r^1_{\frac{2}{3}N}-r^1_{\frac{1}{3}N})-\frac{1}{2}(r^0_{\frac{2}{3}N}-r^0_{\frac{1}{3}N})+\frac{3}{2}(r^1_{N}-r^1_{\frac{2}{3}N})+\frac{1}{2}(r^0_{N}-r^0_{\frac{2}{3}N})}}{\gamma\cdot\max\lrC{\sigma,\ 1}+0.5}\nonumber\\
&\quad\quad+\frac{\lrA{\frac{2}{3}\cdot\alpha\cdot\sigma-\frac{3}{4}r^2_{\frac{1}{3}N}+r^1_{\frac{1}{3}N}+\frac{3}{4}(r^2_{\frac{2}{3}N}-r^2_{\frac{1}{3}N})+\frac{1}{2}(r^1_{\frac{2}{3}N}-r^1_{\frac{1}{3}N})+\frac{3}{2}(r^2_{N}-r^2_{\frac{2}{3}N})-(r^1_{N}-r^1_{\frac{2}{3}N})+\frac{2}{3}(r^0_{N}-r^0_{\frac{2}{3}N})}}{\gamma\cdot\max\lrC{\sigma,\ 1}+0.5},\nonumber\\
&B\leq \frac{\gamma\cdot\lrA{\alpha\cdot\sigma+\frac{3}{2}r^1_{\frac{1}{3}N}}+\lrA{\frac{2}{3}\cdot\alpha\cdot\sigma-\frac{3}{4}r^2_{\frac{1}{3}N}+r^1_{\frac{1}{3}N}}+\lrA{\frac{1}{2}(r^1_{N}-r^1_{\frac{1}{3}N})+\gamma\cdot(r^0_{N}-r^0_{\frac{1}{3}N})}}{\gamma\cdot\max\lrC{\sigma,\ 1}+0.5},\nonumber\\
&\frac{1}{2}(r^1_{N}-r^1_{\frac{1}{3}N})+\gamma\cdot(r^0_{N}-r^0_{\frac{1}{3}N})\leq \gamma\cdot\max\lrC{\sigma,\ 1}+0.5,\nonumber\\
&0\leq \frac{j-1}{N}\cdot (r^{i-1}_j-r^{i-1}_{j-1})\leq (r^i_j-r^i_{j-1})\leq \frac{j}{N}\cdot (r^{i-1}_j-r^{i-1}_{j-1}), \forall i\in\{1,2\}, j\in[N], \nonumber\\
&r^1_N=1.\nonumber
\end{alignat}}

\textbf{Case~2: $\frac{1}{2}(r^1_{N}-r^1_{\frac{1}{3}N})+\gamma\cdot(r^0_{N}-r^0_{\frac{1}{3}N})\geq \gamma\cdot\max\lrC{\sigma,\ 1}+0.5$.}

Similarly, by (\ref{N1})-(\ref{N4}), the approximation ratio of $APPROX.2$ is at most the value of the following linear program, where $\gamma$, $\sigma$, and $N$ are all fixed parameters.
{\small
\begin{alignat}{2}
\max\quad & y_\sigma \nonumber \\
\mbox{s.t.}\quad 
& y_\sigma\leq \frac{1}{2}(A+B),\nonumber\\
&A\leq\frac{\gamma\cdot\lrA{\alpha\cdot\sigma+\frac{3}{2}r^1_{\frac{1}{3}N}+3(r^1_{\frac{2}{3}N}-r^1_{\frac{1}{3}N})-\frac{1}{2}(r^0_{\frac{2}{3}N}-r^0_{\frac{1}{3}N})+\frac{3}{2}(r^1_{N}-r^1_{\frac{2}{3}N})+\frac{1}{2}(r^0_{N}-r^0_{\frac{2}{3}N})}}{\gamma\cdot\max\lrC{\sigma,\ 1}+0.5}\nonumber\\
&\quad\quad+\frac{\lrA{\frac{2}{3}\cdot\alpha\cdot\sigma-\frac{3}{4}r^2_{\frac{1}{3}N}+r^1_{\frac{1}{3}N}+\frac{3}{4}(r^2_{\frac{2}{3}N}-r^2_{\frac{1}{3}N})+\frac{1}{2}(r^1_{\frac{2}{3}N}-r^1_{\frac{1}{3}N})+\frac{3}{2}(r^2_{N}-r^2_{\frac{2}{3}N})-(r^1_{N}-r^1_{\frac{2}{3}N})+\frac{2}{3}(r^0_{N}-r^0_{\frac{2}{3}N})}}{\gamma\cdot\max\lrC{\sigma,\ 1}+0.5},\nonumber\\
&B\leq \frac{\gamma\cdot\lrA{\alpha\cdot\sigma+\frac{3}{2}r^1_{\frac{1}{3}N}}+\lrA{\frac{2}{3}\cdot\alpha\cdot\sigma-\frac{3}{4}r^2_{\frac{1}{3}N}+r^1_{\frac{1}{3}N}}}{\gamma\cdot\max\lrC{\sigma,\ 1}+0.5}+1,\nonumber\\
&\frac{1}{2}(r^1_{N}-r^1_{\frac{1}{3}N})+\gamma\cdot(r^0_{N}-r^0_{\frac{1}{3}N})\geq \gamma\cdot\max\lrC{\sigma,\ 1}+0.5,\nonumber\\
&0\leq \frac{j-1}{N}\cdot (r^{i-1}_j-r^{i-1}_{j-1})\leq (r^i_j-r^i_{j-1})\leq \frac{j}{N}\cdot (r^{i-1}_j-r^{i-1}_{j-1}), \forall i\in\{1,2\}, j\in[N], \nonumber\\
&r^1_N=1.\nonumber
\end{alignat}}

Setting $\gamma=1.444$ and $N=300$, the values of the above linear programs under $\sigma\in[1,3]$ can be found in Figure~\ref{fig4}. 

\begin{figure}[ht]
    \centering
    \includegraphics[scale=0.6]{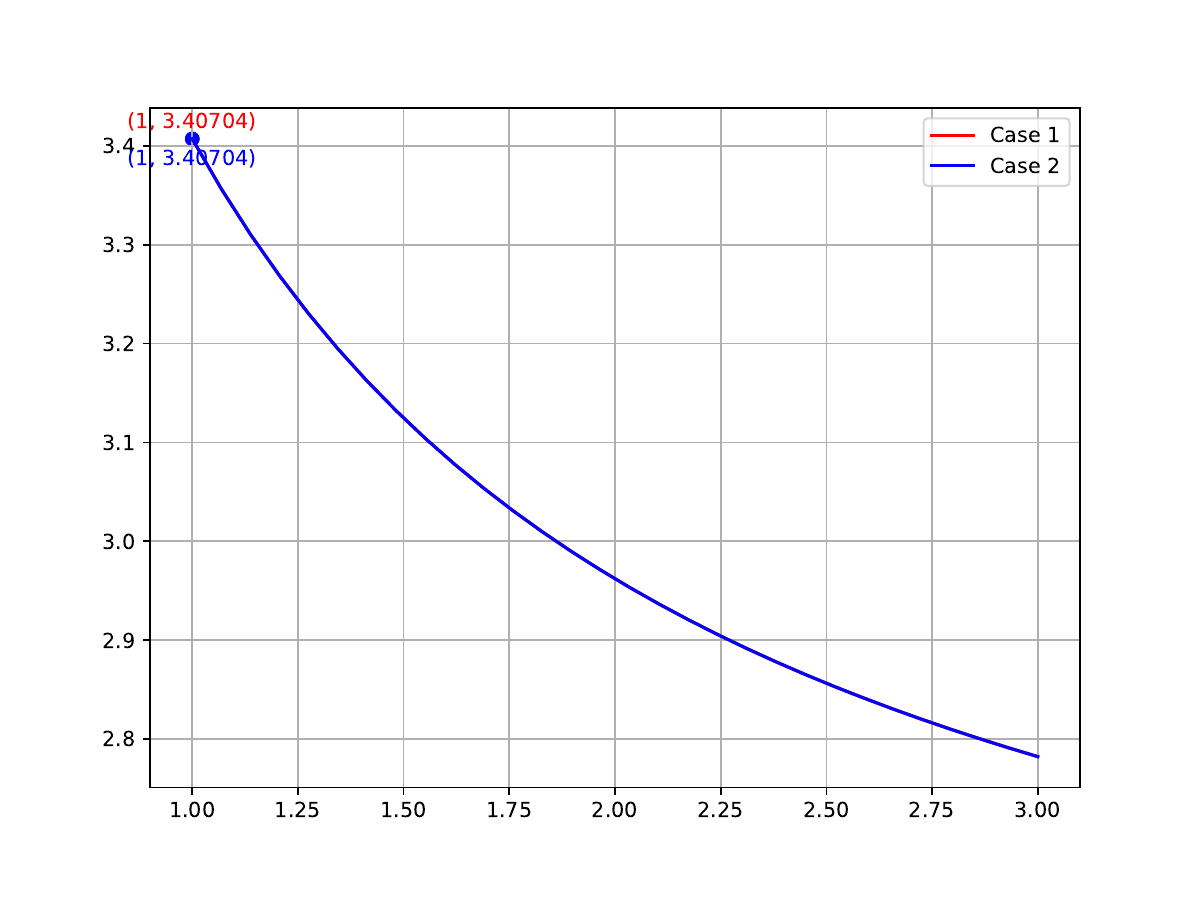}
    \caption{The values of the linear programs under $\sigma\in[1,3]$, where the red line denotes the value of the linear program in the first case and the blue line denotes the value of the linear program in the second case.}
    \label{fig4}
\end{figure}

We can see that their values are the same, i.e., $\frac{
\frac{1}{2}(r^1_{N}-r^1_{\frac{1}{3}N})+\gamma\cdot(r^0_{N}-r^0_{\frac{1}{3}N})}{\gamma\cdot\max\lrC{\sigma,\ 1}+0.5}=1$. Moreover, it suggests that when $\gamma\in [1.444,+\infty)$, the approximation ratio of $APPROX.2$ may actually achieve $3.408$ and in the worst case $\sigma=1$.
We remark that increasing $N$ may yield values closer to the approximation ratio. For example, setting $N=3000$, the maximum value obtained from the linear programs under $\sigma\in[1,3]$ achieve $3.404$, which is slightly better than $3.408$.


Similarly, we can use the linear programs to evaluate the approximation ratio of $APPROX.2$ under different values of $\gamma$.
Note that for each choice of $\gamma$, we need to consider the maximum value obtained from the linear programs for all $\delta\geq 1$.
However, as suggested in Figure~\ref{fig4}, in the worst, we may have $\sigma=1$ when $\gamma$ is large. Therefore, we first evaluate the effect of $\sigma$. 
The maximum values obtained from the linear programs under $\sigma\in[1,3]$ for each $\gamma\in\{0.4,0.5,...,2\}$ are shown in Figure~\ref{fig5}.

\begin{figure}[ht]
    \centering
    \includegraphics[scale=0.6]{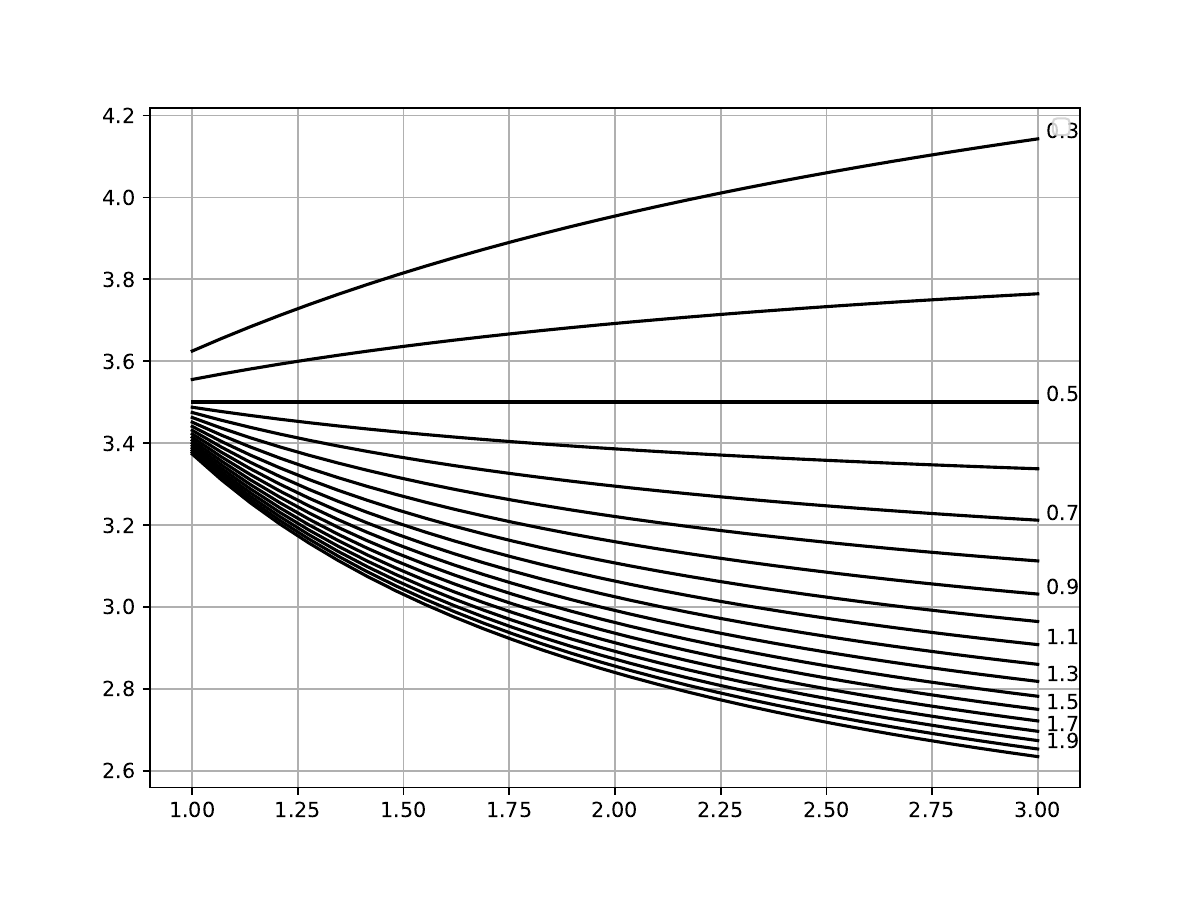}
    \caption{The maximum values obtained from the linear programs under $\sigma\in[1,3]$ for each $\gamma\in\{0.4,0.5,...,2\}$.}
    \label{fig5}
\end{figure}

This suggests that in the worst case of $APPROX.2$ we may have $\sigma=1$ for $\gamma\geq 0.6$ and $\sigma=\infty$ for $\gamma\leq 0.4$. Therefore, we may simply consider $\sigma=1$ when using the linear programs for $\gamma\geq 0.6$.

Based on the above observation, we evaluate the approximation ratio of $APPROX.2$ under different values of $\gamma\geq 0.6$ by simply considering $\sigma=1$. 
The results are shown in Figure~\ref{fig6}.

\begin{figure}[ht]
    \centering
    \includegraphics[scale=0.6]{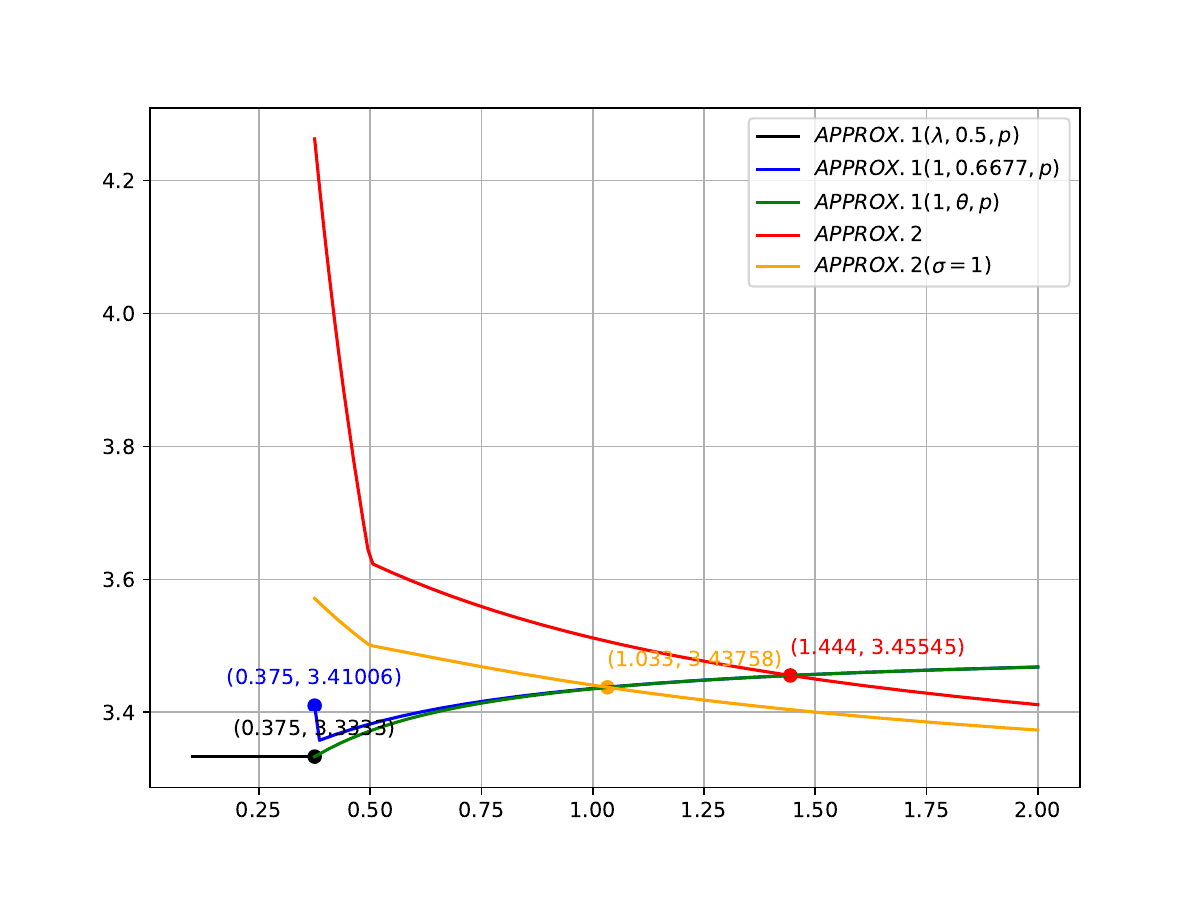}
    \caption{The approximation ratios of $APPROX.1(\lambda,\theta,p)$ and $APPROX.2$ under $\gamma\in(0,2)$, where the orange line denotes the maximum values obtained from the linear programs.}
    \label{fig6}
\end{figure}

Therefore, by using $APPROX.1$ when $\gamma\leq 1.033$ and $APPROX.2$ otherwise, we may obtain a randomized $3.438$-approximation algorithm for unsplittable Cu-VRPSD.

\bibliographystyle{abbrvnat}
\bibliography{zmain}

\end{document}